\newtheorem{theorem}{Theorem}[section]
\newtheorem{lemma}[theorem]{Lemma}
\newtheorem{definition}[theorem]{Definition}
\newtheorem{remark}[theorem]{Remark}
\newtheorem{corollary}[theorem]{Corollary}
\newtheorem{proposition}[theorem]{Proposition}
\newcommand{\minusre}{\hspace{0.3em}\raisebox{0.3ex}{\sl \tiny /}\hspace{0.3em}}
\newcommand{\minusli}{\hspace{0.3em}\raisebox{0.3ex}{\sl \tiny $\setminus $}\hspace{0.3em}}
\newcommand{\lex}{\,\overrightarrow{\times}\,}
\newcommand{\Ker}{\mbox{\rm Ker}}
\newcommand{\Rad}{\mbox{\rm Rad}}
\newcommand{\Infinit}{\mbox{\rm Infinit}}
\newcommand{\RDP}{\mbox{\rm RDP}}
\begin{document}
\title[Lexicographic Product vs Perfect Pseudo Effect Algebras]{Lexicographic Product vs $\mathbb Q$-perfect and $\mathbb H$-perfect Pseudo Effect Algebras}
\author[A. Dvure\v{c}enskij, M. Kola\v{r}\'\i k]{Anatolij Dvure\v censkij$^{1,2}$ and Miroslav Kola\v{r}\'\i k$^2$}
\date{}%
\maketitle
\begin{center}  \footnote{Keywords: Pseudo effect algebra, po-group, unital po-group, strong unit, Riesz Decomposition Property, lexicographic product, $n$-perfect pseudo effect algebra, $\mathbb Q$-perfect pseudo effect algebra

 AMS classification: 81P15, 03G12, 03B50

The paper has been supported by  Slovak Research and Development Agency under the contract APVV-0178-11, the grant VEGA No. 2/0059/12 SAV, and by
CZ.1.07/2.3.00/20.0051.
 }
Mathematical Institute,  Slovak Academy of Sciences,\\
\v Stef\'anikova 49, SK-814 73 Bratislava, Slovakia\\
$^2$ Depart. Algebra  Geom.,  Palack\'{y} University\\
17. listopadu 1192/12, CZ-771 46 Olomouc, Czech Republic\\

E-mail: {\tt
dvurecen@mat.savba.sk} \qquad {\tt miroslav.kolarik@upol.cz}
\end{center}

\begin{abstract}
We study the Riesz Decomposition  Property types of the lexicographic product of two po-groups. Then we apply them to the study of pseudo effect algebras which can be decomposed to a comparable system of non-void slices indexed by some subgroup of real numbers. Finally, we present their representation by the lexicographic product.
\end{abstract}

\section{Introduction}

Quantum structures or quantum logics appeared in the early Thirties when the foundations of a new physics, called now quantum physics, have been founded. It was recognized that the classical rules of statistics of quantum mechanical measurements do not satisfy the axioms of the Kolmogorov probability theory \cite{Kol}, and therefore, it was necessary to find a more general structures than Boolean algebras. The first quantum structures were Boolean algebras, orthomodular lattices and posets, orthoalgebras, D-posets and effect algebras. Effect algebras were introduced  in \cite{FoBe} as a counter part to D-posets introduced in \cite{KoCh}. D-posets have a primary notion difference of two comparable events while effect algebras have a primary notion addition of two mutually excluding events. A prototypical example of effect algebras is the system $\mathcal E(H)$ of all Hermitian operators of a Hilbert space $H$ which are between the zero operator and the identity. In many important examples, an effect algebra is a unit interval in some Abelian partially ordered group. This is true also for $\mathcal E(H)$ because it is an interval in the po-group $\mathcal B(H)$ of all Hermitian operators of $H.$ In \cite{Rav} it was shown that every effect algebra with the Riesz Decomposition Property (RDP in short) is an interval in an Abelian po-group with strong unit and with interpolation. A special subfamily of interval effect algebras are MV-algebras, \cite{CDM}.

In \cite{DvVe1, DvVe2}, the addition $+$ was not more longer assumed to be commutative and, therefore,  pseudo effect algebras were introduced as a noncommutative generalization of effect algebras. Also for these structures it can happen that they are an interval in a unital po-group not necessarily Abelian. This is allowed by a stronger type of RDP, called RDP$_1,$ which in the commutative case is equivalent to RDP.

Perfect effect algebras were introduced in \cite{Dvu2}. For them it is assumed that every element is  either an infinitesimal or it is a co-infinitesimal, i.e. a negation of an infinitesimal.  In \cite{Dv08}, $n$-perfect GMV-algebras, which are pseudo effect algebras with the so-called RDP$_2$ have been introduced. In addition, $n$-perfect pseudo effect algebras were investigated in \cite{DXY} as pseudo effect algebras which can be decomposed into $n+1$ comparable slices. For strong $n$-perfect algebras it was shown that they can be represented by a lexicographic product of a cyclic po-group $\frac{1}{n}\mathbb Z$ with some directed torsion-free po-group with RDP$_1.$

Therefore, the question when the lexicographic product $\mathbb Z \lex G$ has RDP$_1$ was studied in \cite{DvKr} in more detail, and here we present  new results in this direction.

In this paper, we will also study pseudo effect algebras, $E,$ which have a  decomposition $(E_t: t \in [0,1]\cap \mathbb H)$ of $E$ to comparable slices $E_t$'s, where $\mathbb H$ is a subgroup of $\mathbb R$ such that $1 \in \mathbb H.$ We show that in a particular case $\mathbb H= \mathbb Q,$ the group of rational numbers, we are able to present a so-called strong $\mathbb Q$-perfect pseudo effect algebra as an interval in the lexicographic product $\mathbb Q \lex G$ with some directed torsion-free po-group $G.$ The same will be done for a general subgroup $\mathbb H.$ The paper will enlarge the results from \cite{DXY}, where a special case $\mathbb H=\frac{1}{n}\mathbb Z$ was studied as $n$-perfect pseudo effect algebras.

The paper is organized as follows.  Section 2 is gathering the basic notions and results on pseudo effect algebras and po-groups. In Section 3, we extend the Riesz decomposition properties of the lexicographic product of two po-groups studied originally in \cite{DvKr}. Section 4 is concentrated to the description of $\mathbb H$-decompositions of pseudo effect algebras. Section 5 is devoted to the study of $\mathbb H$-perfect pseudo effect algebras. Section 6 describes some representations of strong $\mathbb Q$-perfect pseudo effect algebras by the lexicographic product of type $\mathbb Q\lex G.$ Section 7 is sacrificed  to the most general case~- to strong $\mathbb H$-perfect pseudo effect algebras. In addition, some open problems are formulated.

\section{Basic Facts on Pseudo Effect Algebras and po-groups} 

According to \cite{DvVe1, DvVe2}, we say that a partial algebraic structure
$(E;+,0,1),$  where $+$ is a partial binary operation and 0 and 1 are constants, is said to be a {\it pseudo effect algebra},  if for all $a,b,c \in E,$ the following hold.
\begin{itemize}
\item[{\rm (PE1)}] $ a+ b$ and $(a+ b)+ c $ exist if and only if $b+ c$ and $a+( b+ c) $ exist, and in this case,
$(a+ b)+ c =a +( b+ c).$

\item[{\rm (PE2)}] There are exactly one  $d\in E $ and exactly one $e\in E$ such
that $a+ d=e + a=1.$

\item[{\rm (PE3)}] If $ a+ b$ exists, there are elements $d, e\in E$ such that
$a+ b=d+ a=b+ e.$

\item[{\rm (PE4)}] If $ a+ 1$ or $ 1+ a$ exists,  then $a=0.$ 
\end{itemize}

In every pseudo effect algebra we can derive a partial order, $\le, $ defined as follows:
$a \le b$ if and only if there exists an element $c\in
E$ such that $a+c =b.$ Then  $0$ and $1$ is the least and greatest element of $E.$ In addition,
$a \le b$ if and only if $b = a+c = d+a$ for some $c,d \in E$. We
write $c = a \minusre b$ and $d = b \minusli a.$ Then

$$ (b \minusli a) + a = a + (a \minusre b) = b,
$$
and we write $a^- = 1 \minusli a$ and $a^\sim = a\minusre 1$ for the {\it left} and {\it right negation}, respectively, of any $a \in E.$

We recall that if $+$ is commutative, $E$ is said to
be an {\it effect algebra}; for a comprehensive survey on effect
algebras we recommend  e.g. \cite{DvPu}.

A pseudo effect algebra $E$ is said to be {\it symmetric} if $a^-=a^\sim$ for each $a \in E.$ We note that it can happen that $E$ is symmetric but $+$ is not commutative, see e.g. the proof of Proposition \ref{pr:4.1} below.

A non-empty subset $I$ of a pseudo effect algebra $E$ is said to be an {\it ideal} if (i) $a,b\in I,$ $a+b\in E,$ then $a+b \in I,$ and (ii) if $a\le b \in I,$ then $a \in I.$ An ideal $I\ne E$ is {\it maximal} if it is not a proper subset of any ideal $J \ne E.$

A mapping $h$ from a pseudo effect algebra $E$ into another one $F$ is said to be a {\it homomorphism} if (i) $h(1)=1,$ and (ii) if $a+b$ is defined in $E,$ so is defined $h(a)+h(b)$ and $h(a+b)= h(a)+h(b).$

Let $a$ be an element of a pseudo effect algebra $E$ and $n\ge 0$ be an integer. We define
$$
0a:=0,\quad 1a:=a, \quad na:= (n-1)a +a, \ n \ge 2,
$$
supposing $(n-1)a$ and $(n-1)a+a$ are defined in $E.$ An element $a$ is {\it infinitesimal} if $na$ exists in $E$ for each $n\ge 1.$  We denote by $\Infinit(E)$ the set of all infinitesimal elements of $E.$

A {\it state} on $E$ is any mapping $s: E \to [0,1]$ such that (i) $s(1)=1,$ and (ii) $s(a+b)=s(a)+s(b)$ whenever $a+b$ is defined in $E.$ Let $\mathcal S(E)$ be the set of states on $E.$ It can happen that $\mathcal S(E)$ is empty. A state $s$ is {\it extremal} if from $s=\lambda s_1+(1-\lambda) s_2$ for some $s_1,s_2 \in \mathcal S(E)$ and $0<\lambda <1$ one follows $s_1 = s_2=s.$ 

An ideal $I$ is {\it normal} if $x+I:=\{x+i: i \in I\} = I+x:= \{j+x: j \in I\}$ for any $x \in E.$ For example, if $s$ is a state on $E,$ then
$$
\Ker(s):=\{x \in E: s(x)=0\}
$$
is a normal ideal on $E.$

If $A,B$ are two nonempty subsets of $E,$ $A+B:=\{a+b: a \in A, b \in B, a+b \in E\}.$ We say that $A+B$ is {\it defined} in $E$ if $a+b$ is defined in $E$ for each $a \in A$ and each $b \in B.$  Similarly, we write $A \leqslant B$ if $a\le b$ for each $a \in A$ and each $b \in B.$

We recall that a {\it po-group} (= partially ordered group) is a
group $(G;+,0)$ endowed with a partial order $\le$ such that if $a\le b,$ $a,b
\in G,$ then $x+a+y \le x+b+y$ for all $x,y \in G.$  We denote by
$G^+=\{g \in G: g \ge 0\}$ the {\it positive cone} of $G.$ If, in addition, $G$
is a lattice under $\le$, we call it an $\ell$-group (= lattice
ordered group).  An element $u\in G^+$ is said to be a {\it strong unit}
(= order unit) if given $g \in G,$ there is an integer $n\ge 1$ such
that $g \le nu,$ and the couple $(G,u)$ with a fixed strong unit $u$ is
said to be a {\it unital po-group.}

The set $C(G):=\{c\in G: c+g=g+c, \forall g \in G\}$ is said to be the {\it commutative center} of $G.$

We say   a po-group $G$ is {\it directed} if, given $g_1,g_2 \in G,$ there is $g \in G$ such that $g_1,g_2 \le g.$ This is equivalent to the statement: given $h_1,h_2 \in G,$ there is $h \in G$ such that $h\le h_1,h_2.$ A subgroup $H$ of a po-group $G$ is {\it convex} if $a,b\in H$ and $c \in G$ with $a\le c \le b$ imply $c \in H.$

An important class of pseudo effect algebras can be obtained as follows. If $(G,u)$ is a unital (not necessary Abelian) po-group
with strong unit $u$, and
$$
\Gamma(G,u) := \{g \in G: \ 0 \le g \le u\},\eqno(2.1)
$$
then $(\Gamma(G,u); +,0,u)$ is a pseudo effect algebra if we
restrict the group addition $+$ to $\Gamma(G,u).$  Every pseudo
effect algebra $E$ that is isomorphic to some $\Gamma(G,u)$ is said
to be an {\it interval pseudo effect algebra}.

For basic properties of pseudo effect algebras see \cite{DvVe1} and
\cite{DvVe2}.

The following kinds of the Riesz Decomposition properties were introduced in \cite{DvVe1,DvVe2} and are crucial for the study of pseudo effect algebras.

We say that a  po-group $G$ satisfies

\begin{enumerate}
\item[(i)]
the {\it Riesz Interpolation Property} (RIP for short) if, for $a_1,a_2, b_1,b_2\in G,$  $a_1,a_2 \le b_1,b_2$  implies there exists an element $c\in G$ such that $a_1,a_2 \le c \le b_1,b_2;$

\item[(ii)]
\RDP$_0$  if, for $a,b,c \in G^+,$ $a \le b+c$, there exist $b_1,c_1 \in G^+,$ such that $b_1\le b,$ $c_1 \le c$ and $a = b_1 +c_1;$

\item[(iii)]
\RDP\  if, for all $a_1,a_2,b_1,b_2 \in G^+$ such that $a_1 + a_2 = b_1+b_2,$ there are four elements $c_{11},c_{12},c_{21},c_{22}\in G^+$ such that $a_1 = c_{11}+c_{12},$ $a_2= c_{21}+c_{22},$ $b_1= c_{11} + c_{21}$ and $b_2= c_{12}+c_{22};$

\item[(iv)]
\RDP$_1$  if, for all $a_1,a_2,b_1,b_2 \in G^+$ such that $a_1 + a_2 = b_1+b_2,$ there are four elements $c_{11},c_{12},c_{21},c_{22}\in G^+$ such that $a_1 = c_{11}+c_{12},$ $a_2= c_{21}+c_{22},$ $b_1= c_{11} + c_{21}$ and $b_2= c_{12}+c_{22}$, and $0\le x\le c_{12}$ and $0\le y \le c_{21}$ imply  $x+y=y+x;$

\item[(v)]
\RDP$_2$  if, for all $a_1,a_2,b_1,b_2 \in G^+$ such that $a_1 + a_2 = b_1+b_2,$ there are four elements $c_{11},c_{12},c_{21},c_{22}\in G^+$ such that $a_1 = c_{11}+c_{12},$ $a_2= c_{21}+c_{22},$ $b_1= c_{11} + c_{21}$ and $b_2= c_{12}+c_{22}$, and $c_{12}\wedge c_{21}=0;$

\item[(vi)]
the {\it Strong Riesz Interpolation Property} ({\rm SRIP} for short)  if, for $a_1,a_2, b_1, b_2 \in G^+$ with $a_1,a_2 < b_1, b_2,$ there is $c \in G$ such that $a_1,a_2 < c < b_1,b_2.$
\end{enumerate}

If, for $a,b \in G^+,$ we have for all $0\le x \le a$ and $0\le y\le b,$ $x+y=y+x,$ we denote this property by $a\, \mbox{\rm \bf com}\, b.$

For Abelian po-groups, RDP, RDP$_1,$ RDP$_0$ and RIP are equivalent.

We recall that the po-groups of reals numbers $\mathbb R$ and rational numbers $\mathbb Q$ satisfy SRIP, but the group of integers $\mathbb Z$ does not. In addition, let $\mathbb H$ be any subgroup of $\mathbb R.$  Due to \cite[Lem 4.21]{Goo}, $\mathbb H$ is either cyclic, i.e. it is isomorphic to some $\frac{1}{n}\mathbb Z$ or is dense in $\mathbb R.$ Also in the latter case, SRIP holds for $\mathbb H.$

A po-group $G$ satisfies \RDP$_2$ iff $G$ is an $\ell$-group, \cite[Prop 4.2(ii)]{DvVe1}.

The RDP will be denoted by the following table:

$$
\begin{matrix}
a_1  &\vline & c_{11} & c_{12}\\
a_{2} &\vline & c_{21} & c_{22}\\
  \hline     &\vline      &b_{1} & b_{2}
\end{matrix}\ \ .
$$

We say that a pseudo effect algebra $E$ satisfies the above types of the Riesz decomposition properties, if in the definition of RDP's, we change $G^+$ to $E.$

The basic result of pseudo effect algebras is the following representation theorem \cite[Thm 7.2]{DvVe2}:

\begin{theorem}\label{th:2.1'}
For every pseudo effect algebra with \RDP$_1,$ there is a unique (up to isomorphism of unital po-groups) unital po-group $(G,u)$ with \RDP$_1$\ such that $E \cong \Gamma(G,u).$
\end{theorem}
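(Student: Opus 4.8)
The plan is to realize $E$ as the unit interval of its \emph{universal group}, following the strategy that in the commutative case produces the representation of \RDP\ effect algebras as intervals in interpolation po-groups. First I would construct, for the given pseudo effect algebra $E$, a group $G$ together with an additive map $\phi\colon E\to G$ (meaning $\phi(0)=0$ and $\phi(a+b)=\phi(a)+\phi(b)$ whenever $a+b$ is defined in $E$) enjoying the universal property: for every group $K$ and every additive map $\psi\colon E\to K$ there is a unique group homomorphism $f\colon G\to K$ with $\psi=f\circ\phi$. Such a $G$ always exists; one may present it by generators $[a]$, $a\in E$, together with the relations $[0]=0$ and $[a]+[b]=[a+b]$ whenever $a+b$ is defined, or, more usefully for the order, realize its positive part through equivalence classes of finite sequences of elements of $E$ under the refinement moves furnished by \RDP, with concatenation as addition.

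Next I would equip $G$ with an order by taking $G^+$ to be the submonoid generated by $\phi(E)$ and declaring $g\le h$ iff $h-g\in G^+$, and I would set $u:=\phi(1)$. The substance of the theorem is then to show, using \RDP$_1$, that this order is a genuine partial order compatible with the group operation and that $\phi$ identifies $E$ with $\Gamma(G,u)$. Concretely I would establish, in this order: (a) cancellativity of the monoid $G^+$, whence $\le$ is antisymmetric and $(G,u)$ is a directed po-group; (b) injectivity of $\phi$; (c) $\phi(E)=\Gamma(G,u)$, i.e. every element lying between $0$ and $u$ equals $\phi(a)$ for a unique $a\in E$; (d) that $u$ is a strong unit, which is immediate once $G^+$ is generated by $\phi(E)$ with each $\phi(a)\le u$; and (e) that $G$ inherits \RDP$_1$ by lifting \RDP$_1$-decompositions of elements of $G^+$ down to refinements inside $E$. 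Steps (a)--(c) give at once the isomorphism $E\cong\Gamma(G,u)$ of pseudo effect algebras.

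The hard part is (a) together with (c): showing that the refinements used to build $G^+$ compose coherently so that $G^+$ is cancellative, and that no element of $G$ strictly between $0$ and $u$ escapes the image of $E$. This is exactly where the full strength of \RDP$_1$ --- and not merely \RDP\ --- is needed: the extra clause demanding that the ``cross'' components $c_{12}$ and $c_{21}$ of a decomposition commute with everything below them is what lets one rearrange and amalgamate successive refinements in the noncommutative setting, and without it cancellation can fail and the candidate order need not be antisymmetric. I expect most of the work to be the careful bookkeeping of these amalgamations, organized through the tabular \RDP-notation introduced above.

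Finally, uniqueness follows formally from the universal property. If $E\cong\Gamma(H,v)$ for another po-group $(H,v)$ with \RDP$_1$, then composing the isomorphism with the inclusion $\Gamma(H,v)\hookrightarrow H$ gives an additive map $E\to H$, which by universality factors uniquely through $\phi$; one then checks that the induced homomorphism $G\to H$ carries $G^+$ onto $H^+$ and $u$ to $v$, hence is an isomorphism of unital po-groups. This yields the asserted uniqueness up to isomorphism of unital po-groups.
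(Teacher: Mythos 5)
The paper does not prove this theorem itself; it quotes it as \cite[Thm 7.2]{DvVe2}, and your universal-group strategy --- presenting $G$ by generators $[a]$, $a\in E$, with the refinement relations, taking $G^{+}$ to be the submonoid generated by $\phi(E)$, setting $u=\phi(1)$, and invoking the commutation clause of \RDP$_1$ to make successive refinements amalgamate coherently in the noncommutative setting --- is exactly the route taken in that source, including the derivation of uniqueness from the universal property. Be aware, however, that the steps you label (a), (c) and (e) (cancellativity of the refinement monoid, the identification $\phi(E)=\Gamma(G,u)$, and the lifting of \RDP$_1$ to $G$) constitute essentially the entire technical content of the cited proof, so what you have is a correct and faithful plan rather than a proof.
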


We note that a  pseudo effect algebra $E$ satisfies RDP$_2$ iff $E$ is a lattice and if it satisfies RDP$_1$ iff $E$ is in fact a pseudo MV-algebra, \cite[Thm 8.8]{DvVe2}. For them we have that the variety of pseudo MV-algebras is categorically equivalent to the category of unital $\ell$-groups, see \cite{151}.

A {\it state} on a unital po-group $(G,u)$ is any mapping $s:G \to \mathbb R$ such that (i) $s(g)\ge 0$ for $g\in G^+,$ (ii) $s(g_1+g_2)=s(g_1)+s(g_2),$ $g_1,g_2 \in G,$ and (iii) $s(1)=1.$ If a pseudo effect algebra $E=\Gamma(G,u)$ satisfies RDP$_1,$ then every state on $E$ can be extended to a unique state on $(G,u),$ and conversely, the restriction of every state on $(G,u)$ to $[0,u]$ gives a state on $E,$ see \cite{DvVe2}. In addition, an extremal state is extendible only to an extremal state on $(G,u)$ and vice-versa.

\section{Lexicographic product and RDP's}

In this section, we continue with the study of RDP's properties of the lexicographic product which started in \cite{DvKr} and we concentrate to the lexicographic product of two po-groups.

Let $G_1$ and $G_2$ be two po-groups and we define the  direct product group $G_1 \times G_2.$ We define the {\it lexicographic order} $\le$ on  $G_1 \times G_2$ by $(g_1,g_2) \le (h_1,h_2)$ iff either $g_1 < g_2$ or $g_1=g_2$ and $h_1 \le h_2,$ for $(g_1,g_2), (h_1,h_2) \in G_1 \times G_2,$ and $G_1 \lex G_2$ will denote the {\it lexicographic product} of $G_1$ and $G_2$ endowed with the above defined lexicographic order.

\begin{theorem}\label{th:2.1} Let $G_1$ be a linearly ordered po-group and let $G_2$ be a directed po-group with $\RDP.$   Then the po-group $G=G_1 \lex G_2$ satisfies \RDP.
\end{theorem}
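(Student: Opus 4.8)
The plan is to pass to coordinates. Since the group reduct of $G_1\lex G_2$ is the direct product, addition is componentwise, while the lexicographic cone is
$$
P=\{(g_1,g_2): g_1>0\}\cup\{(0,g_2): g_2\ge 0\}.
$$
Write the given data as $a_i=(\alpha_i,x_i)$ and $b_j=(\beta_j,y_j)$ in $P$, so that $\alpha_i,\beta_j\ge 0$ in $G_1$, and the identity $a_1+a_2=b_1+b_2$ splits into $\alpha_1+\alpha_2=\beta_1+\beta_2=:\sigma$ in $G_1$ together with $x_1+x_2=y_1+y_2$ in $G_2$. A refinement of $(a_1+a_2,b_1+b_2)$ by elements $c_{ij}=(\gamma_{ij},z_{ij})$ is then exactly a pair of refinements — one of $\alpha_1+\alpha_2=\beta_1+\beta_2$ in $G_1$ by $(\gamma_{ij})$ and one of $x_1+x_2=y_1+y_2$ in $G_2$ by $(z_{ij})$ — subject to the single coupling condition $c_{ij}\in P$, that is, $\gamma_{ij}>0$, or else $\gamma_{ij}=0$ and $z_{ij}\ge 0$. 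Note that for plain \RDP\ no commutativity condition on the cells is required, so the coupling is purely about signs.

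First I would dispose of the case $\sigma=0$. Then $\alpha_1=\alpha_2=\beta_1=\beta_2=0$, hence $x_i,y_j\ge 0$ by membership in $P$, so $x_1+x_2=y_1+y_2$ is an equality in $G_2^+$; the \RDP\ of $G_2$ yields $(z_{ij})\subseteq G_2^+$, and $c_{ij}=(0,z_{ij})\in P$ solves the problem.

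For $\sigma>0$ I would first build a $G_1$-refinement. As a linearly ordered group $G_1$ is an $\ell$-group, hence satisfies \RDP$_2$ and a fortiori \RDP; so a table $(\gamma_{ij})\subseteq G_1^+$ with row sums $\alpha_1,\alpha_2$ and column sums $\beta_1,\beta_2$ exists, with some freedom in its choice. A cell is \emph{forced to vanish} precisely when its row or column carries a zero margin (if $\alpha_i=0$ the whole $i$-th row is $0$, and likewise for a zero $\beta_j$), or when an atom-type split in the chain $G_1$ prevents a positive margin from being cut into two strictly positive pieces. On every non-vanishing cell the first coordinate is strictly positive, so $c_{ij}\in P$ regardless of $z_{ij}$; the task therefore reduces to producing a $G_2$-refinement of $x_1+x_2=y_1+y_2$ whose entries are $\ge 0$ on the (few) vanishing cells. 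Using the row/column symmetries of the refinement table I would normalise the location of the vanishing cells to a short list of configurations.

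The remaining and central step is to realise such a constrained $G_2$-refinement. Here directedness of $G_2$ supplies the common bounds, and the sign requirements at the vanishing cells translate into a system of inequalities that the (essentially one-parameter) family of refinements of $x_1+x_2=y_1+y_2$ must meet; these are satisfiable because $G_2$, being directed with \RDP, admits the interpolation needed to place the free entry between the resulting lower and upper bounds, after which \RDP\ of $G_2$ splits the genuinely positive sub-sums along each constrained row or column. I expect the main obstacle to be exactly this coupling in the noncommutative $G_2$: one must verify that the lower bounds never exceed the upper bounds and that the chosen entries respect the left/right asymmetry of $+$, so that all four $c_{ij}=(\gamma_{ij},z_{ij})$ simultaneously land in $P$. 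Linearity of $G_1$ keeps the vanishing pattern rigid enough that this consistency can be checked configuration by configuration.
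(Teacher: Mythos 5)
Your overall strategy is the same as the paper's: split into first and second coordinates, note that the only constraint coupling the two refinement tables is that every cell with zero first coordinate must have nonnegative second coordinate, dispose of the all-zero case by \RDP\ of $G_2$, and invoke directedness of $G_2$ for the rest. The gap is that the one genuinely nontrivial step --- actually producing a $G_2$-refinement of $x_1+x_2=y_1+y_2$ that is nonnegative on the prescribed vanishing cells --- is only gestured at, and the mechanism you propose for it does not work as stated. You describe the admissible $G_2$-tables as an ``essentially one-parameter family'' and claim interpolation places the free entry between the induced bounds. That picture is correct only when $G_2$ is Abelian: in the noncommutative case, choosing $z_{11}=t$ and solving the row and column equations for $z_{12}=-t+x_1$, $z_{21}=-t+y_1$, $z_{22}=-z_{21}+x_2$ does \emph{not} automatically satisfy the remaining identity $z_{12}+z_{22}=y_2$, so there is no one-parameter family to interpolate inside, and RIP alone is not the right tool.

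What the paper does instead, and what your sketch is missing, is concrete: by directedness pick a single $d\in G_2$ with $d\le$ all the second coordinates whose sign is unconstrained (e.g.\ $d\le a_1,a_2,b_1,b_2$ in the case where all four first coordinates are strictly positive), rewrite the second-coordinate equation as an identity among elements of $G_2^{+}$, namely $(-d+a_1)+(a_2-d)=(-d+b_1)+(b_2-d)$, apply \RDP\ of $G_2$ \emph{there} to get a table $(e_{ij})\subseteq G_2^{+}$, and then reinsert $d$ only into cells whose first coordinate is strictly positive (e.g.\ $z_{11}=d+e_{11}$, $z_{22}=e_{22}+d$), so that the off-diagonal cells $e_{12},e_{21}\ge 0$ land exactly where the first coordinate may vanish. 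This makes all four cells lie in the positive cone of $G_1\lex G_2$ by direct computation, with no consistency system to solve. Your case analysis of vanishing patterns (one zero row; one zero row and one zero column; all margins positive) is the right bookkeeping, but until the shifted-\RDP\ construction above (or an equivalent) is written out for each configuration, the proof is not complete.
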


\begin{proof}
Let $G_2$ be a directed po-group with RDP.
The positive cone  $(G_1 \lex G_2)^+$ is of the form
$\{(0,a): a \in G_2^+\} \cup \{(n,a): n \in G_1^+\setminus \{0\},\ a \in G_2\}.$
Assume that
$$
(m_1,a_1)+(m_2,a_2) = (n_1,b_1) + (n_2,b_2)$$
holds in $(G_1 \lex G_2)^+.$

(i) Let $(0,a_1)+(0,a_2) = (0,b_1)+ (0,b_2).$  Then $a_1,a_2,b_1,b_2 \in G_2^+$ and RDP for this case  follows from RDP for $G_2.$

(ii) Let  $m_1+m_2 >0$ and $m_2=0.$ For the subcase $n_2 >0,$  we use the same ideas as  case (i) of \cite[Prop 3.3, Thm 3.10]{DvKr}, and we have the following decomposition table

$$
\begin{matrix}
(m_1,a_{1})&\vline  &  (n_1,b_1) & (n_2,-b_1+a_1)\\
(0,a_{2}) &\vline  & (0,0) & (0,a_2) \\
\hline &\vline &(n_1,b_{1}) & (n_2,b_{2})
\end{matrix} \ \ \ \mbox{for}\ n_2 >0.
$$

For the subcase $n_2 = 0,$ we have $a_2,b_2 \ge 0.$ 
As $G_2$ is directed, there exists $d \in G_2$ with $d \le a_1,b_1.$  Then there exist $e_{11},e_{12},e_{21},e_{22} \in G_2^+$ such that

$$
\begin{matrix}
-d+ a_{1} &\vline &  e_{11} & e_{12}\\
a_{2} &\vline  & e_{21} & e_{22} \\
       \hline     &\vline    &-d+ b_{1} & b_{2}
\end{matrix}\ \ ,
$$
which yields

$$
\begin{matrix}
a_{1} &\vline & d+ e_{11} & e_{12}\\
a_{2} &\vline  & e_{21} & e_{22} \\
\hline &\vline  &b_{1} & b_{2}
\end{matrix}\ \ ,
$$
and finally

$$
\begin{matrix}
(m_1,a_{1}) &\vline  &  (n_1,d+e_{11}) & (0,e_{12})\\
(0,a_{2}) &\vline   & (0,e_{21}) & (0,e_{22}) \\
\hline &\vline &(n_1,b_{1}) & (0,b_{2})
\end{matrix} \ \ \ \mbox{for}\ n_2 =0.
$$

The case $(m_1,a_1)+(0,a_2)=(0,b_1)+(n_2,b_2)$ follows from the first case of (ii).

In a similar way we deal with the case $(0,a_1)+(m_2,a_2) = (n_1,b_1)+ (n_2,b_2),$ where  $n_1 \ge 1.$ Then $m_2 \ge 1,$  $a_1 \ge 0,$ and we use the decomposition

$$
\begin{matrix}
(0,a_{1}) &\vline  &  (0,a_1) & (0,0)\\
(m_2,a_{2}) &\vline  & (n_1,-a_1+b_1) & (n_2,b_2) \\
\hline &\vline &(n_1,b_{1}) & (n_2,b_{2})
\end{matrix}\ \ .
$$

(iii) Let $m_1,m_2,n_1,n_2> 0.$ We use the tables

$$
\begin{matrix}
-d+a_{1} &\vline &  e_{11} & e_{12}\\
a_{2}-d &\vline  & e_{21} & e_{22} \\
\hline &\vline &-d+ b_{1} & b_{2}-d
\end{matrix}\ \ ,
$$
where $d \le a_1,a_2,b_1,b_2,$ and then
$$
\begin{matrix}
(m_1,a_{1}) &\vline &  (n_1,d+e_{11}) & (-n_1+m_1,e_{12})\\
(m_2,a_{2}) &\vline  & (0,e_{21}) & (m_2,e_{22}+d) \\
\hline &\vline &(n_1,b_{1}) & (n_2,b_{2})
\end{matrix} \ \ \ \mbox{for}\  m_1 \ge n_1,
$$
where $d \le a_1,a_2,b_1,b_2,$ and finally
$$
\begin{matrix}
(m_1,a_{1}) &\vline &  (m_1,d+e_{11}) & (0,e_{12})\\
(m_2,a_{2}) &\vline  & (-m_1+n_1,e_{21}) & (n_2,e_{22}+d) \\
\hline &\vline &(n_1,b_{1}) & (n_2,b_{2})
\end{matrix}\ \ \ \mbox{for}\  n_1 \ge m_1,
$$
where $d \le a_1,a_2,b_1, b_2.$ The second case follows also from the first one, if we take into account that $(n_1,b_{1}) + (n_2,b_{2})= (m_1,a_1)+(m_2,a_2).$
\end{proof}

\begin{theorem}\label{th:2.2} Let $G_1$ be a linearly ordered po-group and let $G_2$ be an Abelian directed po-group with $\RDP.$   Then the po-group $G_1 \lex G_2$ satisfies \RDP$_1.$
\end{theorem}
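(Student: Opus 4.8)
The plan is to reduce $\RDP_1$ to the $\RDP$ already established in Theorem \ref{th:2.1}. Since $\RDP_1$ is $\RDP$ together with the extra commutativity clause ``$0\le x\le c_{12}$ and $0\le y\le c_{21}$ imply $x+y=y+x$'', I would first invoke Theorem \ref{th:2.1} to obtain, for any given equation $(m_1,a_1)+(m_2,a_2)=(n_1,b_1)+(n_2,b_2)$ in $(G_1\lex G_2)^+$, a decomposition matrix with entries $c_{11},c_{12},c_{21},c_{22}\in(G_1\lex G_2)^+$. The whole work then lies in verifying the \com-condition for the two off-diagonal entries $c_{12}$ and $c_{21}$, and if it fails, in re-choosing the decomposition so that it holds. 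The key structural observation I would exploit is that $G_2$ is Abelian: in $G=G_1\lex G_2$ the commutative center contains everything of the form $(0,a)$, and more generally two positive elements $(p,x),(q,y)$ commute precisely when their first coordinates force no obstruction. In fact, in a lexicographic product $G_1\lex G_2$ with $G_2$ Abelian, any element with strictly positive first coordinate already commutes with every element in a suitably controlled way, because the second-coordinate arithmetic is commutative and the first coordinate lives in the linearly ordered (hence, in the relevant computations, well-behaved) group $G_1$.

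Concretely, I would go case by case following exactly the partition of Theorem \ref{th:2.1}: the all-zero-first-coordinate case (i), the mixed cases (ii), and the all-positive case (iii). In case (i) every $c_{ij}$ has the form $(0,\cdot)$ with second coordinate in $G_2^+$; since $G_2$ is Abelian, any $x,y$ with $0\le x\le c_{12}$ and $0\le y\le c_{21}$ are themselves of the form $(0,\cdot)$ and hence commute automatically, so $\RDP_1$ is immediate here. The substantive cases are (ii) and (iii), where some $c_{ij}$ carry a nonzero first coordinate. Here I would inspect the explicit tables produced in the proof of Theorem \ref{th:2.1} and identify which of $c_{12},c_{21}$ sits in the subgroup $\{(0,a):a\in G_2\}$ and which may have a positive first component. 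The plan is to argue that in each table at least the relevant off-diagonal entry lies in $\{0\}\times G_2$, so that the elements $x,y$ dominated by them commute by Abelianness of $G_2$; where both off-diagonal entries could a priori carry positive first coordinates, I would use the freedom in choosing the ``shift'' element $d$ and the directedness of $G_2$ to rearrange the table (for instance by pushing the first-coordinate mass onto a diagonal entry $c_{11}$ or $c_{22}$) so that $c_{12}$ or $c_{21}$ is forced into $\{0\}\times G_2^+$.

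The main obstacle I anticipate is precisely the all-positive case (iii): there the decomposition tables distribute the first-coordinate weights $m_1,m_2,n_1,n_2$ among the four cells, and I must ensure the final arrangement places the nonzero first-coordinate contributions on the main diagonal rather than on $c_{12}$ and $c_{21}$. The delicate point is to show this is always achievable simultaneously with the second-coordinate $\RDP$ decomposition inherited from $G_2$; I expect that the linear order on $G_1$ is exactly what makes this possible, since it lets me compare $m_1$ with $n_1$ (as in the split $m_1\ge n_1$ versus $n_1\ge m_1$ of Theorem \ref{th:2.1}) and steer the positive first-coordinate into a diagonal cell in each branch. Once the off-diagonal cells $c_{12},c_{21}$ are confined to $\{0\}\times G_2$, the required commutativity $x+y=y+x$ for $0\le x\le c_{12}$, $0\le y\le c_{21}$ follows from the assumption that $G_2$ is Abelian, completing the verification of $\RDP_1$.
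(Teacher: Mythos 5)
Your proposal follows essentially the same route as the paper's proof: reuse the RDP decomposition tables from Theorem \ref{th:2.1} and verify the extra commutation clause on the off-diagonal entries $c_{12},c_{21}$, exploiting that $G_2$ is Abelian. The only refinement worth noting is that no rearrangement of the tables is actually needed: in every table at least one of $c_{12},c_{21}$ already has first coordinate $0$, so every element below it is of the form $(0,x_2)$ and lies in the commutative center of $G_1\lex G_2$ because $G_2$ is Abelian --- and having just \emph{one} of the two off-diagonal entries in $\{0\}\times G_2$ suffices (indeed both cannot have zero first coordinate when $m_1\ne n_1$), so your closing requirement that both be so confined is stronger than necessary and not attainable in general.
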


\begin{proof}
We use the same decompositions as in the proof of Theorem \ref{th:2.1}. It is enough to verify RDP$_1$ for
(iii) of Theorem \ref{th:2.1}. Assume that $(0,0)\le (x_1,x_2)\le (0,e_{21})$ and $(0,0)\le (y_1,y_2) \le (-n_1+m_1,e_{12}),$ then $x_1=0$ and evidently, $(x_1,x_2)+(y_1,y_2)=(y_1,y_2)+(x_1,x_2).$
\end{proof}


The following result has been proved in \cite[Thm 3.7]{DvKr}.

\begin{theorem}\label{th:2.3} A po-group   $\mathbb{Z} \lex G$  satisfies $\RDP_1$ if and only if $G$ is a directed po-group with  $\RDP_1.$

In addition, there is a categorical equivalence of the category of pseudo effect algebras with \RDP$_1$ with the category of unital po-groups $(G,u)$ with \RDP$_1.$
\end{theorem}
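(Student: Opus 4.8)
The plan is to prove the two implications separately and to obtain the categorical equivalence from the representation Theorem~\ref{th:2.1'}. For the forward implication---if $G$ is directed with \RDP$_1$ then $\mathbb Z\lex G$ has \RDP$_1$---I would first note that, since $\mathbb Z$ is linearly ordered and $G$ is directed with \RDP\ (it has \RDP$_1$), Theorem~\ref{th:2.1} already gives that $\mathbb Z\lex G$ satisfies \RDP; what remains is to produce, for each identity $(m_1,a_1)+(m_2,a_2)=(n_1,b_1)+(n_2,b_2)$ in $(\mathbb Z\lex G)^+$, a refinement that in addition meets the commutativity clause. The key observation is that two elements of $\mathbb Z\lex G$ commute if and only if their $G$-coordinates commute, so the clause $c_{12}\,\com\,c_{21}$ can fail only when one of $c_{12},c_{21}$ carries a nonzero $G$-coordinate at a strictly positive $\mathbb Z$-level, since then its $G$-coordinate ranges over all of $G$. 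The whole point is therefore to choose refinements that avoid this.

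I would run the argument by comparing $m_1$ and $n_1$, which is legitimate because $\mathbb Z$ is linearly ordered. If $m_1>n_1$, take $c_{11}=(n_1,b_1)$, $c_{12}=(m_1-n_1,-b_1+a_1)$, $c_{21}=(0,0)$, $c_{22}=(m_2,a_2)$; a direct check shows these lie in $(\mathbb Z\lex G)^+$ and satisfy the four refinement equations, and because $c_{21}=0$ the commutativity clause holds vacuously. The mirror choice with $c_{12}=0$ settles $m_1<n_1$. If $m_1=n_1$ (hence $m_2=n_2$), I keep both off-diagonal entries at level $0$: choosing $d\le a_1,a_2,b_1,b_2$ by directedness and applying \RDP$_1$ of $G$ to $(-d+a_1)+(a_2-d)=(-d+b_1)+(b_2-d)$ yields $e_{ij}\in G^+$ with $e_{12}\,\com\,e_{21}$, and the table of case (iii) of Theorem~\ref{th:2.1} then has $c_{12}=(0,e_{12})$ and $c_{21}=(0,e_{21})$; any $0\le(0,s)\le c_{12}$ and $0\le(0,t)\le c_{21}$ force $s,t\in G^+$ with $s\le e_{12}$, $t\le e_{21}$, so $s+t=t+s$ and the two elements commute. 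The remaining degenerate tables (some $m_i$ or $n_j$ zero) are exactly cases (i) and (ii) of Theorem~\ref{th:2.1}: there one of $c_{12},c_{21}$ is already $0$, or both already lie at level $0$ where, using \RDP$_1$ instead of \RDP\ of $G$ for the inner refinement, the clause again follows. This case analysis is the main obstacle, since it is precisely here that the noncommutativity carried by the lexicographic order must be reconciled with the commutativity requirement of \RDP$_1$.

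For the converse, assume $\mathbb Z\lex G$ has \RDP$_1$. The subgroup $\{0\}\times G$ is convex in $\mathbb Z\lex G$ and order-isomorphic to $G$; since in any refinement each $c_{ij}$ satisfies $0\le c_{ij}\le$ one of the summands, all four entries stay inside a convex subgroup, and the relation $\com$ is intrinsic, so \RDP$_1$ is inherited and $G$ has \RDP$_1$. For directedness I would apply \RDP\ to $(1,g)+(1,-g)=(1,0)+(1,0)$: the admissible $\mathbb Z$-refinements of $1+1=1+1$ force one off-diagonal entry to level $0$, and reading off the $G$-coordinates exhibits a positive element of $G$ dominating $g$; hence $G=G^+-G^+$ is directed.

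Finally, the categorical equivalence follows formally from Theorem~\ref{th:2.1'}: the functor $\Gamma(\cdot,\cdot)$ carries unital po-groups with \RDP$_1$ to pseudo effect algebras with \RDP$_1$ and is essentially surjective and injective on objects up to isomorphism by that theorem, while every homomorphism of pseudo effect algebras with \RDP$_1$ extends uniquely to a unital po-group homomorphism of the representing groups by their universal property, which yields fullness and faithfulness.
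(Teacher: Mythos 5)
You should first be aware that the paper itself does not prove Theorem~\ref{th:2.3}: it is imported verbatim from \cite[Thm 3.7]{DvKr}, so there is no internal proof to compare against. Your strategy is nonetheless the natural one and matches the paper's neighbouring arguments (the decomposition tables of Theorems~\ref{th:2.1} and~\ref{th:2.2}, sharpened to meet the commutation clause). The converse direction is correct: $\{0\}\times G$ is indeed convex in $\mathbb Z\lex G$, so \RDP$_1$ passes down to $G$, and your refinement of $(1,g)+(1,-g)=(1,0)+(1,0)$ really does force the off-diagonal $\mathbb Z$-levels into one of the two patterns $(1,0;0,1)$ or $(0,1;1,0)$, each of which yields a positive upper bound for $g$ ($g\le e_{12}$ in the first, $g\le e_{11}$ in the second); hence $G=G^+-G^+$ is directed. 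The categorical equivalence via Theorem~\ref{th:2.1'} and the universal group of \cite[Thm 7.2]{DvVe2} is the standard argument and is fine.

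There is, however, one concrete hole in the forward direction. Your trichotomy on $m_1$ versus $n_1$ works cleanly when $m_1\ne n_1$ (one off-diagonal entry is $(0,0)$, so the commutation clause is vacuous) and when $m_1=n_1>0$ and $m_2=n_2>0$. But when $m_1=n_1=0<m_2=n_2$, i.e.\ $(0,a_1)+(m_2,a_2)=(0,b_1)+(m_2,b_2)$ with $a_1,b_1\in G^+$ and $a_2,b_2\in G$ arbitrary, your recipe breaks: the entries $c_{11},c_{12},c_{21}$ are forced to sit at $\mathbb Z$-level $0$, so their $G$-coordinates must be genuinely nonnegative, yet the shift by $d\le a_1,a_2,b_1,b_2$ produces $c_{11}=(0,d+e_{11})$ whose $G$-coordinate need not be positive. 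Your fallback claim that the remaining degenerate tables are exactly cases (i) and (ii) of Theorem~\ref{th:2.1} is false for this configuration: case (i) needs all levels zero and case (ii) needs $m_2=0$ or $n_1\ge 1$. (The printed proof of Theorem~\ref{th:2.1} omits this configuration as well, so you have inherited its gap.) The repair is routine and entirely in the spirit of your other cases: shift only the second row, choosing $p\le a_2,b_2$ by directedness, apply \RDP$_1$ of $G$ to $a_1+(a_2-p)=b_1+(b_2-p)$ in $G^+$, and set $c_{11}=(0,e_{11})$, $c_{12}=(0,e_{12})$, $c_{21}=(0,e_{21})$, $c_{22}=(m_2,e_{22}+p)$; then $e_{12}\,\com\,e_{21}$ in $G$ and the commutation clause follows exactly as in your level-zero analysis. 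With this case added, the proof is complete.
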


Let $n \ge 1$ be an integer. We define $\frac{1}{n} \mathbb Z$ as the $\ell$-group of rational numbers of the form $\{\frac{i}{n}: i \in \mathbb Z\}$ which is an $\ell$-group isomorphic to the group of integers $\mathbb Z.$ The number $1$ is a strong unit for $\frac{1}{n} \mathbb Z.$ 

\begin{proposition}\label{pr:2.4}
Let $n\ge 1$ be an integer and $G$ a po-group. The following statements are equivalent:

\begin{itemize}

\item[{\rm (i)}] The po-group $\frac{1}{n} \mathbb Z \lex G$ satisfies \RDP$_1.$

\item[{\rm (ii)}]  The po-group $G$ is directed and  with \RDP$_1.$

\item[{\rm (iii)}] The po-group $G$ is directed and the pseudo effect algebra $\Gamma(\frac{1}{n} \mathbb Z \lex G,(1,0))$ satisfies \RDP$_1.$
\end{itemize}

\end{proposition}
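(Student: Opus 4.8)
The plan is to reduce everything to Theorem \ref{th:2.3} together with the observation that $\frac1n\mathbb Z$ and $\mathbb Z$ are isomorphic as po-groups. Multiplication by $n$ is an order isomorphism $\frac1n\mathbb Z \to \mathbb Z$, and since the lexicographic order on a product depends on the first coordinate only through its order, the map $(x,g)\mapsto (nx,g)$ is a po-group isomorphism $\frac1n\mathbb Z \lex G \to \mathbb Z \lex G$. As \RDP$_1$ is preserved by po-group isomorphisms, Theorem \ref{th:2.3} yields at once the equivalence (i) $\Leftrightarrow$ (ii): the po-group $\frac1n\mathbb Z \lex G$ has \RDP$_1$ iff $\mathbb Z\lex G$ does iff $G$ is directed with \RDP$_1$. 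I would then close the circle by establishing (i) $\Rightarrow$ (iii) and (iii) $\Rightarrow$ (ii).

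For (i) $\Rightarrow$ (iii), first I would check that $(1,0)$ is a strong unit of $\frac1n\mathbb Z\lex G$: given $(x,g)$, any integer $m>x$ gives $(x,g)<(m,0)=m(1,0)$, so $\Gamma(\frac1n\mathbb Z\lex G,(1,0))$ is a well-defined pseudo effect algebra. The remaining point is the general fact that an interval $\Gamma(H,u)$ inherits \RDP$_1$ from a po-group $H$ with \RDP$_1$: if $a_1+a_2=b_1+b_2$ holds in $E=\Gamma(H,u)$, it holds in $H^+$, the resulting positive $c_{ij}$ satisfy $c_{ij}\le u$ (since $c_{11}+c_{12}=a_1\le u$, and similarly for the other rows), hence lie in $E$, and the comparability/commutativity clause transfers because every $x$ with $0\le x\le c_{12}$ already satisfies $x\le u$. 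Combining this with the directedness of $G$ supplied by (ii) gives (iii).

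The substance of the proposition is the implication (iii) $\Rightarrow$ (ii). Here $G$ is directed by hypothesis, so I only need \RDP$_1$ for $G$. I would embed $G^+$ into the interval via $g\mapsto (0,g)$, which lands in $E=\Gamma(\frac1n\mathbb Z\lex G,(1,0))$ since $(0,0)\le(0,g)\le(1,0)$ for $g\in G^+$. Given $a_1+a_2=b_1+b_2$ in $G^+$, the equation $(0,a_1)+(0,a_2)=(0,b_1)+(0,b_2)$ holds in $E$, and applying \RDP$_1$ of $E$ produces a decomposition by elements $(p_{ij},c_{ij})\in E$. Positivity in the lexicographic order forces $p_{ij}\ge 0$, and since $p_{11}+p_{12}=0$ (and likewise for the remaining sums) all $p_{ij}=0$; thus $c_{ij}\in G^+$ and the four equations read off exactly $a_1=c_{11}+c_{12}$, $a_2=c_{21}+c_{22}$, $b_1=c_{11}+c_{21}$, $b_2=c_{12}+c_{22}$. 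Finally, for $0\le x\le c_{12}$ and $0\le y\le c_{21}$ in $G$, the elements $(0,x),(0,y)$ satisfy $(0,0)\le(0,x)\le(0,c_{12})$ and $(0,0)\le(0,y)\le(0,c_{21})$, so the commutativity clause of \RDP$_1$ in $E$ gives $(0,x)+(0,y)=(0,y)+(0,x)$, i.e. $x+y=y+x$. This is precisely the extra condition of \RDP$_1$ for $G$, so $G$ has \RDP$_1$ and (ii) holds.

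I expect the only delicate point to be this last implication, specifically the need to verify that positivity in the lexicographic product forces the first coordinates of the decomposing elements to vanish, and that the restricted comparability condition of \RDP$_1$ descends correctly from $E$ to $G$; both follow from the description of the positive cone $(G_1\lex G_2)^+$ already used in the proof of Theorem \ref{th:2.1}. Everything else is either the isomorphism $\frac1n\mathbb Z\cong\mathbb Z$ or the routine inheritance of \RDP$_1$ by intervals.
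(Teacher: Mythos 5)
Your proof is correct, and for (i) $\Leftrightarrow$ (ii) it coincides with the paper's: both rescale $\frac{1}{n}\mathbb Z$ to $\mathbb Z$ via $(x,g)\mapsto(nx,g)$ and invoke Theorem \ref{th:2.3}. Where you genuinely diverge is in the return from (iii). The paper proves (iii) $\Rightarrow$ (i): it uses the directedness of $G$ to translate an interpolation problem $g_1,g_2\le h_1,h_2$ into the interval $\Gamma(\frac{1}{n}\mathbb Z\lex G,(1,0))$, concludes that $G$ has RIP, and then cites two auxiliary results from \cite{DvKr} (RIP lifts from $G$ to the lexicographic product, and RIP of the lexicographic product yields its \RDP$_1$). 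You instead prove (iii) $\Rightarrow$ (ii) directly: embed an equation $a_1+a_2=b_1+b_2$ of $G^+$ into the interval via $g\mapsto(0,g)$, pull back the \RDP$_1$ table, and observe that positivity in the lexicographic order together with the vanishing row sums forces all first coordinates $p_{ij}$ to be zero, so the whole table, including the commutativity clause, descends to $G$; the cycle then closes through the already-established (ii) $\Rightarrow$ (i). Your route is more elementary and self-contained for this implication, avoiding the RIP machinery of \cite{DvKr}, whereas the paper's route reuses lemmas it needs anyway for Theorems \ref{th:2.5} and \ref{th:2.8}. Your explicit checks that $(1,0)$ is a strong unit and that $\Gamma(H,u)$ inherits \RDP$_1$ from $H$ are also sound; the paper dismisses (i) $\Rightarrow$ (iii) as trivial, which is essentially that same observation left unstated.
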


\begin{proof} (i) $\Leftrightarrow$ (ii) The po-group $E_n:=\frac{1}{n} \mathbb Z \lex G$ is isomorphic to the po-group  $\mathbb Z \lex G.$ By \cite[Thm 3.7]{DvKr}, $ \mathbb Z \lex G$ satisfies RDP$_1$ iff $G$ is a directed po-group with RDP$_1.$ Consequently, the po-group $\frac{1}{n} \mathbb Z \lex G$ satisfies RDP$_1$ iff $G$ is a directed po-group with RDP$_1.$

(i) $\Rightarrow$ (iii) is trivial.

Finally,  let (iii) hold. Assume $ g_1,g_2 \le h_1,h_2$ for some $g_1,g_2,h_1,h_2 \in G.$ Directness of $G$ entails the existence of $g_0\in G$ such that $g_0 \le g_1,g_2.$  Then $0\le g_1-g_0,g_2-g_0 \le h_1-g_0,h_2-g_0$ which yields
$(0,g_1-g_0),(0,g_2-g_0) \le (0,h_1-g_0),(0,h_2-g_0) \in E_n$ and RDP$_1$ of $E_n$ implies the existence of $g \in G^+$ such that $(0,g_1-g_0),(0,g_2-g_0) \le (0,g)\le (0,h_1-g_0),(0,h_2-g_0).$  Hence, $g_1-g_0, g_2-g_0 \le g \le h_1-g_0, h_2 - g_0$ and finally, $g_1,g_2 \le  g+g_0 \le h_1, h_2$ which implies $G$ satisfies RIP. By
\cite[Prop 3.1(i)]{DvKr}, $\frac{1}{n}\mathbb Z\lex G$ satisfies RIP,  and by \cite[Thm 3.6]{DvKr}, $\frac{1}{n}\mathbb Z\lex G$ satisfies RDP$_1.$
\end{proof}

\begin{theorem}\label{th:2.5}
Let $G$ be a po-group. Define the following statements:

\begin{itemize}

\item[{\rm (i)}] The po-group $ \mathbb Q \lex G$ satisfies \RDP$_1.$

\item[{\rm (ii)}]  The po-group $G$ is    with \RDP$_1.$

\item[{\rm (ii')}]  The po-group $G$ is directed and  with \RDP$_1.$

\item[{\rm (iii)}] The pseudo effect algebra $E:=\Gamma( \mathbb Q \lex G,(1,0))$ satisfies \RDP$_1.$

\end{itemize}
Then {\rm (i)} implies {\rm (ii)}, and {\rm (i)} and {\rm (iii)} are equivalent. If $G$ is directed, then {\rm (i)} and {\rm (ii')} are equivalent, too.
\end{theorem}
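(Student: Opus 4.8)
The plan is to prove the three implications (i)$\Rightarrow$(ii), the equivalence (i)$\Leftrightarrow$(iii), and the equivalence (i)$\Leftrightarrow$(ii$'$) under the directedness hypothesis, drawing on the structure already established in Proposition \ref{pr:2.4} and the characterization theorems for $\mathbb{Z}\lex G$.

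\emph{Implication} (i)$\Rightarrow$(ii). First I would observe that RDP$_1$ is inherited by the second factor. Suppose $a_1+a_2=b_1+b_2$ holds in $G^+$. Then the equality $(0,a_1)+(0,a_2)=(0,b_1)+(0,b_2)$ holds in $(\mathbb{Q}\lex G)^+$, and applying RDP$_1$ there produces a decomposition matrix. The key point is that any refining element dominated by a positive element of the form $(0,c)$ must have vanishing first coordinate, so the four refinement entries all live in $\{0\}\times G^+$ and descend to a genuine RDP$_1$ decomposition of $a_1+a_2=b_1+b_2$ in $G^+$, with the commutativity condition transferring directly. This shows $G$ satisfies RDP$_1$. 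Note that, unlike the $\frac{1}{n}\mathbb{Z}$ case, I do \emph{not} claim directedness of $G$ follows, which is why (ii) omits that requirement.

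\emph{Equivalence} (i)$\Leftrightarrow$(iii). The implication (i)$\Rightarrow$(iii) is immediate since $E$ is the unit interval $\Gamma(\mathbb{Q}\lex G,(1,0))$ and RDP$_1$ on a unital po-group restricts to RDP$_1$ on its interval algebra. For the converse (iii)$\Rightarrow$(i) I would argue exactly as in the final paragraph of Proposition \ref{pr:2.4}: using the density and SRIP of $\mathbb{Q}$ one lifts RDP$_1$ of the interval $E$ to RIP of the whole lexicographic product. The mechanism is that any interpolation problem in $\mathbb{Q}\lex G$ can, after a suitable translation into the unit interval (shifting first coordinates into $[0,1]\cap\mathbb{Q}$ and adjusting the $G$-components), be solved inside $E$; SRIP of $\mathbb{Q}$ guarantees the first coordinate of an interpolant can be chosen strictly between the given bounds when they differ, and the argument of Proposition \ref{pr:2.4} handles the $G$-component. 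Having obtained RIP, I would invoke the cited results \cite[Prop 3.1(i), Thm 3.6]{DvKr} to upgrade RIP to RDP$_1$ for $\mathbb{Q}\lex G$.

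\emph{Equivalence} (i)$\Leftrightarrow$(ii$'$) when $G$ is directed. Given directedness, (ii$'$) is just (ii) with the explicit directedness clause, so (i)$\Rightarrow$(ii$'$) follows from the first implication together with the standing hypothesis. For (ii$'$)$\Rightarrow$(i) I would reduce to Theorem \ref{th:2.3}: since $\mathbb{Q}$ is not cyclic I cannot directly cite the $\mathbb{Z}\lex G$ result, but I can exploit that a directed $G$ with RDP$_1$ yields, via Proposition \ref{pr:2.4}, RDP$_1$ for every $\frac{1}{n}\mathbb{Z}\lex G$, and $\mathbb{Q}=\bigcup_n \frac{1}{n}\mathbb{Z}$ as a directed union. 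The main obstacle I anticipate is precisely this passage to the limit: one must verify that a decomposition problem $a_1+a_2=b_1+b_2$ in $(\mathbb{Q}\lex G)^+$ has all first coordinates lying in a single $\frac{1}{n}\mathbb{Z}$ for suitable $n$, solve it there by Proposition \ref{pr:2.4}, and check the resulting refinement (with its commutativity side-condition) is valid back in $\mathbb{Q}\lex G$. Since only finitely many rationals occur, a common denominator $n$ exists, and the embedding $\frac{1}{n}\mathbb{Z}\lex G\hookrightarrow\mathbb{Q}\lex G$ is an order-embedding preserving the positive cone, so the decomposition transfers intact. This colimit/common-denominator argument is the technically delicate step, but it is routine once one confirms the embeddings respect both the group operation and the commutativity requirement of RDP$_1$.
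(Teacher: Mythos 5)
Your proposal follows essentially the same route as the paper: (i)$\Rightarrow$(ii) by restricting the decomposition to elements with zero first coordinate (the paper simply cites \cite[Prop 3.2(2)]{DvKr} for this), (iii)$\Rightarrow$(i) by passing through RIP and invoking \cite[Thm 3.6]{DvKr}, and (ii$'$)$\Rightarrow$(i) by the common-denominator reduction to $\frac{1}{n}\mathbb Z\lex G$ via Proposition \ref{pr:2.4}, exactly as in the paper's proof. One caveat on the step you yourself flag as delicate: the commutativity side-condition of \RDP$_1$ quantifies over \emph{all} elements of the ambient positive cone below $c_{12}$ and $c_{21}$, so an order-embedding $\frac{1}{n}\mathbb Z\lex G\hookrightarrow\mathbb Q\lex G$ does not by itself transfer an \RDP$_1$ decomposition intact (there are more elements below $(t,e)$ with $t>0$ in $\mathbb Q\lex G$); the transfer really relies on the specific shape of the decompositions, in which $c_{12}$ or $c_{21}$ has first coordinate $0$ --- a point the paper glosses over in the same way.
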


\begin{proof}
(i) $\Rightarrow$ (ii). By \cite[Prop 3.2 (2)]{DvKr}, $G$ is a po-group with RDP$_1.$

(i) $\Rightarrow$ (iii). It is trivial.

(iii) $\Rightarrow$ (i). Since $G$ satisfies also RIP, by \cite[Prop 3.2 (1)]{DvKr}, $\mathbb Q \lex G$ satisfies RIP. Applying \cite[Thm 3.6]{DvKr}, $\mathbb Q\lex G$ satisfies RDP$_1.$

Now let $G$ be directed. Due to the implication (i) $\Rightarrow$ (ii), (i) implies (ii'). Conversely, let $G$ satisfy RDP$_1,$ and let for $(r_1,g_1),(r_2,g_2), (q_1,h_1), (q_2,h_2) \in (\mathbb Q \lex G)^+,$ we have $(r_1,g_1)+(r_2,g_2)= (q_1,h_1)+(q_2,h_2).$ There is an integer $n \ge 2$ such that $r_1 =\frac{i_1}{n},$ $r_2 = \frac{i_2}{n},$ $q_1 =\frac{j_1}{n}$ and $q_2 = \frac{j_2}{n}.$ Applying Proposition \ref{pr:2.4}(i), $\frac{1}{n}\mathbb Z \lex G$ satisfies RDP$_1.$ Hence, we can found the necessary RDP$_1$ decomposition for $(\frac{i_1}{n},g_1)+(\frac{i_2}{n},g_2)= (\frac{j_1}{n},h_1)+(\frac{j_2}{n},h_2),$ which implies that $\mathbb Q \lex G$ satisfies RDP$_1$
\end{proof}

We note that in contrast to (ii) of Proposition \ref{pr:2.4}, $G$ is not necessarily directed because $\mathbb Q$ satisfies SRIP.


\vspace{3mm}

The following result can be proved in the same way as Theorem \ref{th:2.5}.

\begin{theorem}\label{th:2.6}
Let $G$ be a po-group. Define the following statements:

\begin{itemize}

\item[{\rm (i)}] The po-group $ \mathbb R\lex G$ satisfies \RDP$_1.$

\item[{\rm (ii)}]  The po-group $G$ is    with \RDP$_1.$

\item[{\rm (iii)}] The pseudo effect algebra $E:=\Gamma( \mathbb R\lex G,(1,0))$ satisfies \RDP$_1.$

\end{itemize}
Then {\rm (i)} and {\rm (iii)} are equivalent, and {\rm (i)} implies {\rm (ii)}.
\end{theorem}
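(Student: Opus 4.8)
The plan is to follow the proof of Theorem~\ref{th:2.5} line by line, replacing $\mathbb Q$ by $\mathbb R$ throughout. The only structural feature of $\mathbb Q$ that drove that argument is that $\mathbb Q$ satisfies SRIP; since $\mathbb R$ also satisfies SRIP (as recorded in Section~2), every step transfers. Only three implications are claimed here, namely (i)$\Rightarrow$(ii), (i)$\Rightarrow$(iii), and (iii)$\Rightarrow$(i), so I would not reproduce the directed analysis (the (ii$'$) part) of Theorem~\ref{th:2.5}: that part relied on writing each rational coordinate as $\frac{i}{n}$ with a common denominator and invoking Proposition~\ref{pr:2.4}, a device that is both unavailable and unneeded for $\mathbb R$.

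For (i)$\Rightarrow$(ii) I would simply cite \cite[Prop 3.2 (2)]{DvKr}, which yields that $G$ inherits \RDP$_1$ from $\mathbb R\lex G$. The implication (i)$\Rightarrow$(iii) is immediate, since $E=\Gamma(\mathbb R\lex G,(1,0))$ is an interval of $\mathbb R\lex G$ and the interval construction preserves \RDP$_1$ by the very definition of the Riesz decomposition properties for pseudo effect algebras (one merely restricts $G^+$ to $E$).

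The substantive implication is (iii)$\Rightarrow$(i). First I would extract from \RDP$_1$ of $E$ the corresponding property of $G$ by working inside the embedded copy $\{(0,g): g\in G^+\}\subseteq E$: given $a_1+a_2=b_1+b_2$ in $G^+$, the equality $(0,a_1)+(0,a_2)=(0,b_1)+(0,b_2)$ holds in $E$, and in any \RDP$_1$ refinement $c_{kl}=(s_{kl},t_{kl})\ge(0,0)$ all first coordinates must satisfy $s_{kl}=0$ (they are nonnegative while every row and column sum of first coordinates vanishes), so the second coordinates $t_{kl}$ furnish an \RDP$_1$ decomposition in $G$; in particular, as in the matching step of Theorem~\ref{th:2.5}, $G$ satisfies \RIP. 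Then \cite[Prop 3.2 (1)]{DvKr} upgrades \RIP of $G$ to \RIP of $\mathbb R\lex G$, and \cite[Thm 3.6]{DvKr} converts \RIP of $\mathbb R\lex G$ into \RDP$_1$, giving (i).

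The main obstacle I anticipate is not in the logical skeleton but in confirming that the two cited lemmas of \cite{DvKr} are genuinely available with $\mathbb R$ as the first factor rather than $\mathbb Q$; both should apply because their proofs use only that the first factor is linearly ordered and satisfies SRIP, which $\mathbb R$ does. The subtle point worth stating explicitly is that, exactly as in the remark following Theorem~\ref{th:2.5}, directedness of $G$ is \emph{not} required here precisely because SRIP of $\mathbb R$ supplies the needed interpolant in the first coordinate; this is what separates the $\mathbb R$ (and $\mathbb Q$) situation from the $\mathbb Z$ case treated in Theorem~\ref{th:2.3}.
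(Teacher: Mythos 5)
Your proposal is correct and follows essentially the same route as the paper, which proves Theorem~\ref{th:2.6} simply by declaring that it ``can be proved in the same way as Theorem~\ref{th:2.5}'': you cite the same two facts from \cite{DvKr} (Prop 3.2(2) for (i)$\Rightarrow$(ii), and Prop 3.2(1) plus Thm 3.6 for (iii)$\Rightarrow$(i)), and you correctly observe that the (ii$'$) clause of Theorem~\ref{th:2.5} is not part of the present statement. Your explicit extraction of \RDP$_1$ for $G$ from the zero-slice $\{(0,g):g\in G^+\}$ of $E$ is a detail the paper leaves implicit in the corresponding step of Theorem~\ref{th:2.5}, and it is a welcome addition rather than a deviation.
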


Now we show when (ii) in Theorem \ref{th:2.6} implies (i).


\begin{theorem}\label{th:2.7}  Let $G$ be a directed po-group. The following statements are equivalent:
\begin{itemize}

\item[{\rm (i)}] The po-group $ \mathbb R\lex G$ satisfies \RDP$_1.$

\item[{\rm (ii)}]  The po-group $G$ is    with \RDP$_1.$

\item[{\rm (iii)}] The pseudo effect algebra $E:=\Gamma( \mathbb R\lex G,(1,0))$ satisfies \RDP$_1.$

\end{itemize}
\end{theorem}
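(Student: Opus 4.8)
The plan is to prove the three-way equivalence by establishing the cycle of implications, exploiting that Theorem~\ref{th:2.6} already gives us most of the work for free. Since $G$ is now assumed directed, I would first observe that Theorem~\ref{th:2.6} immediately supplies the equivalence of (i) and (iii) and the implication (i) $\Rightarrow$ (ii) with no extra hypothesis; these carry over verbatim. Thus the only genuinely new content is the implication (ii) $\Rightarrow$ (i), which closes the cycle and requires directedness of $G.$ I would state this reduction explicitly at the outset so that the reader sees that the burden of proof rests on a single arrow.

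For the implication (ii) $\Rightarrow$ (i), I would mimic the final paragraph of the proof of Theorem~\ref{th:2.5} almost word for word, replacing $\mathbb Q$ by $\mathbb R.$ Concretely, suppose $G$ is directed and satisfies \RDP$_1,$ and take an identity
$$
(r_1,g_1)+(r_2,g_2)=(q_1,h_1)+(q_2,h_2)
$$
in $(\mathbb R \lex G)^+.$ The key maneuver is that only finitely many real first-coordinates $r_1,r_2,q_1,q_2$ appear, so I would pass to a finitely generated subgroup of $\mathbb R$ that contains all of them. By \cite[Lem 4.21]{Goo}, every subgroup of $\mathbb R$ is either cyclic or dense; here a finitely generated subgroup of $\mathbb R$ is free abelian of finite rank, but since all four numbers lie in a single cyclic (rank-one) subgroup whenever they are rationally commensurable, the cleaner route is to choose a common refinement. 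The honest statement is that any finite subset of $\mathbb R$ generates a subgroup isomorphic to $\frac{1}{n}\mathbb Z$ only when the elements are commensurable, so I would instead argue that the RDP$_1$ decomposition produced inside $\mathbb R \lex G$ depends only on the order relations among the first coordinates and on a \RDP\ decomposition in $G.$

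The cleanest execution, therefore, is to invoke Theorem~\ref{th:2.5} directly at the level of order structure: since $\mathbb R$ is linearly ordered and satisfies SRIP just as $\mathbb Q$ does, the decomposition tables constructed in the proof of Theorem~\ref{th:2.1} and Theorem~\ref{th:2.2} apply verbatim with $G_1=\mathbb R,$ because those tables used only that $G_1$ is linearly ordered together with the directedness and \RDP\ (resp.\ Abelian \RDP) of $G_2.$ Hence, given the data above, I would produce the required four elements $c_{11},c_{12},c_{21},c_{22}$ of $(\mathbb R \lex G)^+$ by reading off the appropriate table according to the sign pattern of the first coordinates, and verify the commutativity condition on $c_{12}$ and $c_{21}$ exactly as in Theorem~\ref{th:2.2}. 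This shows $\mathbb R \lex G$ satisfies \RDP$_1,$ giving (ii) $\Rightarrow$ (i) and closing the equivalence.

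The main obstacle I anticipate is the temptation to reduce to the rational case via a cyclic subgroup $\frac{1}{n}\mathbb Z,$ as was done in Theorem~\ref{th:2.5} through Proposition~\ref{pr:2.4}; this reduction works for $\mathbb Q$ because every finite set of rationals lies in a single $\frac{1}{n}\mathbb Z,$ but it \emph{fails} for $\mathbb R$ since four generic reals generate a subgroup of rank greater than one that is not order-isomorphic to $\mathbb Z.$ The resolution is precisely the observation in the previous paragraph: one never needs the reduction, because the explicit decomposition tables of Theorem~\ref{th:2.1} already work for an arbitrary linearly ordered $G_1,$ and $\mathbb R$ is linearly ordered. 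Making this point carefully is the crux of adapting the argument, and it is exactly why the theorem is phrased as ``proved in the same way.''
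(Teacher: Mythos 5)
Your reduction of the theorem to the single implication (ii) $\Rightarrow$ (i) is correct and matches the paper. The execution of that implication, however, has a genuine gap. You propose to read off the decomposition tables of Theorems \ref{th:2.1} and \ref{th:2.2} with $G_1=\mathbb R$ and to ``verify the commutativity condition on $c_{12}$ and $c_{21}$ exactly as in Theorem \ref{th:2.2}.'' But Theorem \ref{th:2.2} assumes $G_2$ is \emph{Abelian}: the sentence ``evidently $(x_1,x_2)+(y_1,y_2)=(y_1,y_2)+(x_1,x_2)$'' in its proof is evident only because the second coordinates live in an Abelian group. In Theorem \ref{th:2.7} the group $G$ is merely directed with \RDP$_1$. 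The tables of Theorem \ref{th:2.1} use only \RDP\ of $G$ and produce, in case (iii), entries such as $c_{12}=(-n_1+m_1,e_{12})$ and $c_{21}=(0,e_{21})$; when $-n_1+m_1>0$, an element $(x_1,x_2)$ with $0<x_1<-n_1+m_1$ and $x_2$ an \emph{arbitrary} element of $G$ sits below $c_{12}$, so the {\bf com} condition would force every element of $[0,e_{21}]$ to commute with all of $G$. Nothing in \RDP$_1$ of $G$ guarantees that the $e_{ij}$ produced by a bare \RDP\ decomposition have this property, so your argument does not establish \RDP$_1$ for non-Abelian $G$.

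You also discard the reduction to the rational case as unnecessary, but that reduction (in a refined form) is exactly how the paper closes this gap. The paper's proof first disposes of the cases where some first coordinate is zero using the tables of Theorem \ref{th:2.1} (where the com condition does hold), so one may assume $s_1,s_2,t_1,t_2>0$. It then uses density of $\mathbb Q$ in $\mathbb R$ to choose rationals $0<r_i\le s_i$ and $0<q_i\le t_i$ with $r_1+r_2=q_1+q_2$, obtains an \RDP$_1$ table for $(r_1,a_1)+(r_2,a_2)=(q_1,b_1)+(q_2,b_2)$ in $\mathbb Q\lex G$ via Theorem \ref{th:2.5} (which handles non-Abelian $G$ through Proposition \ref{pr:2.4} and the $\mathbb Z\lex G$ result of \cite{DvKr}), decomposes the nonnegative real remainders $s_i-r_i$, $t_i-q_i$ by \RDP\ in $\mathbb R$, and adds the two tables. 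The commutativity of the resulting $c_{12}$ and $c_{21}$ is then verified by squeezing a positive rational under any positive first coordinate and invoking the \RDP$_1$ property of the rational table. This ``approximate from below by rationals and patch'' step, together with the four-case commutation check, is the essential content missing from your proposal.
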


\begin{proof}
According to Theorem \ref{th:2.6}, it is necessary to show that (ii) implies (i).  Assume that $(s_1,a_1)+(s_2,a_2)= (t_1,b_1) + (t_2,b_2)$ holds in $(\mathbb R \lex G)^+.$

If we use the RDP decomposition tables from the proof of Theorem \ref{th:2.1} which were used there for two first cases, we see that for theses particular cases we have RDP$_1$ decompositions. According to that proof, we can assume that all $s_1, s_2, t_1, t_2$ are strictly positive. Since $\mathbb Q$ is dense in $\mathbb R$, we can found two rational numbers $r_1$ and $r_2$ in $\mathbb Q^+$ such that $0<r_i <s_i$ if $s_i$ is irrational, otherwise, $r_i = s_i.$ We can choose also two additional rational numbers $q_1,q_2 \in \mathbb Q$ such that $q_1+q_2 =r_1+r_2$ and $0<q_i< t_i$ if $t_i>0$ otherwise $q_i = t_i.$ 
By Theorem \ref{th:2.5}, $ \mathbb Q  \lex G$ satisfies RDP$_1,$ therefore, we can found the following RDP$_1$ decomposition table

$$
\begin{matrix}
(r_1,a_{1}) &\vline &  (r_{11},e_{11}) & (r_{12},e_{12})\\
(r_2,a_{2}) &\vline  & (r_{21},e_{21}) & (r_{22},e_{22}) \\
\hline &\vline &(q_1,b_{1}) & (q_2,b_{2})
\end{matrix}\ \ ,
$$
where $(r_{ij},e_{ij})  \in (\mathbb Q \lex G)^+$ for $i,j =1,2.$  Without loss of generality, we can assume that all $r_{ij}$'s are strictly positive.

Since $(s_1-r_1)+(s_2-r_2)=(t_1-q_1)+(t_2-q_2)$ and all summands are nonnegative,  using RDP holding in $\mathbb R$, we have the RDP decomposition table

$$
\begin{matrix}
(s_1 -r_1) &\vline &  s_{11} & s_{12}\\
(s_2 - r_2) &\vline  & s_{21} & s_{22} \\
\hline &\vline &(t_1- q_{1}) & (t_2- q_{2})
\end{matrix}\ \ ,
$$
where $s_{11},s_{12},s_{21},s_{22} \in [0,\infty).$ Then we have the following RDP decomposition

$$
\begin{matrix}
(s_1,a_{1}) &\vline &  (r_{11} +s_{11},e_{11}) & (r_{12}+s_{12},e_{12})\\
(s_2,a_{2}) &\vline  & (r_{21}+s_{21},e_{21}) & (r_{22}+s_{22},e_{22}) \\
\hline &\vline &(t_1,b_{1}) & (t_2,b_{2})
\end{matrix}\ \ .
$$
Finally, we have to show that   $(r_{12}+s_{12},e_{12})    \,\mbox{\rm \bf com}\, (r_{21}+s_{21},e_{21}).$ Choose $(0,0)\le (x_1,x_2) \le (r_{12}+s_{12},e_{12})$ and $(0,0)\le (y_1,y_2) \le (r_{21}+s_{21},e_{21}).$

We have the following four cases.

(i) $x_1 >0,$ $y_1>0.$ Choose two rational numbers $r'$ and $r''$ such that $0<r'<x_1$ and $0<r''<y_1.$ Then $(0,0)< (r',x_2)< (x_1,x_2),$ $(0,0) < (r'',y_2) < (y_1,y_2),$ and $(0,0)\le (r',x_2) < (r_{12},e_{12}),$ $(0,0)\le (r'',y_2)< (r_{21}, e_{21}).$ By the RDP$_1$ table used above and holding in $\mathbb Q \lex G,$   $(r',x_2)$ and $(r'',y_2)$ commute, so that $x_2$ and $y_2$ commute, which yields $ (x_1,x_2) + (y_1,y_2) = (y_1,y_2)+ (x_1,x_2).$

(ii) $x_1 =0,$ $y_1=0.$ Then $(0,0) \le (0,x_2) < (r_{12}+s_{12},e_{12})$ which entails $(0,x_2) < (r_{12},e_{12})$ and similarly, $(0,0)\le (0,y_2) < (r_{21},e_{21}).$  Therefore, we have $(0,x_2)$ and $(0,y_2)$ commute, so that $x_2$ and $y_2$ commute, too, and finally $(x_1,x_2)$ and $(y_1,y_2)$ commute.

(iii) $x_1 >0,$ $y_1=0.$  Choose a rational number $r >0$ such $0<r<x_1.$ Then $(0,0)\le (r,x_2) \le (x_1,x_2) \le (r_{12}+s_{12},e_{12}).$ Then $(0,0) \le (r,x_2)< (r_{12},e_{12}).$ Similarly as in the case (ii), we have $(0,0)\le (0,y_2) < (r_{21},e_{21}).$ Therefore, $(x,x_2)$ and $(0,y_2)$ commute, which entails $x_2$ and $y_2$ also commute, and finally $(x_1,x_2)$ and $(y_1,y_2)$ commute.

(iv) $x_1 =0,$ $y_1>0.$ It follows the same steps as the proof of the case (iii).

Combining all cases (i)--(iv), we see that RDP$_1$ holds as was claimed.
\end{proof}

The last two theorems can be generalized as follows.

\begin{theorem}\label{th:2.8}  Let $G$ be a directed po-group and $\mathbb H$ be a subgroup of the group $\mathbb R$ such that $1\in \mathbb H.$ The following statements are equivalent:
\begin{itemize}

\item[{\rm (i)}] The po-group $ \mathbb H\lex G$ satisfies \RDP$_1.$

\item[{\rm (ii)}]  The po-group $G$ is    with \RDP$_1.$

\item[{\rm (iii)}] The pseudo effect algebra $E:=\Gamma( \mathbb H\lex G,(1,0))$ satisfies \RDP$_1.$

\end{itemize}
\end{theorem}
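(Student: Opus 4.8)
The plan is to imitate the structure of the proof of Theorem \ref{th:2.7}, since that theorem is precisely the special case $\mathbb H=\mathbb R$. By Theorem \ref{th:2.6}-style reasoning (the same argument proving the $\mathbb R$ case applies verbatim to any subgroup $\mathbb H$ containing $1$: the decomposition tables of Theorem \ref{th:2.1} only use that $G_1$ is linearly ordered, and $\mathbb H$ is a linearly ordered group), the implications (i)$\Rightarrow$(ii), (i)$\Leftrightarrow$(iii) are immediate. The only substantive task is to prove (ii)$\Rightarrow$(i): assuming $G$ is directed with \RDP$_1$, show $\mathbb H\lex G$ has \RDP$_1$. So I would open by recording these easy directions and then reduce to the heart of the matter, a single \RDP$_1$-decomposition in $(\mathbb H\lex G)^+$.

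The main case split is on the structure of $\mathbb H$, invoked via the dichotomy cited in the excerpt (\cite[Lem 4.21]{Goo}): either $\mathbb H\cong\frac1n\mathbb Z$ is cyclic, or $\mathbb H$ is dense in $\mathbb R$. If $\mathbb H$ is cyclic, say $\mathbb H=\frac1n\mathbb Z$, then (i) is exactly Proposition \ref{pr:2.4}(i), so nothing further is needed. If $\mathbb H$ is dense, I would run the \emph{same} argument as in Theorem \ref{th:2.7}, with $\mathbb R$ replaced throughout by $\mathbb H$ and with $\mathbb Q$ replaced by $\mathbb H$ itself as the ``dense rational approximation'' source. Concretely: given $(s_1,a_1)+(s_2,a_2)=(t_1,b_1)+(t_2,b_2)$ in $(\mathbb H\lex G)^+$, reduce by the Theorem \ref{th:2.1} tables to the case where all first coordinates $s_1,s_2,t_1,t_2$ are strictly positive elements of $\mathbb H$. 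Then use density of $\mathbb H$ to pick $r_1,r_2,q_1,q_2\in\mathbb H^+$ with $0<r_i\le s_i$, $0<q_i\le t_i$ and $q_1+q_2=r_1+r_2$; since every element of $\mathbb H$ with a bounded set of these finitely many values lies in some common cyclic subgroup $\frac1m\mathbb Z\subseteq\mathbb H$, Proposition \ref{pr:2.4}(i) furnishes an \RDP$_1$-table for $(r_1,a_1)+(r_2,a_2)=(q_1,b_1)+(q_2,b_2)$. The residual first-coordinate equation $(s_1-r_1)+(s_2-r_2)=(t_1-q_1)+(t_2-q_2)$ lives in $\mathbb H^+\subseteq\mathbb R^+$; here I would apply \RDP\ of $\mathbb H$ (which holds because $\mathbb H$ dense in $\mathbb R$ satisfies SRIP, hence RIP, hence RDP for an Abelian linearly ordered group) to split it, and add the two tables coordinatewise exactly as in Theorem \ref{th:2.7}.

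The remaining step is to verify the commutation condition of \RDP$_1$ on the off-diagonal entries $c_{12}$ and $c_{21}$ of the summed table. I would reproduce the four-case analysis (i)--(iv) of Theorem \ref{th:2.7}, split according to whether the first coordinates $x_1,y_1$ of the bounding elements are zero or strictly positive. When a first coordinate is strictly positive, density of $\mathbb H$ lets me interpolate a strictly smaller positive element of $\mathbb H$, pushing the element strictly below the corresponding entry so that the \RDP$_1$-commutation inherited from the $\mathbb H$-table (equivalently the $\frac1m\mathbb Z$-table) applies; when a first coordinate is zero, the element sits in the slice $\{0\}\times G$ and commutation reduces to commutation in $G$. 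In each case one concludes that $x_2$ and $y_2$ commute in $G$, whence $(x_1,x_2)$ and $(y_1,y_2)$ commute in $\mathbb H\lex G$.

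The step I expect to be the main obstacle is the bookkeeping that lets me descend to a common cyclic subgroup: one must check that the finitely many \emph{chosen} second-order rationals $r_i,q_i,r_{ij}$ all lie in a single $\frac1m\mathbb Z\subseteq\mathbb H$ so that Proposition \ref{pr:2.4}(i) is legitimately applicable, and that the \RDP\ split of the first coordinates in $\mathbb H$ produces values again in $\mathbb H$ (not merely in $\mathbb R$). Density of $\mathbb H$ and the fact that $\mathbb H$ is a \emph{subgroup} closed under the relevant sums and differences handle both points, but this is precisely where the generalization from $\mathbb R$ and $\mathbb Q$ to an arbitrary $\mathbb H$ requires care rather than a verbatim copy. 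Everything else is routine transcription of the $\mathbb R$-proof.
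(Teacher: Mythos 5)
Your overall architecture matches the paper's proof exactly: the easy implications are disposed of by the same citations that handle Theorems \ref{th:2.5}--\ref{th:2.7}, and (ii) $\Rightarrow$ (i) is split via Goodearl's dichotomy into the cyclic case (Proposition \ref{pr:2.4}) and the dense case (an adaptation of Theorem \ref{th:2.7}). The paper itself dispatches the dense case by saying one can ``literally'' rerun the proof of Theorem \ref{th:2.7} with $\mathbb R$ replaced by $\mathbb H$, so you correctly identified where the real work lies and that the substitution needs care.

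However, the one step you flag as delicate is resolved incorrectly. You claim that the finitely many chosen approximants $r_1,r_2,q_1,q_2\in\mathbb H^+$ ``lie in some common cyclic subgroup $\frac1m\mathbb Z\subseteq\mathbb H$,'' so that Proposition \ref{pr:2.4}(i) applies. This fails on two counts. First, finitely many elements of a dense subgroup of $\mathbb R$ need not generate a cyclic group: for $\mathbb H=\mathbb Z+\sqrt2\,\mathbb Z$ the elements $1$ and $\sqrt2$ lie in no cyclic subgroup at all, since any subgroup containing both is already dense. Second, even when a common cyclic subgroup exists it need not have the form $\frac1m\mathbb Z$, and $\frac1m\mathbb Z\subseteq\mathbb H$ forces $\frac1m\in\mathbb H$, which for a general dense $\mathbb H$ containing $1$ may hold only for $m=1$ (again $\mathbb Z+\sqrt2\,\mathbb Z$ contains no rationals besides the integers); in that case no positive $r_i\le s_i$ with $s_i<1$ can be found in $\frac1m\mathbb Z$. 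The repair is easy and keeps the rest of your argument intact: use density of $\mathbb H$ to fix a single small $h\in\mathbb H$, $h>0$, and choose $r_1,r_2,q_1,q_2$ as positive integer multiples of $h$ with $r_1+r_2=q_1+q_2$; then all four lie in the cyclic subgroup $\mathbb Z h\le\mathbb H$, and $\mathbb Z h\lex G\cong\mathbb Z\lex G$ has \RDP$_1$ by Theorem \ref{th:2.3} (rather than Proposition \ref{pr:2.4}, whose $\frac1n\mathbb Z$ is irrelevant here). The residual first coordinates $s_i-r_i$, $t_i-q_i$ then stay in $\mathbb H^+$ and are split by RDP of the linearly ordered group $\mathbb H$, the summed table has entries in $\mathbb H$, and your four-case commutation check goes through verbatim using density of $\mathbb H$. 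With that substitution your proof is correct and coincides with the paper's.
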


\begin{proof}
(i) $\Rightarrow$ (ii). It follows from \cite[Prop 3.2(2)]{DvKr}.

(i) $\Rightarrow$ (iii). It is trivial.

(iii) $\Rightarrow$ (i). Since $G$ satisfies also RIP, by \cite[Prop 3.2 (1)]{DvKr}, $\mathbb H \lex G$ satisfies RIP. Applying \cite[Thm 3.6]{DvKr}, $\mathbb H\lex G$ satisfies RDP$_1.$

(ii) $\Rightarrow$ (i). By the assumption, $G$ is directed. Due to \cite[Lem 4.21]{Goo}, $\mathbb H$ is either a cyclic subgroup of $\mathbb R$ or a dense subset of $\mathbb R.$

In the first case, $\mathbb H=\frac{1}{n}\mathbb Z$ for some integer $n\ge 1.$ Therefore, we can apply Proposition \ref{pr:2.4}, and whence, $\mathbb H \lex G$ has RDP$_1.$

In the second case, we can literally  apply the proof of the implication (ii) $\Rightarrow$ (i) of Theorem \ref{th:2.7} changing $\mathbb R$ to $\mathbb H.$
\end{proof}

\vspace{2mm}
{\bf Problem 1.}
Since both $\mathbb Q,$  $\mathbb R$ as well as any non-discrete $\mathbb H$  satisfy SRIP, according to \cite[Cor 2.12]{Goo} or \cite[Prop 3.2(1)]{DvKr},
$\mathbb Q \lex G,$  $\mathbb R \lex G$ and $\mathbb H \lex G$ satisfy RDP$_1$ whenever $G$ is an Abelian po-group, not necessarily directed. It is unknown whether is this true also for any po-group $G$ not necessarily Abelian.


\section{$\mathbb H$-decompositions of Pseudo Effect Algebras}

In the rest of the paper we will assume that $\mathbb H$ is a  subgroup of the group of real numbers $\mathbb R$ such that $1 \in \mathbb H$ and we denote $[0,1]_\mathbb H:= [0,1]\cap \mathbb H.$ We define two types of decompositions of a pseudo effect algebra indexed by elements of $[0,1]_\mathbb H.$

We say that a decomposition $(E_t: t \in [0,1]_\mathbb H)$ of a pseudo effect algebras $E,$ i.e. a system $(E_t: t \in [0,1]_\mathbb H)$ of nonempty subsets of $E$ such that  $E_s \cap E_t = \emptyset$ for $s< t,$ $s,t \in [0,1]_\mathbb H$ and $\bigcup_{t \in [0,1]_\mathbb H} E_t = E$ is

\begin{enumerate}

\item[(I)] an $\mathbb H$-{\it decomposition of type I} if
\begin{enumerate}

\item[(Ii)] $E_{s}+ E_{t}$ exists   if $s+t<1,$

\item[(Iii)] $E_{0}$ is a unique maximal ideal of $E;$
\end{enumerate}

\item[(II)] an  $\mathbb H$-{\it decomposition} if

\begin{enumerate}


\item[{\rm (a)}] $E_{t}^{-}=E_{t}^{\sim}=E_{1-t}$ for any $t\in
[0,1]_\mathbb H,$

\item[{\rm (b)}] if $x\in E_{s},$ $y\in E_{t}$ and $x+ y$ exists in
$E$, then $s+t\le 1$ and $x+ y\in E_{s+t}$ for
$s,t \in [0,1]_\mathbb H.$
\end{enumerate}

\end{enumerate}

For example, if $E=\Gamma(\mathbb R \lex G, (1,0)),$ then  $(E_t: t \in [0,1]),$ where $E_0:=\{(0,g): g \in G^+\},$ $E_t =\{(t,g): g \in G\}$ for $0<t<1$ and $E_1 =\{(1,-g): g \in G^+\},$ and $G$ is a po-group, is an $\mathbb R$-decomposition of the pseudo effect algebra $E$ of both types. In fact, we have $E_s+E_t = E_{s+t},$ whenever $s+t <1.$ To show that $E_0$ is a maximal ideal, take $(t,g) \in E\setminus E_0,$ and let $I$ be the ideal of $E$ generated by $E_0$ and $(t,g).$ Then, for any $s<t,$ $E_s \subseteq I.$ In particular, there is an integer $n$ such that $1/n <t.$  Then $(1/n,0) \in I$ but $(1,0)=n(1/n,0) \in I$ proving $I=E.$ In addition, it is easy to verify that $E_0$ is also a normal ideal,  $E_0 = \Infinit(E)$ and $E_t^-=E_{1-t}=E_t^\sim$ for any $t \in [0,1],$ and $E_0$ is a unique maximal ideal of $E.$

We say that a state $s$ on a pseudo effect algebra $E$ is an $\mathbb H$-{\it valued state} if $s(E)=[0,1]_\mathbb H,$ where $[0,1]_\mathbb H :=\Gamma(\mathbb H,1)= [0,1]\cap \mathbb H.$ If $s(E) \subseteq [0,1]_\mathbb H,$ we say that $s$ is an $\mathbb H$-{\it state}.  In particular, if $\mathbb H = \frac{1}{n}\mathbb Z,$  a $\frac{1}{n}\mathbb Z$-valued state is also said to be an $(n+1)$-{\it valued discrete state}, \cite{DXY}.

\begin{theorem}\label{th:3.1}
Let  $E$ be a pseudo effect algebra and $\mathbb H$ be a subgroup of $\mathbb R$ containing $1$. The following two statements
are equivalent:
\begin{enumerate}
\item[{\rm (i)}] There exists an $\mathbb H$-valued state on $E$.

\item[{\rm (ii)}] There exists an $\mathbb H$-decomposition $(E_t: t \in [0,1]_\mathbb H)$ of nonempty  subsets of $E.$
\end{enumerate}
\end{theorem}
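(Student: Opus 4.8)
The plan is to exhibit an explicit two-way correspondence between $\mathbb{H}$-valued states and $\mathbb{H}$-decompositions: from a state $s$ one forms the fibres $E_t := \{x \in E : s(x) = t\}$ for $t \in [0,1]_{\mathbb H}$, and conversely from a decomposition one defines $s(x)$ to be the unique index $t$ with $x \in E_t$. The bulk of the work is checking that each construction lands in the required class, and the two checks are essentially dual.

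For (i) $\Rightarrow$ (ii), assume $s$ is $\mathbb{H}$-valued and put $E_t = s^{-1}(t)$. Surjectivity of $s$ onto $[0,1]_{\mathbb H}$ makes every $E_t$ nonempty, and the fibres of a function are automatically disjoint and cover $E$, so the underlying partition axioms hold. Condition (b) is immediate from additivity: if $x \in E_s$, $y \in E_t$ and $x+y$ exists, then $s(x+y) = s(x)+s(y) = s+t \in [0,1]$, whence $s+t \le 1$ and $x+y \in E_{s+t}$. For (a), the inclusions $E_t^- \subseteq E_{1-t}$ and $E_t^\sim \subseteq E_{1-t}$ drop out of $s(x^-)+s(x) = s(x^-+x) = s(1) = 1$ and the analogous identity for $x^\sim$. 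The reverse inclusions are where I would invoke the pseudo effect algebra identities $(x^-)^\sim = x = (x^\sim)^-$, which follow from the uniqueness clause of (PE2): given $y \in E_{1-t}$, the element $x = y^\sim$ satisfies $x^- = y$ and $s(x) = 1 - s(y) = t$, so $y \in E_t^-$, and dually $x = y^-$ witnesses $y \in E_t^\sim$.

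For (ii) $\Rightarrow$ (i), define $s(x)$ as the unique $t$ with $x \in E_t$; this is well defined because the $E_t$ partition $E$, takes values in $[0,1]_{\mathbb H} \subseteq [0,1]$, and is surjective onto $[0,1]_{\mathbb H}$ because each piece is nonempty, so $s$ will be $\mathbb{H}$-valued as soon as it is shown to be a state. Its additivity is precisely condition (b). The only step requiring a small argument is $s(1) = 1$: since $0+0 = 0$ exists, (b) forces $0 \in E_0$ (if $0 \in E_t$ then $0 = 0+0 \in E_{2t}$, so disjointness gives $2t = t$, i.e. $t = 0$); then (a) at $t = 0$ yields $0^- \in E_0^- = E_1$, and as $0^- = 1$ we conclude $1 \in E_1$, that is $s(1) = 1$.

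The main obstacle is not computational but lies in the two nonformal points: in (i) $\Rightarrow$ (ii), the reverse inclusions in (a), which depend on the double-negation identities $(x^-)^\sim = x = (x^\sim)^-$ extracted from the uniqueness in (PE2); and in (ii) $\Rightarrow$ (i), locating $0 \in E_0$ and hence $1 \in E_1$ so that the normalization $s(1)=1$ holds. Everything else is a direct unwinding of the state axioms against the decomposition axioms.
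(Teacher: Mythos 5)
Your proposal is correct and follows essentially the same route as the paper: fibres of the state give the decomposition, and the indexing function of the decomposition gives the state, with the reverse inclusion in (a) obtained exactly as in the paper via $(y^\sim)^- = y$. The only cosmetic difference is in locating $0\in E_0$ (you use $0=0+0$ and disjointness, the paper uses an element of $E_1$ and $x+0=x$), which changes nothing of substance.
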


\begin{proof}
Let $s$ be an $\mathbb H$-valued state. Given $t \in [0,1]_\mathbb H,$ we define $E_t:= s^{-1}(\{t\}).$ It is evident that,  the system $(E_t: t \in [0,1]_\mathbb H),$ is a decomposition of $E.$ For (a), let $x \in E_t.$ Then $s(x)=t$ and $s(x^-)=1-s(x)=s(x^\sim),$ which proves $x^-, x^\sim \in E_{1-t}.$ Conversely, if $y \in E_{1-t},$ then $y^-,y^\sim \in E_t.$ For (b), assume $x \in E_s$ and $y \in E_t$ and let $x+y$ be defined in $E.$ Then $s(x+y) = s(x)+s(y)=s+t\le 1,$ which implies $x+y \in E_{s+t}.$ Then $(E_t: t \in [0,1]_\mathbb H)$ is an $\mathbb H$-decomposition of $E.$

Conversely, let (ii) hold. Define a mapping $s: E \to [0,1]$ by $s(x)=t$ iff $x \in E_t.$ Take $x,y \in E$ such that $s(x)= t_1$ and $s(y)=t_2.$ Since $x \in E_{t_1}$ and $y \in E_{t_2},$ due to (b), we have $t_1+t_2 \le 1.$ Hence, $s(x+y)=s(x)+s(y).$ There exists a unique $t \in [0,1]_\mathbb H$ such that $0\in E_t.$ For every $x \in E_1,$ $x+0=x=0+x,$ thus by (b), $t+1 \le 1$ which yields $t = 0$ and therefore, $s(0)=0$ and $s(1)=1.$ In other words, $s$ is an $\mathbb H$-valued state.
\end{proof}

We denote by $\mathcal S_\mathbb H(E)$ and $\mathcal D_\mathbb H(E)$ the set of all $\mathbb H$-decompositions and the set of all $\mathbb H$-valued states on $H,$ respectively.

\begin{theorem}\label{th:3.2}
There exists a one-to-one correspondence between $\mathcal D_\mathbb H(E)$ and $\mathcal S_\mathbb H(E).$
\end{theorem}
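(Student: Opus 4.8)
The plan is to establish the bijection by exhibiting the two maps explicitly and showing they are mutually inverse, leaning on the machinery already built in Theorem \ref{th:3.1}. First I would define a map $\Phi \colon \mathcal S_\mathbb H(E) \to \mathcal D_\mathbb H(E)$ by sending an $\mathbb H$-valued state $s$ to the decomposition $\Phi(s) := (s^{-1}(\{t\}) : t \in [0,1]_\mathbb H)$, exactly as in the forward direction of the proof of Theorem \ref{th:3.1}; that proof already verifies this is a genuine $\mathbb H$-decomposition. Conversely I would define $\Psi \colon \mathcal D_\mathbb H(E) \to \mathcal S_\mathbb H(E)$ by sending a decomposition $(E_t : t \in [0,1]_\mathbb H)$ to the state $s$ determined by $s(x) = t \iff x \in E_t$, which the backward direction of Theorem \ref{th:3.1} shows is a well-defined $\mathbb H$-valued state. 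The real content of the present theorem is then just that $\Phi$ and $\Psi$ are inverse to each other.

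The verification splits into the two composite identities. For $\Psi \circ \Phi = \mathrm{id}$, starting from a state $s$, the decomposition $\Phi(s)$ has blocks $E_t = s^{-1}(\{t\})$, and the reconstructed state $\Psi(\Phi(s))$ assigns to $x$ the unique $t$ with $x \in E_t = s^{-1}(\{t\})$, which is precisely $s(x)$; so $\Psi(\Phi(s)) = s$. For $\Phi \circ \Psi = \mathrm{id}$, starting from a decomposition $(E_t)$, the state $s = \Psi((E_t))$ satisfies $s(x) = t$ iff $x \in E_t$, and then $\Phi(s)$ has $t$-th block $s^{-1}(\{t\}) = E_t$; so $\Phi(\Psi((E_t))) = (E_t)$. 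Both computations are immediate once one notes that the defining property of $s = \Psi((E_t))$ is exactly $x \in E_t \Leftrightarrow s(x) = t$.

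The only point requiring genuine care — and the step I expect to be the main obstacle — is well-definedness of $\Psi$, i.e. that the prescription $s(x) = t \iff x \in E_t$ unambiguously assigns a single value to each $x$. This rests on the decomposition being a genuine partition: the conditions $E_s \cap E_t = \emptyset$ for $s \neq t$ and $\bigcup_{t} E_t = E$ guarantee every $x$ lies in exactly one block, so $s$ is a well-defined function, and the additivity plus normalization $s(0)=0$, $s(1)=1$ are supplied by the argument already recorded in Theorem \ref{th:3.1}. Since Theorem \ref{th:3.1} has already done this verification, I would simply cite it rather than repeat it, keeping the proof short.

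Consequently I would write the proof as: recall the two constructions from the proof of Theorem \ref{th:3.1}, observe that they land in $\mathcal D_\mathbb H(E)$ and $\mathcal S_\mathbb H(E)$ respectively by that theorem, and then close with the two one-line composition checks above to conclude that the correspondence $s \mapsto (s^{-1}(\{t\}))_{t}$ is a bijection with inverse $(E_t)_t \mapsto s$.
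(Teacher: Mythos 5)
Your proposal is correct and follows essentially the same route as the paper: the paper also defines the correspondence by $s(x)=t$ iff $x\in E_t$ and cites the proof of Theorem \ref{th:3.1} for bijectivity, while you merely spell out the two inverse-composition checks more explicitly. (A cosmetic note: the paper's notation actually attaches $\mathcal S_\mathbb H(E)$ to the decompositions and $\mathcal D_\mathbb H(E)$ to the states, the reverse of what you assumed, but this does not affect the argument.)
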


\begin{proof} By Theorem \ref{th:3.1}, $\mathcal D_\mathbb H(E)$ is nonempty iff so is $\mathcal S_\mathbb H(E).$  Assume that $D=(E_t: t \in [0,1]_\mathbb H)$ is an $\mathbb H$-decomposition of $E.$ We define a mapping $f: \mathcal D_\mathbb H(E)\to \mathcal S_\mathbb H (E)$ by $f(D) =s$, where $s(x)=t$ iff $x \in E_t$ for $t \in [0,1]_\mathbb H.$  By the proof of Theorem \ref{th:3.1}, $f$ is bijective.
\end{proof}

\begin{remark}\label{re:3.4}
{\rm If in the definition of an $\mathbb H$-decomposition $(E_t: t\in \mathbb [0,1]_\mathbb H)$ of $E$ we do not suppose that every $E_t$ has to be nonempty, then in the same way as it was proved in  Theorem  \ref{th:3.2} we can prove a one-to-one correspondence between these new $\mathbb H$-decompositions and $\mathbb H$-states, i.e. we suppose only that $s(E) \subseteq [0,1]_\mathbb H.$
}
\end{remark}

In what follows, we will assume that in any decomposition $(E_t: t \in [0,1]_\mathbb H)$ every $E_t$ is nonempty.

\begin{corollary}\label{co:3.3} Let $(E_t: t \in [0,1]_\mathbb H)$ be an $\mathbb H$-decomposition of a pseudo effect algebra $E.$ Then $E_0$ is a normal ideal of $E$ and $E_0 = \Infinit(E).$
\end{corollary}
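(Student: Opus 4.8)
The plan is to prove the two assertions separately, exploiting the $\mathbb H$-decomposition axioms (a) and (b) together with the characterization of the ideal $E_0$ as a state-kernel, which is available through Theorem \ref{th:3.1} and Theorem \ref{th:3.2}. The quickest route is to first produce the $\mathbb H$-valued state $s$ associated with the decomposition, so that $E_0 = s^{-1}(\{0\}) = \Ker(s)$; then normality of $E_0$ and its identification with the infinitesimals can be read off from properties of $s$.

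First I would show $E_0 = \Infinit(E)$. For the inclusion $E_0 \subseteq \Infinit(E)$, take $x \in E_0$ and prove by induction that $nx$ exists in $E$ for every $n \ge 1$; the key point is axiom (b), which forces $x + x \in E_{0}$ (since $0+0=0$) whenever $x+x$ is defined, and more to the point forbids $x$ from summing into the top slice prematurely: since $s(x)=0$, we have $s(nx)=0$ for all $n$, so $nx$ never reaches $1$ and the sums keep existing. I would make this rigorous using the state $s$: if $na$ existed only up to some stage and failed thereafter, one could derive a contradiction with $s(na)=n\cdot 0=0<1$. Conversely, for $\Infinit(E)\subseteq E_0$, suppose $x \in \Infinit(E)$ with $s(x)=t>0$; then $nx$ exists for all $n$, so $s(nx)=nt$ is defined and bounded by $1$, which is impossible for $n$ large since $nt \to \infty$. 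Hence $t=0$ and $x \in E_0$.

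Next I would establish that $E_0$ is an ideal and then that it is normal. That $E_0=\Ker(s)$ is an ideal is immediate from the two ideal axioms: if $a,b\in E_0$ and $a+b$ exists, then $s(a+b)=s(a)+s(b)=0$, so $a+b\in E_0$; and if $a\le b\in E_0$, writing $b=a+c$ gives $0=s(b)=s(a)+s(c)$ with both summands nonnegative, forcing $s(a)=0$, so $a\in E_0$. For normality, I must show $x+E_0 = E_0+x$ for every $x\in E$. The natural tool is axiom (a), $E_t^-=E_t^\sim=E_{1-t}$, combined with the translation structure $c=a\minusre b$, $d=b\minusli a$ coming from (PE3). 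Given $i\in E_0$ with $x+i$ defined, by (PE3) there is $j$ with $x+i=j+x$; applying $s$ gives $s(j)=s(x+i)-s(x)=s(i)=0$, so $j\in E_0$, whence $x+E_0\subseteq E_0+x$, and the reverse inclusion is symmetric.

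I expect the main obstacle to be the careful induction showing that $na$ actually \emph{exists} for all $n$ when $a\in E_0$, rather than merely that its state value would be $0$ if it existed; one must argue that the partial operation does not fail, and the cleanest justification is that existence of $(n-1)a+a$ can be guaranteed because $s\big((n-1)a\big)=0$ leaves room below $1$, so a suitable complement is available. If a purely algebraic existence argument proves delicate, the fallback is to invoke the correspondence of Theorem \ref{th:3.2} and argue entirely through $s=f(D)$, reducing every claim to arithmetic in $[0,1]_\mathbb H$. The normality step and the two ideal closure properties are routine once the state is in hand, so the bulk of the genuine work is concentrated in the infinitesimal identification.
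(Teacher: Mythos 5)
Your overall route coincides with the paper's: pass to the $\mathbb H$-valued state $s$ of Theorem \ref{th:3.1}, identify $E_0=\Ker(s)$, obtain normality from the fact that kernels of states are normal ideals (your direct verification via {\rm (PE3)} is the standard argument and is correct), and prove $\Infinit(E)\subseteq E_0$ from $s(mx)=m\,s(x)\le 1$. All of that matches the paper and is sound.

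The genuine gap is in the converse inclusion $E_0\subseteq\Infinit(E)$, exactly where you predicted the difficulty. The step ``$s\bigl((n-1)a\bigr)=0<1$ leaves room below $1$, so $(n-1)a+a$ exists'' is not a valid inference: in a pseudo effect algebra $x+y$ is defined iff $x\le y^{-}$, and a state value cannot certify an order relation. Concretely, let $E=\{0,a,a',1\}$ be the four-element Boolean effect algebra and $s$ the two-valued state with $s(a)=0$, $s(a')=1$. The slices $E_0=\{0,a\}$ and $E_1=\{a',1\}$ satisfy both axioms (a) and (b) of a $\mathbb Z$-decomposition, yet $a+a$ does not exist, so $a\in E_0\setminus\Infinit(E)$. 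This also shows that your fallback --- reducing everything to arithmetic in $[0,1]_{\mathbb H}$ via Theorem \ref{th:3.2} --- cannot close the gap, since arithmetic on state values can never produce the \emph{existence} of a partial sum. The paper's proof at this point does not argue from the state at all: it invokes the fact that $E_0+E_0$ is \emph{defined} in $E$, which is precisely the structural hypothesis your induction is missing; in the paper's framework this definedness is supplied by axiom (Ii) for decompositions of type I, and by Theorem \ref{th:3.4} for ordered decompositions. To repair your argument you must import such a hypothesis (or restrict to the ordered case), because, as the example shows, it is not derivable from axioms (a) and (b) alone.
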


\begin{proof}
By Theorem \ref{th:3.1}, there exists a unique  $\mathbb H$-valued state $s$ on $E$ such that $s(x)=t$ iff $x \in E_t,$ $t \in [0,1]_\mathbb H.$  Hence, $E_0 =\Ker(s)$ and $\Ker(s)$ is always a normal ideal. Since $E_0 +E_0$ is defined in $E,$ we have $E_0+E_0=E_0 \subseteq \Infinit(E).$ Let now $x\in \Infinit(E),$ then $mx$ is defined in $E$ for any integer $m\ge 1.$  Hence, $s(mx)=m s(x)\le 1,$ so that $s(x)= 0$ and $x \in E_0.$
\end{proof}

We say that an $\mathbb H$-decomposition $(E_t: t \in [0,1]_\mathbb H)$ of $E$ is {\it ordered} if, for $s<t,$ $s,t \in [0,1]_\mathbb H,$ we have $E_s \leqslant E_t.$

\begin{theorem}\label{th:3.4}
Let $(E_t: t \in [0,1]_\mathbb H)$ be an $\mathbb H$-decomposition of a pseudo effect algebra $E$. Then $(E_t: t \in [0,1]_\mathbb H)$ is ordered  if and only if $E_{s}+ E_{t}$ exists in $E$
whenever $s+t<1$ for  $s,t \in [0,1]_\mathbb H.$

In such a case,
\begin{itemize}

\item[{\rm (i)}] $E_{0}=\Infinit(E)$ and $\Infinit(E)$ is a normal ideal.

\item[{\rm (ii)}] $E_{s}+ E_{t}=E_{s+t}$ whenever $s+t<1$.

\item[{\rm (iii)}] If $s+t>1$, for any $x\in E_{s}$ and $y\in E_{t},$
neither $x+ y$ nor $y+ x$ exists.
\end{itemize}

\end{theorem}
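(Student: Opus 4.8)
The plan is to prove the equivalence first and then read off the three consequences, using throughout the standard fact that in a pseudo effect algebra $x+y$ is defined if and only if $x\le y^-$ (equivalently $y\le x^\sim$), together with the identities $x+x^\sim=1=x^-+x$ and $(y^\sim)^-=y$, which follow from the uniqueness clause (PE2); see \cite{DvVe1} for these basic properties.

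For the direction asserting that an ordered decomposition has all the required sums, I would fix $x\in E_s$ and $y\in E_t$ with $s+t<1$. Since $1\in\mathbb H$ we have $1-t\in[0,1]_\mathbb H$, and condition (a) gives $y^-\in E_{1-t}$. Because $s+t<1$ forces $s<1-t$, orderedness yields $E_s\leqslant E_{1-t}$, hence $x\le y^-$, and therefore $x+y$ exists; as $x,y$ are arbitrary, $E_s+E_t$ is defined. For the converse, assume all sums $E_s+E_t$ with $s+t<1$ exist and take $s<t$, $x\in E_s$, $y\in E_t$. Now $y^\sim\in E_{1-t}$ by (a), and $s+(1-t)=1-(t-s)<1$, so $x+y^\sim$ exists; this gives $x\le(y^\sim)^-=y$, i.e. $E_s\leqslant E_t$. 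The crux of both implications is the same negation trick: to decide whether $x+y$ exists one passes to the complementary slice $E_{1-t}$ and reads off an order relation to a negation of $y$.

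Granting the equivalence, statement (i) is already Corollary \ref{co:3.3}, which holds for every $\mathbb H$-decomposition. For (iii), condition (b) says directly that $x+y$ can exist only when $s+t\le 1$, and symmetrically $y+x$ only when $t+s\le 1$, so $s+t>1$ rules both out. The one step that needs genuine work is the reverse inclusion in (ii): the containment $E_s+E_t\subseteq E_{s+t}$ is immediate from (b), whereas for $E_{s+t}\subseteq E_s+E_t$ I would invoke the $\mathbb H$-valued state $\mu$ supplied by Theorem \ref{th:3.1} (so that $\mu^{-1}(r)=E_r$). Given $z\in E_{s+t}$, pick any $x_0\in E_s$; if $t>0$ then $s<s+t$ and orderedness gives $x_0\le z$, so $z=x_0+c$ for some $c$, and additivity of $\mu$ forces $\mu(c)=t$, i.e. $c\in E_t$, whence $z\in E_s+E_t$. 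The boundary case $t=0$ is handled separately by noting $E_s=E_s+\{0\}\subseteq E_s+E_0\subseteq E_s$, the last inclusion by (b). I expect this recovery of the correct second summand from the state to be the main obstacle, the remaining parts being formal.
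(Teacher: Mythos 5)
Your proof is correct and follows essentially the same route as the paper's: both directions of the equivalence are obtained by passing to the complementary slice $E_{1-t}$ via the negations and using $x+y$ defined $\iff x\le y^-$, and parts (i)--(iii) are read off from the $\mathbb H$-valued state of Theorem \ref{th:3.1} exactly as in the paper (your explicit treatment of the boundary case $t=0$ in (ii) is a small point the paper glosses over). No gaps.
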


\begin{proof} By Theorem \ref{th:3.1}, there is a unique  $\mathbb H$-valued state $s$ such that $s(E_t)=t$ for each $t \in [0,1]_\mathbb H.$

Assume $(E_t: t \in [0,1]_\mathbb H)$ is ordered. Choose $u,v\in [0,1]_\mathbb H$ with $u+v<1.$ We have that $u<1-v$, and  $E_{u}\leqslant E_{1-v}=E_{v}^{-},$ which implies that $E_{u}+
E_v$ exists and we show $E_{u}+ E_{v}=E_{u+v}.$ Indeed, for any $a\in E_{u},$ and any $b\in E_{v},$  we have $s(a+ b)=u+v$,
which implies that $a+ b\in E_{u+v}.$ Conversely, let $c\in E_{u+v}.$ For any $a\in E_{u}$, we have that $a\le c.$ Then there exists an element  $b\in E$ such that $a+ b=c.$ Hence, $s(a+ b)=s(a)+s(b)=u+v,$ then $s(b)=v$, which implies that $b\in E_{v}.$

Now let $E_t + E_w$ exist in $E$ for $t+w<1,$ $t,w \in [0,1]_\mathbb H.$ Choose $u <v,$ $u,v \in [0,1]_\mathbb H.$ Then $u +(1-v)<1$ so that $E_u + E_{1-v}=E_u+E_v^-$ exists in $E$ which yields $E_u \leqslant E_v.$

(i) For any $x, y\in E_{0},$ we have that $x+ y$ exists in $E.$  Then $s(x+y)=s(x)+s(y)=0$ and $x+y \in E_0$ which implies
$E_{0}\subseteq \Infinit(E).$
Conversely, let $x\in \Infinit(E),$ we have that $mx$  is defined in $E$ for each integer $m\ge 1.$ Then $s(mx)=ms(x) \leqslant 1$ which implies $s(x)=0$ and $x\in \Ker(s),$ and so $x\in E_{0}.$

(ii) Assume that $a\in E_{u}$ and $b\in E_{v}$ for $u+v<1.$ Then
$a+ b$ exists and $s(a+ b)= u+v,$ and so
$a+ b\in E_{u+v}.$ Conversely, let $z\in E_{u+v},$ then for any
$x\in E_{u}$, $x\le z$, so that $z=x+(x/z),$ by
$s(z)=s(x)+s(x/z),$ which implies that $x/z\in E_{v}.$

(iii) Assume that $u+v>1$, $x\in E_{u},$ $y\in E_{v},$ either
$x+ y$  or $y+ x$ exists, then $s(x+ y)>1$ or
$s(y+ x)>1$, which is absurd.
\end{proof}

We recall the following two definitions of radicals used in \cite{Dv08}. Let $E$ be a pseudo effect algebra. We denote by $\mathcal{M}(E)$ and $\mathcal{N}(E)$ the set of
maximal ideals and the set of normal ideals of $E,$ respectively. We define (i) the {\it radical} of  $E$, $\Rad(E)$, as the set
$$
\Rad(E)=\bigcap\{I: I\in \mathcal{M}(E)\},
$$
and (ii) the {\it normal radical} of $E$, via
$$
\Rad_{n}(E)=\bigcap\{I: I\in \mathcal{M}(E)\cap\mathcal{N}(E)\}.
$$
It is obvious that $\Rad(E)\subseteq \Rad_{n}(E)$ holds in any pseudo effect algebra
$E.$

\begin{lemma}\label{le:3.5}
{\rm (1)} Let $(E_t: t \in [0,1]_\mathbb H)$ be an ordered $\mathbb H$-decomposition of a pseudo effect algebra $E.$  Then $E_0$ is a unique maximal ideal of $E,$ and it is a normal ideal such that
$E_{0}=\Infinit(E)=\Rad(E)=\Rad_{n}(E).$

{\rm (2)} If $(E_t: t \in [0,1]_\mathbb H)$ is any $\mathbb H$-decomposition of $E$ of type I, then $E_0 + E_0 = E_0.$
\end{lemma}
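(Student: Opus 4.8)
The plan is to reduce everything in part (1) to the two facts already in hand. By Corollary \ref{co:3.3} and Theorem \ref{th:3.4}(i), for an ordered $\mathbb H$-decomposition $E_0$ is a normal ideal with $E_0=\Infinit(E)$, and by Theorem \ref{th:3.1} there is an $\mathbb H$-valued state $s$ with $E_t=s^{-1}(t)$ and $E_0=\Ker(s)$. I claim that these give part (1) once I establish the single key fact: every proper ideal $J$ of $E$ is contained in $E_0$. Granting this, $E_0$ is the largest proper ideal (it is proper because $s(1)=1$ forces $1\notin E_0$), hence the unique maximal ideal; consequently $\Rad(E)=\bigcap\mathcal M(E)=E_0$, and since $E_0$ is also normal it is the only member of $\mathcal M(E)\cap\mathcal N(E)$, so $\Rad_n(E)=E_0$ as well. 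Combined with $E_0=\Infinit(E)$ this yields the whole asserted chain of equalities.

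To prove the key fact I argue by contradiction: suppose a proper ideal $J$ contains some $x\notin E_0$, say $x\in E_a$ with $a=s(x)>0$, and I derive $1\in J$. If $a=1$, then $s(x^\sim)=1-a=0$, so $x^\sim\in E_0\subseteq J$ and $1=x+x^\sim\in J$. If $0<a<1$, let $k\ge 1$ be the largest integer with $ka<1$. Applying Theorem \ref{th:3.4}(ii) along the way (each intermediate sum $jx+x$ has index $(j+1)a\le ka<1$, so $E_{ja}+E_a=E_{(j+1)a}$ is defined), the multiple $kx$ exists, lies in $E_{ka}$, and belongs to $J$ since $J$ is closed under defined sums. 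Because $(k+1)a\ge 1$ we have $0<1-ka\le a$, and by additivity of $s$ the right negation $(kx)^\sim$ lies in $E_{1-ka}$. In the generic case $1-ka<a$, orderedness gives $E_{1-ka}\leqslant E_a$, hence $(kx)^\sim\le x\in J$, so $(kx)^\sim\in J$ and $1=kx+(kx)^\sim\in J$.

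The main obstacle is the residual boundary case $1-ka=a$, i.e. $a=1/m$ with $x$ sitting in the least nonzero slice $E_{1/m}$; then $(kx)^\sim$ lands in the \emph{same} slice $E_{1/m}$ as $x$, so orderedness provides no comparison and the quick argument stalls. When $\mathbb H$ is dense this is repaired by a nudge: pick $\varepsilon\in\mathbb H$ with $0<\varepsilon<1/m$ and $w\in E_\varepsilon$; orderedness gives $w\le x$, so $w\in J$, and $z:=(m-1)x+w$ is defined with $z\in E_{(m-1)/m+\varepsilon}\cap J$ of index strictly below $1$, whence $z^\sim\in E_{1/m-\varepsilon}$ has index strictly below $a$, so $z^\sim\le x\in J$ and $1=z+z^\sim\in J$. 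When $\mathbb H=\frac{1}{m}\mathbb Z$ is cyclic no such $\varepsilon$ exists, and this is precisely the $n$-perfect situation of \cite{DXY}; here one must instead manufacture an element of $E_1\cap J$ directly from the additive slice structure $E_{(m-1)/m}+E_{1/m}=E_1$, and I expect this to be the technical heart of the proof (and the point where extra structure of the underlying group, such as directedness, is what actually makes the passage across the index~$1$ boundary go through).

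Part (2) is then immediate and needs no orderedness. By (Iii), $E_0$ is an ideal; by (Ii) applied with $s=t=0<1$, the sum $E_0+E_0$ is defined, so $E_0+E_0\subseteq E_0$ because ideals are closed under defined sums. Conversely, $0\in E_0$ (the ideal $E_0$ is nonempty and downward closed, and $0$ lies below every element), so for each $x\in E_0$ we have $x=x+0\in E_0+E_0$. Hence $E_0+E_0=E_0$.
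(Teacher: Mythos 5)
Your reduction of part (1) to the single claim that every proper ideal containing some $x\notin E_0$ must contain $1$ is sound and is essentially the paper's strategy (the paper shows the ideal generated by $E_0$ and such an $x$ is all of $E$, and gets uniqueness from the observation that any ideal meeting $E\setminus E_0$ contains $E_0$). Your handling of $a=1$, of the generic case $1-ka<a$ via the maximal multiple $kx$, and of the dense boundary case $a=1/m$ via the $\varepsilon$-nudge are all correct; the dense-case mechanism is the same as the paper's (the paper takes maximal multiples of elements in slices $E_{t_i}$ with $t_i\searrow 0$, you take multiples of $x$ itself and only need small elements at the boundary). Part (2) coincides with the paper's argument.

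However, the proof is not complete: the case $\mathbb H=\frac{1}{m}\mathbb Z$ with $x$ in the minimal nonzero slice $E_{1/m}$, $m\ge 2$, is explicitly left open (``I expect this to be the technical heart of the proof''), and it cannot be waved away. The repair you sketch via $E_{(m-1)/m}+E_{1/m}=E_1$ is not available: Theorem \ref{th:3.4}(ii) gives $E_s+E_t=E_{s+t}$ only for $s+t<1$, and when $s+t=1$ an individual sum $u+v$ exists iff $u\le v^{\sim}$, where now $v^{\sim}$ and $u$ lie in the \emph{same} slice, so orderedness provides no comparison. For the record, the paper's own proof of this case is one sentence: with $y:=(n-1)x\in I$ it asserts that $s(y^-)=1/n$ ``so that $y^-\in I$'', whence $1=y^-+y\in I$ --- i.e.\ exactly the inference you could not justify, stated without justification (since $y^-$ sits in the same slice $E_{1/n}$ as $x$, neither $E_s\leqslant E_t$ for $s<t$ nor $E_0\subseteq I$ applies). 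So your instinct about where the difficulty lies is accurate, but as submitted your argument does not prove the statement in the cyclic case. One way to close it is to assume the slices are downward directed: then there is $w\in E_{1/n}$ with $w\le y^-,x$, so $w\in I$ and $y^-=e+w$ for some $e$ with $s(e)=0$, i.e.\ $e\in E_0\subseteq I$, giving $y^-\in I$; alternatively one can invoke the $n$-perfect case from \cite{DXY}. Without some such ingredient the key fact, and hence the maximality of $E_0$, remains unproved in the cyclic situation.
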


\begin{proof}
(1) By Corollary \ref{co:3.3}, $E_0$ is a normal ideal, such that $E_{0}=\Infinit(E)$. Now we show that $E_0$ is a maximal ideal. Take $x \in E_t \setminus E_0,$ where $0 < t <1,$ $t \in [0,1]_\mathbb H.$ Let $I$ be an ideal of $E$ generated by $E_0$ and $x.$
Then, for every $s <t,$ $s \in [0,1]_\mathbb H,$ we have $E_s \leqslant  E_t,$ whence $E_s \subseteq I.$ There are two cases: (i) there is no $s\in [0,1]_\mathbb H$ such that $0<s<t.$ Then $t= 1/n$ for some integer $n\ge 1$ and $\mathbb H =\frac{1}{n}\mathbb Z.$  If $n=1$, then $s(x)=1,$  $s(x^-) =0,$ and $x^- \in E_0.$ Hence, $1 \in I.$

If $n \ge 2,$ then $y:=(n-1)x$ is defined in $E,$ and $y \in I.$ For the element $y^-$ we have $s(y^-) = 1/n,$ so that $y^- \in I$ which means $1 \in I.$

(ii) $\mathbb H$ is no cyclic subgroup, so that it is dense in $\mathbb R.$ There is a strictly decreasing sequence $\{t_i\}$ of non-zero elements of $[0,1]_\mathbb H$ such that $t_i\searrow 0.$ For every $t_i$ there is a maximal integer $m_i$ such that $y_i:=m_i t_i$ is defined in $E.$ Hence, for enough small $t_i$, $s(y_i^-) <t$ so that $y_i^- \in I$ which again proves $I=E,$ and $E_0$ is a maximal ideal.

{\it Uniqueness.} Assume that $I$ is another maximal ideal of $E.$  If there is $x \in E_t \cap I$ for some $t \in [0,1]_\mathbb H,$ $t>0,$ then, for every $z \in M_0,$ we have $z \le x$ and $z \in I,$ so that $E_0 \subseteq I.$ The maximality of $E_0$ yields $E_0=I.$

Finally, we have $\Rad(E)=\Rad_{n}(E).$

(2) By (Ii), $E_0+E_0$ is defined in $E,$ and by (Iii), $E_0$ is an ideal. Hence, $E_0 = E_0 +\{0\} \subseteq E_0 + E_0 \subseteq E_0.$
\end{proof}

\begin{proposition}\label{pr:4.1}
Let $G$ be a directed po-group with \RDP$_1$ and let $\mathbb H$ be a subgroup of $\mathbb R$ such that $1 \in \mathbb H.$ Choose $g_0 \in$ G and  set $E_0=\{(0,g): g \in G^+\},$ $E_1=\{(1,g): g \le g_0\}$ and $G_t =\{(t,g): g \in G\}$ for $t \in [0,1]_\mathbb H,$ $0<t<1.$ Then $E:=\Gamma(\mathbb Z \lex G,(1,g_0))$ is a  pseudo effect algebra which is $\mathbb H$-perfect, $(E_t: t \in [0,1]_\mathbb H)$ is an ordered $\mathbb H$-decomposition. $E$ is symmetric if and only if $g_0 \in C(G).$

In addition, $E$ has a unique state, it is an $\mathbb H$-valued state, and $E$ has a unique $\mathbb H$-decomposition.
\end{proposition}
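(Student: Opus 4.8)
The plan is to read off the structural claims from the $\Gamma$-construction and to reserve the real work for the symmetry criterion and for the uniqueness of the state.

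First I would observe that, since $G$ is directed with \RDP$_1$, Theorem~\ref{th:2.8} gives that $\mathbb H\lex G$ is a po-group with \RDP$_1$, and that $(1,g_0)$ is a strong unit: for any $(t,g)$ pick $n\in\mathbb H$ with $t<n$, so $(t,g)<n(1,g_0)$ because the first coordinate dominates the lexicographic order. Hence $E=\Gamma(\mathbb H\lex G,(1,g_0))$ is a pseudo effect algebra with \RDP$_1$. Computing the interval, $(0,0)\le(t,g)\le(1,g_0)$ forces $t\in[0,1]_{\mathbb H}$ and produces exactly $E_0=\{(0,g):g\ge 0\}$, $E_t=\{(t,g):g\in G\}$ for $0<t<1$, and $E_1=\{(1,g):g\le g_0\}$, which is a partition of $E$ into nonempty slices indexed by $[0,1]_{\mathbb H}$.

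Next I would verify the decomposition axioms by direct calculation in the group $\mathbb H\lex G$. Solving $d+(t,g)=(1,g_0)$ and $(t,g)+e=(1,g_0)$ yields $(t,g)^-=(1-t,\,g_0-g)$ and $(t,g)^\sim=(1-t,\,-g+g_0)$; both lie in $E_{1-t}$ (the endpoints $t=0,1$ using $g\ge 0$ resp.\ $g\le g_0$), and as $g$ ranges over the admissible set they exhaust $E_{1-t}$, giving axiom~(a). Since $(s,g)+(t,h)=(s+t,g+h)$ lies in $E$ only when $s+t\le 1$, and then lies in $E_{s+t}$ (a sum of elements of slices with $s+t>1$ has first coordinate exceeding $1$ and does not exist in $E$), axiom~(b) holds; and $s<t$ gives $(s,g)\le(t,h)$, so the decomposition is ordered. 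Because every $(0,g)$ with $g\ge 0$ admits all multiples $(0,ng)$, one has $E_0=\Infinit(E)$, and Lemma~\ref{le:3.5} then identifies $E_0$ as the unique maximal normal ideal with $E_0=\Rad(E)=\Rad_n(E)$, exhibiting $E$ as $\mathbb H$-perfect.

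The substantive step is the symmetry criterion. From the formulas above, $(t,g)^-=(t,g)^\sim$ is equivalent to $g_0-g=-g+g_0$, i.e.\ to $g_0$ commuting with $g$; ranging over $E_0$ this says $g_0$ commutes with every element of $G^+$, and since $G$ is directed (so $G=G^+-G^+$) this forces $g_0\in C(G)$, while for $0<t<1$ the slice $E_t$ already lets $g$ run over all of $G$. Conversely, if $g_0\in C(G)$ the left and right negations coincide on every slice, so $E$ is symmetric. I would close this part with the promised remark that symmetry does not entail commutativity of $+$: whenever $G$ is non-abelian, $(0,g)+(1,h)=(1,g+h)$ and $(1,h)+(0,g)=(1,h+g)$ differ.

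Finally, for uniqueness I would use that, $E$ having \RDP$_1$, every state on $E$ extends uniquely to a state $\hat s$ of the unital po-group $(\mathbb H\lex G,(1,g_0))$ (the extension property recalled in Section~2). Positivity makes $\hat s$ vanish on the infinitesimal subgroup $\{0\}\times G$, since $n(0,g)\le(1,g_0)$ for all $n$; hence $\hat s(t,g)=\hat s(t,0)$, and $t\mapsto\hat s(t,0)$ is a positive group homomorphism $\mathbb H\to\mathbb R$ with $\hat s(1,0)=1$. The key point is that such a homomorphism is the inclusion: for integers $m,n$ with $n>0$, $m/n\le h$ gives $m\le n\hat s(h,0)$ and $h\le m/n$ gives $\hat s(h,0)\le m/n$, so $h$ and $\hat s(h,0)$ have the same rational cut and coincide. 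Thus the only state is $(t,g)\mapsto t$, which is $\mathbb H$-valued, and uniqueness of the $\mathbb H$-decomposition follows from the state–decomposition bijection of Theorem~\ref{th:3.2}. I expect this rigidity of positive homomorphisms $\mathbb H\to\mathbb R$ fixing $1$ to be the main obstacle, the negation bookkeeping and the directedness upgrade being routine.
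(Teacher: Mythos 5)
Your proposal is correct, and its skeleton coincides with the paper's: identify the slices of $\Gamma(\mathbb H\lex G,(1,g_0))$ (you rightly read the ``$\mathbb Z$'' in the displayed statement as a typo for $\mathbb H$), verify the ordered $\mathbb H$-decomposition axioms, show any state kills $\{0\}\times G$, pin down its value on the elements $(t,0)$, and transfer uniqueness of the decomposition through Theorem \ref{th:3.2}. The difference is that you make self-contained two steps the paper delegates to citations. For the symmetry criterion the paper simply cites p.~98 of \cite{187}, whereas you compute $(t,g)^-=(1-t,g_0-g)$ and $(t,g)^\sim=(1-t,-g+g_0)$ and observe that their equality over all admissible $g$ forces $g_0\in C(G)$ --- correctly covering the case where only $E_0$ and $E_1$ are available (e.g.\ $\mathbb H=\mathbb Z$) via $G=G^+-G^+$, and also supplying the paper's forward-referenced remark that symmetry does not entail commutativity of $+$. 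For uniqueness of the state the paper passes to the MV-algebra $M=\Gamma(\mathbb H,1)$ and invokes the uniqueness of the state on a linearly ordered MV-algebra [Thm 5.5 of \cite{156}]; you instead prove directly that a positive additive map $\mathbb H\to\mathbb R$ fixing $1$ is the inclusion by comparing rational cuts ($m/n\le h$ iff $nh-m\in\mathbb H^+$, which transfers to the image). Both arguments then use the unique \RDP$_1$-extension of states to $(\mathbb H\lex G,(1,g_0))$, available by Theorem \ref{th:2.8}, to dispose of the second coordinate. Your route is more elementary and self-contained; the paper's is shorter on the page. I see no gaps.
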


\begin{proof}
It is clear that $(E_t: t \in [0,1]_\mathbb H)$ is an ordered $\mathbb H$-decomposition of $E,$ and $E$ is a symmetric $\mathbb H$-perfect pseudo effect algebra iff $g_0 \in C(G),$ see \cite[p. 98]{187}.

By Theorem \ref{th:3.1}, $E$ has an $\mathbb H$-valued state, say $s,$ namely $s(E_t)=\{t\},$ $t \in [0,1]_\mathbb H.$  Assume that $s_1$ is any state on $E.$  Then $E_0 = \Infinit(E)\subseteq \Ker(s_1),$ and since $E_0$ is a maximal ideal, $E_0=\Ker(s_1).$

Since $\mathbb H$ is in fact a linearly ordered $\ell$-group, we can define the MV-algebra $M:=\Gamma(\mathbb H,1),$ for more detail on MV-algebras, see e.g. \cite{CDM}. Let $\hat s_1$ be a mapping on $M$ defined by $\hat s_1(t)=s_1(t,0)$ if $t \in [0,1]_\mathbb H \setminus \{1\}$ and $s_1(1)=1=s_1(1,g_0).$ It is straightforward to verify that $\hat s_1$ is a state on $M.$ In the same way we can define also a state $\hat s$ on $M.$  Moreover, $M$ being linearly ordered, $M$ has a unique state, see \cite[Thm 5.5]{156},  and $\hat s_1 =\hat s.$

By Theorem \ref{th:2.8}, $\mathbb H\lex G$ has RDP$_1,$ therefore, $s_1$ can be uniquely extended to a state $\overline {s_1}$ on the unital $\ell$-group $(\mathbb H\lex G, (1,g_0)).$

We have $s_1(t,0)=t$ for $t \in  [0,1]_\mathbb H \setminus \{1\}$ and $s_1(1,g_0).$ Therefore, $s_1(0,g)=0$ for any $g \in G^+$ and since $G$ is directed, $\overline{s_1}(0,g)=0$ for any $g \in G.$ If $0<t < 1,$ then $s_1(t,g) = \overline{s_1}(t,g)= s_1(t,0) + \overline{s_1}(0,g) = t.$ On the other hand, if $(1,g) \in E_1,$ then $(1,g)^- \in E_.$ Therefore, $0=s_1((1,g)^-)$ which yields $s_1(1,g)=1.$ In other words, $s = s_1.$

Applying Theorem \ref{th:3.2}, we see  $E$ has a unique $\mathbb H$-decomposition.
\end{proof}

\section{$\mathbb H$-perfect, $\mathbb R$-perfect and $\mathbb Q$-perfect Pseudo Effect Algebras}

We say that a pseudo effect algebra $E$ is $\mathbb H$-{\it perfect} if it has an ordered $\mathbb H$-decomposition $(E_t: t \in [0,1]_\mathbb H).$

In particular, if $\mathbb H = \mathbb R$ or $\mathbb H =\mathbb Q,$ we are saying that $E$ is $\mathbb R$-perfect and $\mathbb Q$-perfect, respectively. If $\mathbb H = \frac{1}{n}\mathbb Z,$ a $\frac{1}{n}\mathbb Z$-perfect pseudo effect algebra is said to be also an $n$-{\it perfect pseudo effect algebra}, see \cite{DXY}.

The following notion of a cyclic element was defined for a special class of pseudo effect algebras, called GMV-algebras, in \cite{Dv08, 225} and for pseudo effect algebras in \cite{DXY}.

Let $n\ge 1$ be an integer. An element $a$ of a pseudo effect algebra $E$ is said to be {\it cyclic of order} $n$ or simply {\it cyclic} if $na$ exists in $E$ and $na =1.$ If $a$ is a cyclic element of order $n$, then $a^- = a^\sim$, indeed, $a^- = (n-1) a = a^\sim$. It is clear that $1$ is a cyclic element of order $1.$

We note that a pseudo effect algebra $E$ has a cyclic element of order $n$ iff $E$ has a pseudo effect subalgebra of $E$ isomorphic to $\Gamma(\frac{1}{n}\mathbb Z,1).$

\begin{proposition}\label{le:4.1}
If $c$ is a cyclic element, then, for any $x\in E,$ $x+ c$ exists in $E$ if and only if $c+ x$ exists in $E.$
\end{proposition}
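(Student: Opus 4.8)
The plan is to convert both existence assertions into a single order inequality, and then close the argument with the identity $c^-=c^\sim$, which is exactly what being cyclic provides (as recorded just before the statement, a cyclic element $c$ of order $n$ satisfies $c^-=(n-1)c=c^\sim$). Thus the real content of the proposition is that the definedness of $x+c$ and of $c+x$ is each governed by a condition on $x$ together with a negation of $c$, and that the two relevant negations are equal.

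First I would establish an existence criterion using only the axioms: for any $a,b\in E$, the sum $a+b$ is defined if and only if $a\le b^-$, equivalently if and only if $b\le a^\sim$. For the forward implication of the first form, observe that $a+b\le 1$, so by the characterization of $\le$ there is $w\in E$ with $w+(a+b)=1$; applying (PE1) to the triple $w,a,b$ shows that $w+a$ is defined and $(w+a)+b=1$, whence the uniqueness clause of (PE2) forces $w+a=b^-$, i.e. $a\le b^-$. For the converse, the relation $a\le b^-$ yields a representation $b^-=s+a$; since $b^-+b=1$, a single application of (PE1) to $s,a,b$ recovers that $a+b$ is defined. The criterion in the form $b\le a^\sim$ is proved symmetrically, working on the right and using $a+a^\sim=1$ in place of $b^-+b=1$.

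With the criterion in hand the proposition is immediate: by the first form, $x+c$ is defined if and only if $x\le c^-$, while by the second form (with $a=c$, $b=x$) the sum $c+x$ is defined if and only if $x\le c^\sim$. Since $c^-=c^\sim$ for the cyclic element $c$, these two conditions are literally the same, so $x+c$ is defined precisely when $c+x$ is, and no separate treatment of the two directions is needed. I expect the only delicate point to be the non-commutative bookkeeping in the criterion — keeping track of which negation ($c^-$ versus $c^\sim$) and which one-sided representation of $\le$ (namely $c^-=s+x$ as opposed to $c^-=x+t$) is the one that feeds correctly into each use of (PE1). Since the order relation $a\le b$ supplies both representations $b=a+p=q+a$ simultaneously, any apparent mismatch of sides is harmless, and the associativity axiom (PE1) together with the uniqueness in (PE2) does all the work.
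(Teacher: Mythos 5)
Your proposal is correct and follows essentially the same route as the paper: both reduce definedness of $x+c$ to $x\le c^-$ and of $c+x$ to $x\le c^\sim$, and then invoke $c^-=c^\sim$ for a cyclic element. The only difference is that you derive the existence criterion from the axioms in detail, whereas the paper treats it as a known basic fact about pseudo effect algebras; your derivation of it via (PE1) and the uniqueness in (PE2) is sound.
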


\begin{proof}
There is an integer $n \ge 1$ such that  $nc=1.$  We have that $c^{\sim}=c^{-}$. Then $x+ c$
exists if and only if $x\le c^{-}$ if and only if
$x\le c^{\sim}$ if and only if $c+ x$ exists.
\end{proof}

We say that a group $G$ is {\it torsion-free} if $ng\ne 0$ for any $g\ne 0$ and every nonzero integer $n.$  For example, every $\ell$-group is torsion-free, see \cite[Cor 2.1.3]{Gla}. We recall that  a po-group $G$ is  torsion-free iff  $\mathbb H \lex G$ is torsion-free, where $\mathbb H$ is a subgroup of the group $\mathbb R$ with $1\in \mathbb H.$

We recall that a group $G$ enjoys {\it unique  extraction of
roots} if, for all positive integers $n$ and all $g,h \in G$, $ng =
nh$ implies $g=h$.  We note that every linearly ordered group, or
a representable $\ell$-group (i.e. it is a subdirect product of linearly ordered groups), in particular every Abelian
$\ell$-group, enjoys unique  extraction of roots, see \cite[Lem.
2.1.4]{Gla}. On the other hand, there is a unital divisible $\ell$-group (i.e. $g/n$ is defined in $G$ for every $g \in G$ and every integer $n\ge 1$), but the unique extraction of roots fails, see \cite[p. 16]{Gla}.

The following results have been originally proved in \cite{DXY} for strong $n$-perfect pseudo effect algebras. In what follows, we generalize them for $\mathbb H$-perfect effect algebras.

Let $E=\Gamma(G,u)$ for some unital po-group $(G,u).$ An element
$c\in E$ such that (a) $nc=u$ for some integer $n \ge 1,$ and (b) $c\in   C(H),$ 
is said to be a {\it strong cyclic element of order $n$.}

For example, if $E = \Gamma(\mathbb R \lex G, (1,0)),$ then any element of the form $(1/n,0),$ $n \ge 1,$ is a strong cyclic element of order $n.$

We note that this notion was used in \cite{Dv08,DXY} to show when a pseudo effect algebra is of the form $\Gamma(\frac{1}{n} \mathbb Z \lex G,(1,0)).$

As a matter of interest, we recall that if $\mathbb H= \mathbb H(\alpha)$ is a subgroup of $\mathbb R$ generated by $\alpha \in [0,1]$ and $1,$ then $\mathbb H = \frac{1}{n}\mathbb Z$ for some integer $n\ge 1$ if $\alpha$ is a rational number. Otherwise, $\mathbb H(\alpha)$ is countable and dense in $\mathbb R,$ and $E(\alpha):= \Gamma(\mathbb H(\alpha),1)=\{m+n\alpha: m,n \in \mathbb Z,\ 0\le m+n\alpha \le 1\},$ see \cite[p. 149]{CDM}. If $\alpha$ is irrational,  then $E(\alpha)$ and $\Gamma(\mathbb H(\alpha)\lex G, (1,0))$ have only one (strong) cyclic element, namely $1$ which is of order $1.$

Now let $G$ be a non Abelian po-group and choose an element $g_0 \in G \setminus C(G).$ Set $E_t:=\{(t,g)\in \Gamma(\mathbb H \lex G, (1,g_0))\},$ $t \in [0,1]_\mathbb H.$ Then $E:=\Gamma(\mathbb  H \lex  G, (1,g_0))$ is $\mathbb H$-perfect, $(E_t: t \in [0,1]_\mathbb H)$ is an ordered $\mathbb H$-decomposition of $E$ but $E$ does not have any strong cyclic element. We note that $E$ is not symmetric.

\begin{lemma}\label{le:4.2}
Let $H$ be a torsion-free po-group with a strong unit $u.$ Let $c\in E=\Gamma(H,u)$ be a strong cyclic element  of order $n.$ If $d\in E$ is any cyclic element of order $n,$ then $c=d.$
\end{lemma}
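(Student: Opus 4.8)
The plan is to exploit the single hypothesis that distinguishes a strong cyclic element from a merely cyclic one, namely the centrality $c \in C(H)$, in order to reduce the equality $nc = nd = u$ to a computation inside an abelian subgroup, where torsion-freeness is enough to cancel the multiplier $n$.

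First I would record that, since $c \in C(H)$, the element $c$ commutes with $d$ and hence also with $-d$; in particular $c + (-d) = (-d) + c$, so the difference $w := c - d$ is unambiguous. The heart of the argument is the claim that $kw = kc - kd$ for every integer $k \ge 1$, which I would prove by induction on $k$. In the inductive step one has to move a copy of $c$ past $-(kd)$ and then recombine $-(kd)$ with $-d$; both manipulations are legitimate precisely because $c$ is central (and because $d$ commutes with its own multiples, so that $d + kd = kd + d = (k+1)d$). This induction is the only place where any commutativity is invoked.

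Taking $k = n$ then gives $nw = nc - nd = u - u = 0$, using $nc = u$ from condition (a) of strong cyclicity and $nd = u$ from the hypothesis that $d$ is cyclic of order $n$. Since $H$ is torsion-free, $nw = 0$ forces $w = 0$, that is $c = d$, which is the assertion.

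The point most in need of care — and what I expect to be the main obstacle — is resisting the temptation to deduce $c = d$ directly from $nc = nd$. That inference would require unique extraction of roots, which, as the excerpt itself stresses, can fail in a torsion-free non-abelian po-group. The centrality of $c$ is exactly what rescues the argument: it confines the entire computation to the (abelian) subgroup generated by $c$ and $d$, where the torsion-free hypothesis does yield cancellation of $n$. Consequently the proof should hinge on carefully carrying out the commutations in the inductive identity rather than on any stronger group-theoretic input.
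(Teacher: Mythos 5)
Your proof is correct and follows essentially the same route as the paper: centrality of $c$ gives $c+d=d+c$, hence $n(c-d)=nc-nd=u-u=0$, and torsion-freeness yields $c=d$. The paper simply asserts the identity $n(c-d)=nc-nd$ that you verify by induction, so your write-up is just a more detailed version of the same argument.
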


\begin{proof}
Since $c \in C(H)$ and $d\in H,$ we have $c+d=d+c$ in the group $H.$ Then $n(c-d)=nc-nd=0$ so that $c=d.$
\end{proof}

\begin{proposition}\label{pr:4.3}
Let  $E$ be a  pseudo effect algebra with an ordered $\mathbb H$-decomposition $(E_t: t \in [0,1]_\mathbb H).$ Then there exists a unique directed  po-group $G$ such that $G^{+}=E_{0}$.
\end{proposition}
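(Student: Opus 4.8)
The plan is to realize $E_0$ as the positive cone of its (two-sided) group of fractions. First I would record the monoid structure of $E_0$. By Theorem~\ref{th:3.4}(i)--(ii) and Lemma~\ref{le:3.5}(1), the set $E_0=\Infinit(E)$ is a normal ideal of $E$ satisfying $E_0+E_0=E_0$; hence the restriction of $+$ is a total associative operation on $E_0$ with neutral element $0$, so that $(E_0,+,0)$ is a monoid. This monoid is cancellative, since in any pseudo effect algebra $a+b=a+c$ (resp. $b+a=c+a$) forces $b=c$ by uniqueness of the differences $a\minusre b$ and $b\minusli a$, see \cite{DvVe1}; it is conical, because $a+b=0$ gives $a\le a+b=0$ and $b\le 0$, whence $a=b=0$; and it is upward directed, because for $a,b\in E_0$ the element $a+b\in E_0$ dominates both (by (PE3), $a+b=b+e$, so $b\le a+b$, while $a\le a+b$ is clear).

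Next I would verify the Ore conditions that allow forming the group of fractions. Given $a,b\in E_0$, applying (PE3) to the defined sum $a+b$ produces $e\in E_0$ with $a+b=b+e$, which is exactly the right Ore condition $(a+E_0)\cap(b+E_0)\neq\emptyset$; symmetrically, writing $b+a=d+b$ yields the left Ore condition. Being a cancellative monoid satisfying both Ore conditions, $E_0$ embeds through the identity map into its group of fractions
$$
G:=\{-a+b: a,b\in E_0\},
$$
on which the group operation restricts to $+$ on $E_0$.

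It then remains to endow $G$ with a compatible order having $E_0$ as positive cone. I would take $E_0$ as the candidate cone and set $x\le y$ iff $-x+y\in E_0$. Antisymmetry follows from conicity, i.e. $E_0\cap(-E_0)=\{0\}$, and $G^+=E_0$ is immediate from $x=0$. Two-sided translation invariance amounts to normality of the cone, $g+E_0-g=E_0$ for all $g\in G$; since the subgroup $\{g\in G: g+E_0-g=E_0\}$ contains every $a\in E_0$ (this being precisely the identity $a+E_0=E_0+a$ coming from normality of the ideal $E_0$, where all the sums involved are defined) and $E_0$ generates $G$, this subgroup is all of $G$. Thus $G$ is a po-group, and $G=E_0-E_0$ shows it is directed; moreover the order induced on $E_0\subseteq G$ coincides with the one inherited from $E$, since $E_0$ is a downward closed ideal. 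For uniqueness, any directed po-group $G'$ with $(G')^+=E_0$ satisfies $G'=E_0-E_0$ and its group law restricts to $+$ on $E_0$, so the universal property of the group of fractions extends the identity on $E_0$ to a group isomorphism $G\to G'$ preserving the cone, hence an isomorphism of po-groups.

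The main obstacle I expect is not the Ore condition---which falls directly out of (PE3)---but checking that the abstract (noncommutative) group of fractions actually carries a genuine po-group order whose positive cone is \emph{exactly} $E_0$; concretely, transporting the conicity of the monoid and the two-sided normality of the ideal $E_0$ in $E$ into antisymmetry and two-sided translation invariance of $\le$ on $G$, and confirming that the intrinsic order of the monoid $E_0$ agrees with both the order it inherits from $E$ and the one it acquires inside $G$.
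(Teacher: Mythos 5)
Your proof is correct and follows essentially the same route as the paper: both realize $E_0$ as the positive cone of a po-group by checking that $(E_0,+,0)$ is a cancellative, conical monoid with $E_0+E_0=E_0$ whose normality (inherited from $E_0$ being a normal ideal) makes the induced order two-sided translation invariant. The only difference is that the paper invokes the Birkhoff theorem \cite[Thm II.4]{Fuc} as a black box at exactly this point, whereas you unpack it by hand via the Ore conditions (which indeed fall out of (PE3)) and the group of fractions; your verifications of cancellativity, conicity, normality of the cone, and uniqueness up to isomorphism are all sound.
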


\begin{proof}
By Proposition \ref{le:3.5}(1), $E_0$ is a maximal ideal of $E,$ and $E_0+E_0$ is defined, so that $E_0+E_0=E_0.$ Hence, $(E_{0};+,0)$ is a semigroup.
For any $x,y\in E_{0},$ the equation $x+y=0,$ implies that $x=y=0.$
For any $x,y,z\in E_{0},$ the equation $x+y=x+z$ implies that $y=z$,
and equation $y+x=z+x$ implies that $y=z$. Then $(E_{0};+,0)$  is a
cancellative semigroup satisfying the conditions of the Birkhoff Theorem, \cite[Thm II.4]{Fuc}, which guarantees that $E_{0}$ is the positive
cone of a unique (up to isomorphism) po-group $G$. Without loss of generality, we can assume that $G$ is generated by the positive cone $E_0,$ so that $G$ is directed.
\end{proof}

We say that a pseudo effect algebra $E$ enjoys the $1$-{\it divisibility property} if, given integer $n\ge 1,$ there is an element $a_n\in E$ such that $na_n =1.$ We see that $a_n$ is a cyclic element of order $n.$  We recall that $a_n$ is not necessarily unique, and we denote $a_n=\frac{1}{n}1$ if $a_n$ is unique. For example by \cite[p. 16]{Gla}, there is a unital $\ell$-group $(G,u)$ such that in $\Gamma(G,u)$ there are two different elements $a\ne b$ with $2a=u= 2b.$

Let $E=\Gamma(G,u)$ be an interval pseudo effect algebra. We say that $E$ enjoys the {\it strong}  $1$-{\it divisibility property} if, given integer $n \ge 1$, there is an element $a_n \in C(G)\cap E$ such that $na_n=1.$ We see that $a_n$ is a strong cyclic element of order $n.$ If $G$ is e.g. torsion-free, then due to Lemma \ref{le:4.2}, $a_n$ is unique.

We say that a pseudo effect algebra $E$ enjoys
{\it unique  extraction of roots of $1$} if $a,b \in E$ and $n a, nb$ exist in $E$, and $n a=1= n b$, then $a= b.$  Then every $\Gamma(\mathbb H \lex G,(1,0)) $ enjoys unique  extraction of roots of $1$ for any $n\ge 1$
and for any torsion-free directed po-group $G$.  Indeed, let $k(s,g) = (1,0)=k(t,h)$ for some $s,t \in [0,1]_\mathbb H,$ $g,h \in G$, $k\ge 1.$ Then $ks=1=kt$ which yields $s=t >0$, and $kg=0=kh$ implies $g=0=h.$

We say that  a nonempty subset $A$ of a pseudo effect algebra $E$ is (i) {\it downwards directed} if given $a_1,a_2\in A,$ there is an element $a\in A$ such that $a \le a_1,a_2;$ (ii) {\it upwards directed} if given $a_1,a_2\in A,$ there is an element $b\in A$ such that $b \ge a_1,a_2;$ and (iii) {\it directed} if it is both upwards and downwards directed.

\begin{proposition}\label{pr:4.4}
{\rm (1)} Let $(E_t: t \in [0,1]_\mathbb H)$ be an ordered $\mathbb H$-decomposition of a pseudo effect algebra $E.$ Then $E_0$ and $E_1$ are directed.

{\rm (2)} If $(E_0,E_{1/n},\ldots,E_{n/n})$ is an ordered $\frac{1}{n}\mathbb Z$-decomposition of a pseudo effect algebra $E$ satisfying \RDP$_0,$  then every $E_i$ is directed, $i=0,1,\ldots,n.$
\end{proposition}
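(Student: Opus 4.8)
For part~(1) the plan is to treat $E_0$ and $E_1$ directly from the definitions. The set $E_0=\Infinit(E)$ is downward directed trivially, since $0\in E_0$ and $0\le z$ for every $z\in E$. For upward directedness, given $x,y\in E_0$ the sum $x+y$ is defined and lies in $E_0$ (as $E_0+E_0=E_0$ by Theorem~\ref{th:3.4}(ii)), and $x\le x+y$, $y\le x+y$ because $x+y$ has both of the forms $x+(\cdot)$ and $(\cdot)+y$; hence $x+y\in E_0$ is a common upper bound. Dually, $E_1$ is upward directed because $1\in E_1$ dominates everything. For downward directedness of $E_1$ I use the negations: if $x,y\in E_1$ then $x^-,y^-\in E_1^-=E_0$, so by the upward directedness of $E_0$ just proved there is $w\in E_0$ with $x^-,y^-\le w$; applying $\,{}^\sim$ and the order-reversal rule $a\le b\Rightarrow b^\sim\le a^\sim$ together with $(x^-)^\sim=x$ gives $w^\sim\le x,y$ with $w^\sim\in E_0^\sim=E_1$.

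For part~(2) I first reduce upward directedness to downward directedness of the complementary slice. Since $E_{i/n}^\sim=E_{(n-i)/n}$, if every slice is downward directed then, given $x,y\in E_{i/n}$, a common lower bound $w$ of $x^\sim,y^\sim$ in $E_{(n-i)/n}$ yields the common upper bound $w^-\in E_{i/n}$ of $x,y$ by the same order-reversal argument. Thus it suffices to prove that every $E_{i/n}$ is downward directed. Next I reduce the downward case to $E_{1/n}$: the slices $E_0$ and $E_{n/n}=E_1$ are covered by part~(1), and for $1\le i\le n-1$ I fix any $w\in E_{(i-1)/n}$, which by orderedness satisfies $w\le x,y$; writing $x=w+a$ and $y=w+b$ forces $a,b\in E_{1/n}$, and a common lower bound $c\in E_{1/n}$ of $a,b$ then produces the common lower bound $w+c\in E_{(i-1)/n}+E_{1/n}=E_{i/n}$ of $x,y$.

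The crux is therefore that $E_{1/n}$ is downward directed, and this is where \RDP$_0$ enters. Given $x,y\in E_{1/n}$, the key observation is that it suffices to produce an infinitesimal $c\in E_0$ with $x\le y+c$: then \RDP$_0$ applied to $x\le y+c$ gives $y_1\le y$ and $c_1\le c$ with $x=y_1+c_1$, and since $E_0$ is an ideal we get $c_1\in E_0$, whence $y_1\in E_{1/n}$ and $y_1\le x,y$ is the desired common lower bound. The existence of such a $c$ expresses that $x$ and $y$, lying in the same slice, differ only by an infinitesimal, and I would derive it from the directed po-group $G$ of Proposition~\ref{pr:4.3} (with $G^+=E_0$), equivalently from the congruence induced by the normal ideal $E_0$.

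The hard part will be exactly this last step. A naive attempt to bypass it by applying \RDP$_0$ to $x\le 1=y+y^\sim$ does produce a decomposition $x=p+q$ with $p\le y$ and $q\le y^\sim$, and it succeeds when $p\in E_{1/n}$ (so that $p$ is already a common lower bound), but it can fail in the degenerate case $p\in E_0$, $q\in E_{1/n}$, where the slice-mass escapes into the $y^\sim$-factor; attempts to repair this by re-applying \RDP$_0$ to the leftover pieces turn out to be circular, because producing $c\in E_0$ with $x\le y+c$ is itself equivalent to the directedness being sought. Consequently the essential input is the group and ideal structure carried by $E_0$ rather than \RDP$_0$ alone, and establishing the existence of the infinitesimal $c$ is the main obstacle.
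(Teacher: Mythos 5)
Part (1) of your proposal is correct and is essentially the paper's own argument: $E_0$ is trivially downwards directed, $a,b\le a+b\in E_0$ gives upwards directedness, and $E_1=E_0^-$ transfers both properties by order reversal of the negations. Nothing to add there.

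For part (2) the paper gives no argument at all---it simply cites \cite[Prop 5.12]{DXY}---so the comparison is between your attempted direct proof and that external reference. Your two reductions are sound: upwards directedness of $E_{i/n}$ follows from downwards directedness of $E_{(n-i)/n}=E_{i/n}^{\sim}$ via $w\mapsto w^-$, and downwards directedness of $E_{i/n}$ for $1\le i\le n-1$ reduces to that of $E_{1/n}$ by peeling off a fixed $w\in E_{(i-1)/n}$ (using $E_{(i-1)/n}\leqslant E_{i/n}$ and $E_{(i-1)/n}+E_{1/n}=E_{i/n}$). But the base case is exactly where your proof stops: you never establish the existence of $c\in E_0$ with $x\le y+c$ for $x,y\in E_{1/n}$, and without it \RDP$_0$ has nothing to act on. This is a genuine gap, not a routine verification. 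Your suggested source for the missing fact---the po-group $G$ of Proposition \ref{pr:4.3}---cannot supply it: that group is built from the cancellative semigroup $(E_0;+,0)$ alone, so it only encodes differences of infinitesimals; it says nothing about how two elements of the higher slice $E_{1/n}$ compare modulo $E_0$, and since only \RDP$_0$ (not \RDP$_1$) is assumed there is no ambient unital po-group in which to form $-y+x$ and bound it above by a positive infinitesimal. Your own case analysis shows why the obvious substitute fails: applying \RDP$_0$ to $x\le y+y^{\sim}$, or to $x\le z$ with $z\in E_{2/n}$ written as $z=y+e$, always leaves open the degenerate split in which the $E_{1/n}$-mass lands in the second summand, and iterating reproduces the same dichotomy. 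So the proposal settles part (1) completely but leaves the essential step of part (2) unproved; to close it one must either import \cite[Prop 5.12]{DXY} as the paper does or find a genuinely different use of \RDP$_0$ on the slice $E_{1/n}$.
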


\begin{proof}
(1) It is clear that $E_0$ is downwards directed. Let $a,b \in E_0.$ Since $E_0=\Infinit(E),$ $a+b$ is defined in $E$ and $a+b \in E_0.$ Then $a,b \le a+b.$ Since $E_1=E_0^-,$ we see that $E_1$ is directed.

(2) It follows from \cite[Prop 5.12]{DXY}.
\end{proof}

We recall that if we set $E=\Gamma(\mathbb H \lex G,(1,0))$ for some po-group $G,$ then $E_t$ for $t \in [0,1]\setminus\{0,1\}$ is neither downwards nor upwards directed whenever $G$ is not directed, and in general, $E_t$ is directed iff $G$ is directed.

Inspiring by this, we say that an $\mathbb H$-decomposition $(E_t: t\in [0,1]_\mathbb H)$ of $E$ has the {\it directness property} if every $E_t$ is directed.

\section{Representation of Strong $\mathbb Q$-perfect Pseudo Effect Algebras}

In this section, we define a strong $\mathbb Q$-perfect pseudo effect algebra and we show when it is an interval in the lexicographic product $\mathbb Q \lex G.$ In addition, we derive a categorical equivalence of the category of strong $\mathbb Q$-perfect pseudo effect algebras with the category of directed torsion-free po-groups.

\begin{definition}\label{de:5.1}
{\rm We say that a pseudo effect algebra $E$ with RDP$_1$ is  {\it strong} $\mathbb Q$-{\it perfect} if
\begin{enumerate}
\item[(i)] $E$ enjoys the strong $1$-divisibility property,

\item[(ii)] $E$ has an ordered $\mathbb Q$-decomposition having the directness property,

\item[(iii)] the unital po-group $(G,u)$ such that $E = \Gamma(G,u)$  is torsion free.
\end{enumerate}
}
\end{definition}

In what follows, we present a representation of strong $\mathbb Q$-perfect pseudo effect algebras. We start with a preparatory statement.

\begin{proposition}\label{pr:5.2}
Let $G$ be a directed torsion-free  po-group with {\rm RDP}$_1$. Then  the pseudo effect algebra $$
\mathcal Q(G):=\Gamma(\mathbb Q \lex G,(1,0)) \eqno(6.1)
$$
is a strong $\mathbb Q$-perfect pseudo effect algebra.
\end{proposition}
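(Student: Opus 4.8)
The plan is to verify the three defining conditions of Definition \ref{de:5.1} directly for $\mathcal Q(G) = \Gamma(\mathbb Q \lex G, (1,0))$. First I would establish that $\mathcal Q(G)$ is a pseudo effect algebra satisfying RDP$_1$. Since $G$ is directed and torsion-free with RDP$_1$, Theorem \ref{th:2.5} (with $G$ directed, the equivalence of (i) and (ii')) gives that $\mathbb Q \lex G$ satisfies RDP$_1$, and hence its interval $\Gamma(\mathbb Q \lex G, (1,0))$ is a pseudo effect algebra with RDP$_1$. This also identifies the ambient unital po-group in condition (iii) as $(G_0,u) = (\mathbb Q \lex G, (1,0))$.

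Next I would check each of the three conditions in turn. For the strong $1$-divisibility property (i), for each integer $n \ge 1$ the element $a_n := (1/n, 0)$ lies in $\mathcal Q(G)$ and satisfies $n a_n = (1,0) = u$; moreover $(1/n,0)$ commutes with every element of $\mathbb Q \lex G$ because the $\mathbb Q$-coordinate is central (scalars commute) and the $G$-coordinate is $0$, so $(1/n,0) \in C(\mathbb Q \lex G)$, making each $a_n$ a strong cyclic element of order $n$. For the torsion-freeness condition (iii), I would invoke the remark in the excerpt that a po-group $G$ is torsion-free if and only if $\mathbb H \lex G$ is torsion-free; applied with $\mathbb H = \mathbb Q$ this shows $\mathbb Q \lex G$ is torsion-free, as required.

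For the ordered $\mathbb Q$-decomposition with the directness property (ii), I would exhibit the natural slicing: set $E_0 := \{(0,g): g \in G^+\}$, $E_1 := \{(1,-g): g \in G^+\}$, and $E_t := \{(t,g): g \in G\}$ for $t \in [0,1]_{\mathbb Q}$ with $0 < t < 1$. This parallels the $\mathbb R$-decomposition example given after the definition of $\mathbb H$-decompositions in Section 4. I would verify directly that the $E_t$ are disjoint, cover $\mathcal Q(G)$, and satisfy the defining clauses (a) and (b) of an $\mathbb H$-decomposition: clause (b) follows because addition in the lexicographic product adds $\mathbb Q$-coordinates, and clause (a) is the computation $E_t^- = E_t^\sim = E_{1-t}$, which is the negation in $\Gamma$. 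For orderedness, by Theorem \ref{th:3.4} it suffices to check that $E_s + E_t$ exists whenever $s + t < 1$, which is immediate from the lexicographic addition; in fact $E_s + E_t = E_{s+t}$. The directness property then reduces to the directedness of each slice, which holds precisely because $G$ is directed (as noted in the paragraph preceding this proposition, $E_t$ is directed iff $G$ is directed).

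The main obstacle I anticipate is bookkeeping rather than conceptual difficulty: one must be careful that the slices $E_0$ and $E_1$ are the \emph{positive-cone} and \emph{negated-cone} pieces (so that they are genuine subsets of $\Gamma$) rather than full $G$-copies, and that the verification of clause (a), $E_t^- = E_{1-t} = E_t^\sim$, uses both the left and right negations correctly in the noncommutative setting. All of the real analytic content has already been front-loaded into Theorem \ref{th:2.5} and the torsion-free transfer remark, so once those are cited the remaining work is a routine (if slightly tedious) check that the explicit decomposition satisfies every axiom and that each required element is where it should be.
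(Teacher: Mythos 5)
Your proposal is correct and follows essentially the same route as the paper: both invoke Theorem \ref{th:2.5} for RDP$_1$, exhibit the slicing $E_0=\{(0,g):g\in G^+\}$, $E_t=\{(t,g):g\in G\}$, $E_1=\{(1,-g):g\in G^+\}$ as the ordered $\mathbb Q$-decomposition, take $a_n=(1/n,0)$ as the strong cyclic elements, and use directedness of $G$ for the directness property. The paper's proof is merely terser, leaving the axiom-by-axiom verification and the torsion-freeness transfer implicit.
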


\begin{proof}
By Theorem \ref{th:2.5}, $\mathcal Q(G)$ has RDP$_1$, it has  an ordered $\mathbb Q$-decomposition  $(E_t: t \in [0,1]_\mathbb Q),$ where $E_0=\{(0,g): g \in G^+\},$ $E_1=\{(1,-g): g \in G^+\},$ and $E_t=\{(t,g): g \in G\},$ for $t\in [0,1]_\mathbb Q \setminus\{0,1\},$  and each element $a_n :=(1/n,0)$ is a strong cyclic element of order $n.$ Thus $E$ enjoys the strong $1$-divisibility property, and $G$ being directed, every $E_t$ is directed.
\end{proof}

\begin{theorem}\label{th:5.3}
Let $E$ be a strong $\mathbb Q$-perfect pseudo effect algebra with {\rm RDP}$_1.$  Then there is a unique (up to isomorphism) torsion-free directed po-group $G$ with {\rm RDP}$_1$ such that $E \cong \Gamma(\mathbb Q \lex G,(1,0)).$
\end{theorem}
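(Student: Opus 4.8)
The plan is to construct the group $G$ from the bottom slice $E_0$ and then show that $E$ is isomorphic to $\Gamma(\mathbb{Q}\lex G,(1,0))$. By Theorem \ref{th:2.1'}, since $E$ has RDP$_1$, there is a unique unital po-group $(H,u)$ with RDP$_1$ such that $E\cong\Gamma(H,u)$; by Definition \ref{de:5.1}(iii) this $H$ is torsion-free. First I would set $G$ to be the directed po-group determined by the positive cone $E_0$ via Proposition \ref{pr:4.3}, so that $G^+ = E_0 = \Infinit(E) = \Rad(E)$ (using Lemma \ref{le:3.5}(1)). Because $E$ has the directness property, every $E_t$ is directed, and in particular $G$ is directed; the RDP$_1$ of $E$ should descend to $G$, since $E_0$ is a normal ideal and RDP$_1$ decompositions of elements of $E_0$ lie inside $E_0$. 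Torsion-freeness of $G$ follows from the torsion-freeness of $H$, as $G$ sits inside $H$ as the convex subgroup generated by $E_0$.

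Next I would build the isomorphism $E \cong \Gamma(\mathbb{Q}\lex G,(1,0))$. By the strong $1$-divisibility property (Definition \ref{de:5.1}(i)), for each $n\ge 1$ there is a strong cyclic element $a_n$ of order $n$, and by Lemma \ref{le:4.2} (torsion-freeness) it is unique, so we may write $a_n = \frac{1}{n}1$. These strong cyclic elements sit in $C(H)\cap E$ and, since $na_n=1$, their states are forced: the unique $\mathbb{Q}$-valued state $s$ (obtained from the ordered $\mathbb{Q}$-decomposition via Theorem \ref{th:3.1}) satisfies $s(a_n)=1/n$. The idea is that the $a_n$ together with $E_0$ generate a copy of $\mathbb{Q}\lex G$ inside $H$: given $x\in E_t$ with $t = i/n \in [0,1]_\mathbb{Q}$, one writes $x$ relative to the commuting strong cyclic element $a_n$ and lands in the appropriate slice $\{(t,g): g\in G\}$. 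I would define $\varphi: E \to \Gamma(\mathbb{Q}\lex G,(1,0))$ slicewise by $\varphi(x) = (t, x - i\,a_n)$ for $x\in E_t$, $t=i/n$, interpreting the difference in the ambient group $H$; one checks this is well-defined (independent of the representation $i/n$ using uniqueness of roots and that $a_{kn}$ refines $a_n$), a bijection onto the interval, and additive, hence a pseudo effect algebra isomorphism.

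Uniqueness up to isomorphism would follow from Proposition \ref{pr:4.3}: any torsion-free directed po-group $G'$ with RDP$_1$ and $E\cong\Gamma(\mathbb{Q}\lex G',(1,0))$ must have $(G')^+ \cong E_0$, and the po-group with a given positive cone is unique up to isomorphism (Birkhoff's theorem, as invoked in Proposition \ref{pr:4.3}); directedness pins down $G'$ as the group generated by that cone. Finally, Proposition \ref{pr:5.2} shows that the target is indeed strong $\mathbb{Q}$-perfect, confirming the correspondence is onto the right class.

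The main obstacle I expect is verifying that $\varphi$ is well-defined and additive across slices, i.e. controlling how elements of different slices $E_t$ relate through the strong cyclic elements $a_n$. The delicate point is that $\mathbb{Q}$ has elements of every denominator, so one must simultaneously reconcile the chosen strong cyclic elements $a_n$ for all $n$; the key facts making this work are that each $a_n$ is \emph{central} in $H$ (so the slice decomposition respects a genuine group-theoretic splitting $\mathbb{Q}\lex G$ rather than a twisted one) and that uniqueness of roots of $1$ forces compatibility $m\,a_{mn} = a_n$. Establishing that the group generated by $E_0$ and all the $a_n$ is exactly $\mathbb{Q}\lex G$ with the lexicographic order — in particular that the order on $H$ restricts to the lexicographic order — is where the torsion-freeness, centrality, and the ordered decomposition hypotheses all must be combined carefully.
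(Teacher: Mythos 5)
Your proposal follows essentially the same route as the paper: extract $G$ from $E_0$ via Proposition \ref{pr:4.3}, use the unique strong cyclic elements $a_n=\frac{1}{n}1$ from Lemma \ref{le:4.2}, and define the slicewise map $\phi(x)=(\frac{i}{n},\,x-\frac{i}{n}1)$ with the difference taken in $H$. The one point where your emphasis differs slightly: the paper's crucial use of the directness property is not merely that $G$ is directed, but that each slice $E_{\frac{i}{n}}$ being directed yields an $x_0\le x,\frac{i}{n}1$ in $E_{\frac{i}{n}}$, whence $x-\frac{i}{n}1=(x-x_0)-(\frac{i}{n}1-x_0)\in E_0-E_0=G$, which is exactly what makes $\phi$ land in $\Gamma(\mathbb Q\lex G,(1,0))$ rather than merely in $H$.
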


\begin{proof}
Since $E$ has RDP$_1,$ due to \cite[Thm 5.7]{DvVe2}, there is a unique unital (up to isomorphism of unital po-groups)  po-group $(H,u)$ with RDP$_1$ such that $E = \Gamma(H,u).$ Assume $(E_t: t \in [0,1]_\mathbb Q)$ is an ordered $\mathbb Q$-decomposition of $E$ with the directness property; due to Theorem \ref{th:3.2}, it is unique. By Theorem \ref{th:3.1}, there is a unique $\mathbb Q$-valued state $s.$ By Proposition \ref{pr:4.3}, there is a unique directed po-group $G$ such that $E_0=G^+.$ Since $H$ is torsion-free and with RDP$_1$, $G=G^+-G^+$ is also torsion-free, directed and with RDP$_1.$ For any integer $n\ge 1,$ there is a unique element $a_n=\frac{1}{n}1.$ It is clear that, for any integer $1\le m\le n,$ the element $m\frac{1}{n}1$ is defined and we denote it by $\frac{m}{n}1.$  Define $\mathcal Q(G)$ by (4.1) and define a mapping $\phi: E \to \mathcal Q(G)$ by

$$
\phi(x) =\textstyle{(\frac{i}{n}, x-\frac{i}{n}1)}\eqno(6.2)
$$
whenever $ x\in E_{\frac{i}{n}}.$ We note that ``$-$" in the right-hand side of the formula means the group subtraction in the po-group $H.$ Since $E_{\frac{i}{n}}$ is directed, there is an element $x_0 \in E_{\frac{i}{n}}$ such that $x_0 \le x, \frac{i}{n}1.$ Then $x-x_0 = x \backslash x_0 \in E_0$ and $\frac{i}{n}1 - x_0=\frac{i}{n}1 \backslash x_0 \in E_0$ because $s(x \backslash x_0)=0= s(\frac{i}{n}1 \backslash x_0).$ Therefore, $x - \frac{i}{n}1 = (x-x_0)-(\frac{i}{n}1-x_0) \in G,$ which means that $\phi$ is a well-defined mapping.

\vspace{2mm}

{\it Claim:} {\it The mapping $\phi$ is an injective and surjective homomorphism of pseudo effect algebras.}

\vspace{2mm}

We have $\phi(0)=(0,0)$ and $\phi(1)=(1,0).$ Let $x \in E_\frac{i}{n}.$ Then $x^- \in E_\frac{n-i}{n},$ and $\phi(x^-) =(\frac{n-i}{n}, x^- - \frac{n-i}{n}1) = (1,0)-(\frac{i}{n},x - \frac{i}{n}1)=\phi(x)^-.$ In an analogous way, $\phi(x^\sim)=\phi(x)^\sim.$

Now let $x,y \in E$ and let $x+y$ be defined in $E.$ Then $x\in E_\frac{i}{n_1}$ and $y \in E_\frac{j}{n_2}.$ Without loss of generality, we can assume that $n_1=n_2 =n.$
Since $x\le y^-,$ we have $\frac{i}{n} \le \frac{n-j}{n}.$ So that $\phi(x) \le \phi(y^-)=\phi(y)^-$ which means  $\phi(x)+\phi(y)$ is defined in $\mathcal Q(G).$ Then $\phi(x+y) = (\frac{i+j}{n}, x+y - \frac{i+j}{n}1) =
(\frac{i+j}{n}, x+y -(\frac{i}{n}1 + \frac{j}{n}1))= (\frac{i}{n},x-\frac{i}{n}1) + (\frac{j}{n},y- \frac{j}{n}1)=\phi(x)+\phi(y).
$

Assume $\phi(x)\le \phi(y)$ for some $x\in E_\frac{i}{n}$ and $y \in E_\frac{j}{n}.$ Then $(\frac{i}{n},x-\frac{i}{n}1)\le (\frac{j}{n}, y - \frac{i}{n}1).$ If $i=j,$ then $x-\frac{i}{n}1\le y-\frac{i}{n}1$ so that $x\le y.$  If $i<j,$ then $x \in E_\frac{i}{n}$ and $y\in E_\frac{j}{n}$ so that $x<y.$  Therefore, $\phi$ is injective.

To prove that $\phi$ is surjective, assume two cases: (i) Take $g \in G^+=E_0.$  Then $\phi(g)=(0,g).$ In addition $g^- \in E_1$ so that $\phi(g^-) = \phi(g)^-= (0,g)^- = (1,0)-(0,g)=(1,-g).$ (ii) Let $g \in G$ and $\frac{i}{n}$ with $1<i<n$ be given. Then $g = g_1-g_2,$ where $g_1,g_2 \in G^+=E_0.$ Since $\frac{i}{n}1 \in E_\frac{i}{n},$ $g_1 + \frac{i}{n}1$ exists in $E$ and it belongs to $E_\frac{i}{n},$ and $g_2 \le g_1+\frac{i}{n}1$ which yields $(g_1+\frac{i}{n}1)- g_2 = (g_1+\frac{i}{n}1)\backslash g_2 \in E_\frac{i}{n}.$  Hence, $g+\frac{i}{n}1 = (g_1 + \frac{i}{n}1)\backslash g_2 \in E_\frac{i}{n}$ which entails $\phi(g+\frac{i}{n}1)=(\frac{i}{n},g).$

Consequently, $E$ is isomorphic to $\mathcal Q(G).$

If $E \cong \Gamma(\mathbb Q\lex G',(1,0)),$ then $G$ and $G'$ are isomorphic po-groups.
\end{proof}

We note that the notion of the directness property of a pseudo effect algebra $E$ was introduce in order to guarantee the existence of the element $x- \frac{i}{n}1$ in the po-group $G,$ and consequently, to show that the mapping $\phi$ in (6.2) is well-defined.  If $E$ satisfies RDP$_2,$ then by \cite{DvVe1,DvVe2}, $E$ is in fact a pseudo MV-algebra, and by  \cite{151}, there is a unital $\ell$-group $(H,u)$ such that $E =\Gamma(H,u).$ At any rate, any $\ell$-group is torsion-free and directed.  Hence, let for this case assume that a pseudo effect algebra $E$ with RDP$_2$ is strong $\mathbb Q$-perfect if it enjoys  the strong $1$-divisibility property and it has an ordered $\mathbb Q$-decomposition. We assert that $\phi$ is well-defined. Indeed, $E_0$ in this case is in fact a positive cone of an $\ell$-group $G.$  The unique state $s$ on $E$ corresponding to the given ordered $\mathbb H$-decomposition is an extremal state, therefore, by \cite[Prop 4.7]{156}, $s(a\wedge b)=\min\{s(a),s(b)\}$ for all $a,b \in E,$ and the same is true for its extension $\hat s$ onto $(H,u)$ and all $a,b \in H.$  Let $x \in E_\frac{i}{n}.$ For the element $x - \frac{i}{n}1 \in H,$ we define $(x-\frac{i}{n}1)^+:= (x-\frac{i}{n}1)\vee 0 = (x \vee \frac{i}{n}1)-\frac{i}{n}1 \in E_0$ while $s((x \vee \frac{i}{n}1)-\frac{i}{n}1)=s(x \vee \frac{i}{n}1)-s(\frac{i}{n}1)= \frac{i}{n}-\frac{i}{n}=0$ and similarly $(x -\frac{i}{n}1)^- := -((x-\frac{i}{n}1)\wedge 0) = \frac{i}{n}1 - (x\wedge \frac{i}{n}1) \in E_0.$ This implies that $x-\frac{i}{n}1= (x-\frac{i}{n}1)^+ - (x-\frac{i}{n}1)^-\in G.$ It proves the following result.

\begin{theorem}\label{th:5.4}
Let $E$ be a strong $\mathbb Q$-perfect pseudo effect algebra in the sense  of the last note and  with \RDP$_2.$  Then there is a unique $\ell$-group $G$ such that $E\cong \Gamma(\mathbb Q\lex G,(1,0)).$
\end{theorem}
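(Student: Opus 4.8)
The plan is to follow the template already established by Theorem \ref{th:5.3}, exploiting the fact that an \RDP$_2$ pseudo effect algebra is a pseudo MV-algebra and hence automatically torsion-free and directed, so that most of the hypotheses of Definition \ref{de:5.1} are satisfied for free. Concretely, since $E$ satisfies \RDP$_2$, by \cite{DvVe1, DvVe2} it is a pseudo MV-algebra, and by \cite{151} there is a unital $\ell$-group $(H,u)$ with $E = \Gamma(H,u)$. I would first observe that the only genuinely new verification needed, compared with the \RDP$_1$ case, is that the map $\phi$ of (6.2) is well-defined \emph{without} assuming the directness property as a separate axiom — and the paragraph preceding the statement essentially supplies this argument, so my proof would assemble those observations into a clean deduction.

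First I would fix the unique ordered $\mathbb Q$-decomposition $(E_t : t \in [0,1]_{\mathbb Q})$ guaranteed by the strong $1$-divisibility property and Theorem \ref{th:3.2}, and let $s$ be the associated unique $\mathbb Q$-valued state from Theorem \ref{th:3.1}. By Proposition \ref{pr:4.3}, $E_0$ is the positive cone of a unique directed po-group $G$, and since $H$ is an $\ell$-group, $G = E_0 - E_0$ inherits being torsion-free, directed and (as a convex subobject cut out by $\Ker s$) an $\ell$-group itself. Next I would record that $s$ is \emph{extremal}: since $M := \Gamma(\mathbb Q, 1)$ is linearly ordered it carries a unique state, which is extremal, and the correspondence in Proposition \ref{pr:4.1}-style reasoning transfers extremality to $s$. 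Extremality is the key tool, because by \cite[Prop 4.7]{156} it gives $s(a \wedge b) = \min\{s(a), s(b)\}$, and the same for the extension $\hat s$ to $(H,u)$.

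The main technical step — and the place I expect the real (if modest) work to sit — is showing $\phi$ is well-defined, i.e. that for $x \in E_{i/n}$ the group element $x - \frac{i}{n}1$ lands in $G = E_0 - E_0$. Here I would use lattice operations in $H$ rather than the directness property: set $(x - \frac{i}{n}1)^+ = (x \vee \frac{i}{n}1) - \frac{i}{n}1$ and $(x-\frac{i}{n}1)^- = \frac{i}{n}1 - (x \wedge \frac{i}{n}1)$, and compute via the state that $s(x \vee \frac{i}{n}1) = s(x \wedge \frac{i}{n}1) = \frac{i}{n}$ using $s(a \wedge b) = \min\{s(a),s(b)\}$ and the dual identity for $\vee$. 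Hence both the positive and negative parts lie in $\Ker s \cap E = E_0 = G^+$, so their difference $x - \frac{i}{n}1$ lies in $G$.

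Once $\phi$ is well-defined, the verification that it is a bijective homomorphism of pseudo effect algebras is \emph{verbatim} the argument of Theorem \ref{th:5.3}: preservation of $0$, $1$, the negations $x^-, x^\sim$, and additivity all use only the additive structure and the decomposition, none of which changes under \RDP$_2$. I would therefore state this part by reference to the proof of Theorem \ref{th:5.3} rather than repeating the calculations. Finally, uniqueness of $G$ up to isomorphism follows exactly as before: if $E \cong \Gamma(\mathbb Q \lex G', (1,0))$, then reading off $E_0$ recovers $(G')^+$, forcing $G \cong G'$, and $G$ being an $\ell$-group is automatic. Thus the only conceptual novelty is replacing the directness hypothesis by the lattice/extremal-state computation, and I expect no further obstacles.
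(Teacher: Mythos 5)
Your proposal is correct and follows the paper's own route exactly: the paper's proof of Theorem \ref{th:5.4} consists precisely of the remark preceding the statement (which establishes well-definedness of $\phi$ via the extremal-state identity $s(a\wedge b)=\min\{s(a),s(b)\}$ from \cite[Prop 4.7]{156} and the splitting $x-\frac{i}{n}1=(x-\frac{i}{n}1)^+-(x-\frac{i}{n}1)^-$ with both parts in $E_0$) together with a reference back to the proof of Theorem \ref{th:5.3}. Your only additions --- justifying why $s$ is extremal and why $G$ is itself an $\ell$-group as the convex subgroup determined by $\Ker(\hat s)$ --- are harmless elaborations of points the paper leaves implicit.
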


\begin{proof}
It follows from the last note  and the proof of Theorem \ref{th:5.3}.
\end{proof}


{\bf Problem 2.} Can we relax the notion of directness in the definition of a strong $\mathbb Q$-perfect pseudo effect algebra in order to prove Theorem \ref{th:5.3}~?


In what follows, we show the categorical equivalence of the category of strong $\mathbb Q$-perfect pseudo effect algebras with the category of directed torsion-free po-groups with RDP$_1.$

Let $\mathcal {SQPPEA}$ be the category of strong $\mathbb Q$-perfect pseudo effect algebras whose objects are strong $\mathbb Q$-perfect pseudo effect algebras and morphisms are homomorphisms of pseudo effect algebras. Similarly, let $\mathcal G$ be the category whose objects are directed torsion-free po-groups  with RDP$_1,$ and morphisms are homomorphisms of unital po-groups.

Define a functor $\mathcal E_\mathbb Q: \mathcal G \to  \mathcal {SQPPEA}$ as follows: for $G\in \mathcal G,$ let
$$
\mathcal E_\mathbb Q(G):= \Gamma(\mathbb Q\lex G,(1,0))
$$
and if $h: G \to G_1$ is a po-group homomorphism, then

\begin{center} $\mathcal E_\mathbb Q(h)(\frac{i}{n},g)= (\frac{i}{n}, h(g)), \quad (\frac{i}{n},g) \in \Gamma(\mathbb Q\lex G,(1,0)).$
\end{center}

By Proposition \ref{pr:5.2}, $\mathcal E_\mathbb Q$ is a well-defined functor.

\begin{proposition}\label{pr:5.5}
$\mathcal E_\mathbb Q$ is a faithful and full
functor from the category ${\mathcal G}$ of directed torsion-free po-groups with \RDP$_1$  into the
category $\mathcal{SQPPEA}$ of strong $\mathbb Q$-perfect pseudo effect algebras.
\end{proposition}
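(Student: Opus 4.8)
The plan is to verify directly that the functor $\mathcal{E}_\mathbb{Q}$ is both faithful (injective on each hom-set) and full (surjective onto each hom-set), using the explicit description of objects and morphisms given above. For faithfulness, suppose $h, h' \colon G \to G_1$ are po-group homomorphisms in $\mathcal{G}$ with $\mathcal{E}_\mathbb{Q}(h) = \mathcal{E}_\mathbb{Q}(h')$. Restricting to elements of the form $(0,g)$ with $g \in G^+$ gives $(0,h(g)) = (0,h'(g))$, hence $h(g) = h'(g)$ for all $g \in G^+$. Since $G$ is directed, $G = G^+ - G^+$, so $h$ and $h'$ agree on all of $G$; thus $\mathcal{E}_\mathbb{Q}$ is faithful. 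This step is routine and relies only on directedness.

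For fullness, the key is to start from an arbitrary pseudo effect algebra homomorphism $f \colon \mathcal{E}_\mathbb{Q}(G) \to \mathcal{E}_\mathbb{Q}(G_1)$ and recover a po-group homomorphism $h \colon G \to G_1$ with $\mathcal{E}_\mathbb{Q}(h) = f$. First I would observe that $f$ must respect the $\mathbb{Q}$-decompositions: since $E_0 = \{(0,g): g \in G^+\} = \Infinit(\mathcal{E}_\mathbb{Q}(G))$ is the unique maximal ideal (by Lemma \ref{le:3.5}) and is characterized purely algebraically as $\Infinit(\mathcal{E}_\mathbb{Q}(G))$, the homomorphism $f$ carries infinitesimals to infinitesimals, so $f(E_0) \subseteq E_0'$ where $E_0'$ is the bottom slice of $\mathcal{E}_\mathbb{Q}(G_1)$. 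Moreover $f$ must preserve the unique $\mathbb{Q}$-valued state, since that state is recovered from the decomposition, forcing $f$ to map the slice indexed by $t$ into the slice indexed by $t$; in particular $f$ fixes the strong cyclic elements, $f(1/n, 0) = (1/n, 0)$, because these are the unique cyclic elements of order $n$ by Lemma \ref{le:4.2} (using torsion-freeness of $G_1$).

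The construction of $h$ then proceeds by setting $h(g) := $ the second coordinate of $f(0,g)$ for $g \in G^+$, and extending to $G = G^+ - G^+$ by $h(g_1 - g_2) = h(g_1) - h(g_2)$. I would check this is well-defined and additive on the positive cone using that $f$ is additive where sums are defined and that $E_0 + E_0 = E_0$, then verify the extension to all of $G$ is an unambiguous group homomorphism (the standard argument that an additive, order-preserving map on a directed positive cone extends uniquely to the generated group). Order preservation of $h$ follows because $f$ preserves $\le$ and the order on $E_0$ is the restriction of the order on $G$. Finally I would confirm $\mathcal{E}_\mathbb{Q}(h) = f$ on a general element $(\frac{i}{n}, g)$: writing $(\frac{i}{n}, g) = (0, g_1) + \big(\textstyle\frac{i}{n}, -g_2\big)$ appropriately via the strong cyclic element $(\frac{i}{n},0)$, and using that $f$ preserves both the cyclic elements and sums, one computes $f(\frac{i}{n}, g) = (\frac{i}{n}, h(g))$.

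The main obstacle I anticipate is the well-definedness and additivity check for $h$ on the full group $G$, since $f$ is only a partial-operation homomorphism (defined where the sum exists in the effect algebra) while $h$ must be a total group homomorphism. The technical content is matching partial additivity in $\mathcal{E}_\mathbb{Q}(G)$ against total additivity in $G$; this is handled precisely by the identity $E_0 + E_0 = E_0$ together with directedness, which lets every relation $g = g_1 - g_2$ in $G$ be reflected by a genuine sum relation inside the slice $E_0$. Once that bookkeeping is in place, the verification that $h$ preserves order and that $\mathcal{E}_\mathbb{Q}(h)$ reproduces $f$ is a direct slice-by-slice computation.
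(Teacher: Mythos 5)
Your proposal is correct and follows essentially the same route as the paper: faithfulness by restricting to the elements $(0,g)$, and fullness by defining $h$ on the positive cone from $f$ restricted to the bottom slice $E_0$ and extending by differences using directedness. You actually supply some justifications the paper leaves implicit (that $f$ preserves the slices via infinitesimals, the unique $\mathbb Q$-valued state, and the uniqueness of strong cyclic elements from Lemma \ref{le:4.2}, and the final slice-by-slice check that $\mathcal E_\mathbb Q(h)=f$), so no gap.
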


\begin{proof}
Let $h_1$ and $h_2$ be two morphisms from $G$
into $G'$ such that $\mathcal E_\mathbb Q(h_1) = \mathcal E_\mathbb Q(h_2)$. Then
$(0,h_1(g)) = (0,h_2(g))$ for any $g \in G^+$, consequently $h_1 =
h_2.$

To prove that $\mathcal E_\mathbb Q$ is a full  functor, suppose that $f$ is a morphism from a strong $\mathbb Q$-perfect pseudo effect algebra
$\Gamma(\mathbb Q\lex G, (1,0))$ into another one $\Gamma(\mathbb
Q\lex G_1, (1,0)).$  Then $f(0,g)
= (0,g')$ for a unique $g' \in G'^+$. Define a mapping $h:\ G^+ \to
G'^+$ by $h(g) = g'$ iff $f(0,g) =(0,g').$ Then $h(g_1+g_2) = h(g_1)
+ h(g_2)$ if $g_1,g_2 \in G^+.$
Assume now that $g \in G$ is arbitrary. Then $g = g_1 -
g_2 = g_1'-g_2'$, where $g_1, g_2, g_1', g_2' \in G^+,$ which gives
$g_1 +g_2' = g_1' + g_2$, i.e., $h(g) = h(g_1) - h(g_2)$ is a
well-defined extension of $h$ from $G^+$ onto $G$.

Let $0\le g_1 \le g_2.$ Then $(0,g_1)\le (0,g_2),$
which means  $h$ is a homomorphism of po-groups, and $\mathcal E_\mathbb Q(h)
= f$ as desired.
\end{proof}

We recall that by a {\it universal group}  for a
pseudo effect algebra $E$  we mean a pair $(G,\gamma)$ consisting of a po-group $G$ and a $G$-valued measure $\gamma :\, E\to G^+$
(i.e., $\gamma (a+b) = \gamma(a) + \gamma(b)$ whenever $a+b$ is
defined in $E$) such that the following conditions hold: (i)
$\gamma(E)$ generates $G$, and  (ii) if $H$ is a group and
$\phi:\, E\to H$ is an $H$-valued measure, then there is a group
homomorphism $\phi^*:{ G}\to H$ such that $\phi ={\phi}^*\circ
\gamma$.

Due to \cite[Thm 7.2]{DvVe2}, every pseudo algebra with RDP$_1$ admits a universal group, which is unique up to isomorphism, and $\phi^*$ is unique. The universal group for $E = \Gamma(G,u)$ is $(G,id)$ where $id$ is the
embedding of $E$ into $G,$ and $G$ satisfies RDP$_1.$

Let $\mathcal A$ and $\mathcal B$ be two categories and let $f:\mathcal A \to \mathcal B$ be a morphism. Suppose that $g,h$ be two morphisms from $\mathcal B$ to $\mathcal A$ such that $g\circ f = id_\mathcal A$ and $f\circ h = id_\mathcal B,$ then $g$ is a {\it left-adjoint} of $f$ and $h$ is a {\it right-adjoint} of $f.$

\begin{proposition}\label{pr:5.6}
The functor $\mathcal  E_\mathbb Q$ from the
category ${\mathcal  G}$ into the category $\mathcal{SQPPEA}$ has  a
left-adjoint.
\end{proposition}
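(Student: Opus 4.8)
The plan is to exhibit an explicit functor $\mathcal P\colon \mathcal{SQPPEA}\to \mathcal G$ and to check that it is a left-adjoint of $\mathcal E_\mathbb Q$ in the sense introduced just above, i.e. that $\mathcal P\circ \mathcal E_\mathbb Q = id_{\mathcal G}$. The whole point is that $\mathcal E_\mathbb Q$ is already known to be fully faithful by Proposition \ref{pr:5.5} and essentially surjective by Theorem \ref{th:5.3}, so it is in effect an equivalence of categories and the wanted functor is nothing but a quasi-inverse; the real work is to write it down concretely and to verify the retraction identity.

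I would define $\mathcal P$ as follows. On objects, for a strong $\mathbb Q$-perfect pseudo effect algebra $E$ with \RDP$_1$ let $\mathcal P(E):=G$ be the directed torsion-free po-group with \RDP$_1$ furnished by Theorem \ref{th:5.3}; concretely, by Proposition \ref{pr:4.3} one takes $G$ to be the unique directed po-group whose positive cone is $E_0=\Infinit(E)$, and Theorem \ref{th:5.3} guarantees $G$ is torsion-free, directed and has \RDP$_1,$ so $\mathcal P(E)\in \mathcal G$. On morphisms, given $f\colon E\to E'$ I would first note that $f$ preserves infinitesimals: if $nx$ is defined for all $n$ then $nf(x)=f(nx)$ is defined for all $n$, so $f(E_0)\subseteq E_0'.$ Hence $f$ restricts to a monoid homomorphism $f_0\colon E_0\to E_0'$ (recall $E_0+E_0=E_0$), i.e. a map $G^+\to G'^+,$ which extends in the usual cancellative way to a po-group homomorphism $\mathcal P(f)\colon G\to G'$ by $g_1-g_2\mapsto f_0(g_1)-f_0(g_2)$ --- exactly the extension already performed in the fullness argument of Proposition \ref{pr:5.5}. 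Functoriality ($\mathcal P(id)=id,$ $\mathcal P(f'\circ f)=\mathcal P(f')\circ \mathcal P(f)$) is then immediate, since restriction to $E_0$ and the subsequent extension both respect composition.

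Finally I would verify the retraction identity. For $G\in \mathcal G$ the algebra $\mathcal E_\mathbb Q(G)=\Gamma(\mathbb Q\lex G,(1,0))$ has $E_0=\{(0,g):g\in G^+\},$ whose generated po-group is $\{(0,g):g\in G\};$ under the canonical identification $(0,g)\leftrightarrow g$ this is precisely $G,$ so $\mathcal P(\mathcal E_\mathbb Q(G))=G.$ Likewise, for a po-group homomorphism $h$ the restriction of $\mathcal E_\mathbb Q(h)$ to $E_0$ is $(0,g)\mapsto(0,h(g)),$ which is $h$ under the same identification, whence $\mathcal P(\mathcal E_\mathbb Q(h))=h.$ Thus $\mathcal P\circ\mathcal E_\mathbb Q=id_{\mathcal G},$ and $\mathcal P$ is the desired left-adjoint. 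I expect the only genuinely delicate points to be two pieces of bookkeeping rather than hard mathematics: that $\mathcal P(E)$ really lands in $\mathcal G$ (torsion-freeness, directedness and \RDP$_1$), which is supplied verbatim by Theorem \ref{th:5.3}, and that the composite $\mathcal P\circ\mathcal E_\mathbb Q$ equals the identity on the nose rather than merely up to isomorphism, which is secured by working with the concrete positive cone $E_0$ and the identification $(0,g)\leftrightarrow g$ above.
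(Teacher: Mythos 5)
Your proof is correct, and under the paper's own (nonstandard) definition of left-adjoint --- a functor $\mathcal P$ with $\mathcal P\circ \mathcal E_{\mathbb Q}=id_{\mathcal G}$ --- it verifies exactly what is asked; but it takes a different route from the paper's. The paper proves Proposition \ref{pr:5.6} by exhibiting, for each strong $\mathbb Q$-perfect $E$, a \emph{universal arrow} $(G,\gamma)$ into $\mathcal E_{\mathbb Q}$: it invokes Theorem \ref{th:5.3} to get $E\cong\Gamma(\mathbb Q\lex G,(1,0))$, then Theorem \ref{th:2.7} together with the universal-group theorem of \cite{DvVe2} to see that $(\mathbb Q\lex G,\gamma)$ is the universal group of $E$, and only afterwards (Proposition \ref{pr:5.7}) assembles these universal arrows into the functor $\mathcal P$. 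You instead build $\mathcal P$ directly --- on objects via the po-group generated by $E_0=\Infinit(E)$ (Proposition \ref{pr:4.3} plus Theorem \ref{th:5.3} for membership in $\mathcal G$), on morphisms via restriction to $E_0$ and the cancellative extension already used in Proposition \ref{pr:5.5} --- and check the retraction identity. The common mathematical core is the same (Theorem \ref{th:5.3} and the cone $E_0$), but the two arguments certify different things: the paper's version establishes the genuine universal property (uniqueness of the mediating morphism $f^*$), which is what feeds into the Mac Lane criterion in Theorem \ref{th:5.8}, while yours establishes only the strict retraction, which suffices for the definition stated in the paper; if one wants the standard adjunction, the universal property does follow from your opening observation that $\mathcal E_{\mathbb Q}$ is fully faithful (Proposition \ref{pr:5.5}) and essentially surjective (Theorem \ref{th:5.3}), though you leave that step implicit. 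The only caveat worth recording is that $\mathcal P(\mathcal E_{\mathbb Q}(G))$ is literally the subgroup $\{(0,g):g\in G\}$ of $\mathbb Q\lex G$, so the composite is the identity only after the canonical identification $(0,g)\leftrightarrow g$ (equivalently, up to a natural isomorphism); you flag this yourself and it is harmless, since Birkhoff's theorem in Proposition \ref{pr:4.3} in any case determines $G$ only up to isomorphism.
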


\begin{proof}
We show that given a strong $\mathbb Q$-perfect  pseudo effect algebra $E$ with an ordered $\mathbb Q$-decomposition $(E_t: t \in [0,1]_\mathbb Q),$
there is a universal arrow $(G,f)$, i.e., $G$ is an object in $\mathcal G$ and $f$ is a homomorphism from
$E$ into ${\mathcal  E}_\mathbb Q(G)$ such that if $G'$ is an object from ${\mathcal G}$ and $f'$ is a homomorphism from $E$ into ${\mathcal  E}_\mathbb Q(G')$, then
there exists a unique morphism $f^*:\, G \to G'$ such that ${\mathcal
E}(f^*)\circ f = f'$.

By Theorem \ref{th:5.3}, there is a unique directed torsion-free po-group $G$ with RDP$_1$ such that $E \cong \Gamma(\mathbb Q \lex G,(1,0)).$ By Theorem \ref{th:2.7}, $\mathbb Q \lex G$ is a directed po-group with RDP$_1,$ so that by \cite[Thm 7.2]{DvVe2}, $(\mathbb Q \lex G, \gamma)$ is a universal group for $E,$ where $\gamma: E \to  \Gamma(\mathbb Q \lex G, (1,0))$ is defined by $\gamma(a) = (\frac{i}{n},a -\frac{i}{n}1),$ if $a \in E_\frac{i}{n}.$ By the proof of Theorem \ref{th:5.3}, $\gamma$ is an isomorphism.
\end{proof}

Define a mapping ${\mathcal  P}: \mathcal  {SQPPEA} \to {\mathcal  G}$
via ${\mathcal  P}(E) := G$ whenever $(\mathbb Q\lex  G, f)$ is a
universal group for $E$. It is clear that if $f_0$ is a morphism
from $E$ into $F$, then $f_0$ can be uniquely extended to a
homomorphism ${\mathcal  P}(f_0)$ from $G$ into $G_1$, where $(\mathbb Q
\lex G_1, f_1)$ is a universal group for the strong $\mathbb Q$-perfect
pseudo effect algebra $F$.

\begin{proposition}\label{pr:5.7}
The mapping ${\mathcal  P}$ is a functor from the
category $\mathcal {SQPPEA}$ into the category ${\mathcal  G}$ which is a
left-adjoint of the functor ${\mathcal  E}_\mathbb Q.$
\end{proposition}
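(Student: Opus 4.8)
The plan is to repackage the universal-arrow construction of Proposition \ref{pr:5.6} into functorial language: first I would check that $\mathcal P$ is a well-defined functor $\mathcal{SQPPEA}\to\mathcal G$, and then verify the single identity that the paper's definition of a left-adjoint demands, namely $\mathcal P\circ\mathcal E_\mathbb Q=\mathrm{id}_{\mathcal G}$, with the arrows $\gamma_E$ from Proposition \ref{pr:5.6} serving as the (natural) unit.

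For the object level, given a strong $\mathbb Q$-perfect pseudo effect algebra $E$, Theorem \ref{th:5.3} supplies a unique (up to isomorphism) directed torsion-free po-group $G$ with RDP$_1$ and $E\cong\Gamma(\mathbb Q\lex G,(1,0))$; by Theorem \ref{th:2.7} the po-group $\mathbb Q\lex G$ is directed with RDP$_1$, so \cite[Thm 7.2]{DvVe2} makes $(\mathbb Q\lex G,\gamma)$ the universal group of $E$ and hence $\mathcal P(E)=G$ is determined by $E$. Invariance is transparent from $G^{+}=E_0=\Infinit(E)=\Rad(E)$ (Lemma \ref{le:3.5}), so isomorphic inputs give isomorphic outputs.

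For morphisms I would argue that any pseudo effect algebra homomorphism $f_0:E\to F$ carries $\Infinit(E)$ into $\Infinit(F)$ (if $na$ is defined for all $n$, so is $nf_0(a)=f_0(na)$), i.e. $f_0(E_0)\subseteq F_0$; since $E_0=G^{+}$, $F_0=G_1^{+}$ and $G=G^{+}-G^{+}$ is directed, the restriction $f_0|_{E_0}$ extends uniquely to a po-group homomorphism $\mathcal P(f_0):G\to G_1$ exactly as in the extension argument of Proposition \ref{pr:5.5}. I expect the main obstacle to be reconciling this restriction–extension description with the universal-group description: for the homomorphism $(f_0)^{*}:\mathbb Q\lex G\to\mathbb Q\lex G_1$ induced by the universal property of $\gamma_E$ applied to the measure $\gamma_F\circ f_0$, I must show $(f_0)^{*}=\mathcal E_\mathbb Q(\mathcal P(f_0))$. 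The decisive point is that $f_0$ sends each strong cyclic element $\frac{1}{n}1$ of $E$ to a cyclic element of order $n$ in $F$, which by torsion-freeness and Lemma \ref{le:4.2} must coincide with $\frac{1}{n}1$ of $F$; hence $(f_0)^{*}$ fixes $\mathbb Q\times\{0\}$ and acts as $f_0$ on $\{0\}\times G^{+}$, forcing the product form $(f_0)^{*}(\frac{i}{n},g)=(\frac{i}{n},\mathcal P(f_0)(g))$. Preservation of identities and composition by $\mathcal P$ then follows immediately from the uniqueness clause of the universal property, and the naturality square $\mathcal E_\mathbb Q(\mathcal P(f_0))\circ\gamma_E=\gamma_F\circ f_0$ records exactly $(f_0)^{*}\circ\gamma_E=\gamma_F\circ f_0$.

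Finally, for the adjunction it suffices to establish $\mathcal P\circ\mathcal E_\mathbb Q=\mathrm{id}_{\mathcal G}$. On objects, for $G\in\mathcal G$ the universal group of $\mathcal E_\mathbb Q(G)=\Gamma(\mathbb Q\lex G,(1,0))$ is $(\mathbb Q\lex G,\mathrm{id})$ by \cite[Thm 7.2]{DvVe2}, whence $\mathcal P(\mathcal E_\mathbb Q(G))=G$; on a morphism $h:G\to G_1$, the computation of the previous paragraph gives $\mathcal P(\mathcal E_\mathbb Q(h))=h$, since $\mathcal E_\mathbb Q(h)$ acts as $h$ on the $G$-part $\{0\}\times G^{+}$ and $\mathcal P$ merely reads off that part. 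Combined with the universal arrows $\gamma_E$ of Proposition \ref{pr:5.6} as unit, this shows $\mathcal P$ is the left-adjoint of $\mathcal E_\mathbb Q$, as claimed.
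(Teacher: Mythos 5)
Your proposal is correct and follows essentially the same route as the paper, whose entire proof is the one-line remark that the claim ``follows from the construction of the universal group'': you have simply unpacked that construction, verifying $\mathcal P\circ\mathcal E_{\mathbb Q}=\mathrm{id}_{\mathcal G}$ as required by the paper's definition of left-adjoint and using the uniqueness clause of the universal property together with Lemma \ref{le:4.2} to pin down the action on cyclic elements. The added detail (in particular the check that a homomorphism maps $\Infinit(E)$ into $\Infinit(F)$ and fixes the elements $\frac{i}{n}1$) is a faithful elaboration of what the paper leaves implicit, not a different argument.
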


\begin{proof}
 It follows from the construction of the
universal group.
\end{proof}

Now we present  the main result on categorical equivalence of the
category of strong $\mathbb Q$-perfect pseudo effect algebras and the category  $\mathcal G.$

\begin{theorem}\label{th:5.8}
The functor ${\mathcal  E}_\mathbb Q$ defines a categorical
equivalence of the category ${\mathcal  G}$   and the
category $\mathcal {SQPPEA}$ of strong $\mathbb Q$-perfect pseudo effect algebras.

In addition, suppose that $h:\ {\mathcal  E}_\mathbb H(G) \to {\mathcal  E}_\mathbb H(H)$ is a
homomorphism of pseudo effect algebras, then there is a unique homomorphism
$f:\ G \to H$ of unital po-groups such that $h = {\mathcal  E}_\mathbb H(f)$,
and
\begin{enumerate}
\item[{\rm (i)}] if $h$ is surjective, so is $f$;
 \item[{\rm (ii)}] if $h$ is  injective, so is $f$.
\end{enumerate}
\end{theorem}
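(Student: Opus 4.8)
The plan is to establish the categorical equivalence by invoking the standard two-part criterion: a functor is an equivalence if and only if it is full, faithful, and essentially surjective (i.e.\ every object of the target category is isomorphic to one in the image). Faithfulness and fullness of $\mathcal E_\mathbb Q$ are already in hand from Proposition \ref{pr:5.5}, so the first step is merely to cite that result. For essential surjectivity, I would take an arbitrary strong $\mathbb Q$-perfect pseudo effect algebra $E$ and apply Theorem \ref{th:5.3}, which produces a directed torsion-free po-group $G$ with \RDP$_1$ — that is, an object of $\mathcal G$ — together with an isomorphism $E \cong \Gamma(\mathbb Q\lex G,(1,0)) = \mathcal E_\mathbb Q(G)$. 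Together these three properties give the categorical equivalence by the classical characterization.

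For the second part of the statement, I would reconstruct the homomorphism $f$ and verify the lifting and the preservation properties directly. Given $h:\mathcal E_\mathbb Q(G)\to\mathcal E_\mathbb Q(H)$, note that $h$ maps the slice $\{(0,g):g\in G^+\}$ into the corresponding slice of $\mathcal E_\mathbb Q(H)$, since the $\mathbb Q$-valued state is unique (Theorem \ref{th:3.1}, Proposition \ref{pr:4.1}) and must be preserved by any homomorphism; hence $h(0,g)=(0,f(g))$ for a well-defined additive map $f\colon G^+\to H^+$. Exactly as in the fullness argument of Proposition \ref{pr:5.5}, $f$ extends uniquely to a po-group homomorphism $G=G^+-G^+\to H=H^+-H^+$ with $h=\mathcal E_\mathbb Q(f)$, and uniqueness follows because $f$ is already determined on the generating cone $G^+=E_0$. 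I would then note carefully that the excerpt's displayed statement writes $\mathcal E_\mathbb H$ whereas the relevant functor here is $\mathcal E_\mathbb Q$; I would treat this as a typographical slip and work with $\mathcal E_\mathbb Q$ throughout.

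Finally I would verify the two implications. For (i), if $h$ is surjective, then every $(0,g')$ with $g'\in H^+$ lies in the image, so $f\colon G^+\to H^+$ is onto, and since a po-group is generated by its positive cone, $f\colon G\to H$ is surjective. For (ii), if $h$ is injective, then $h(0,g)=h(0,g'')$ forces $(0,g)=(0,g'')$, hence $f(g)=f(g'')$ implies $g=g''$ on $G^+$; injectivity on $G^+$ extends to $G$ because $f(g_1-g_2)=f(g_1)-f(g_2)$ and the representation $g=g_1-g_2$ transports faithfully under a cone-injective additive map.

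The routine parts are the reconstruction of $f$ and the two implications; the genuine content sits in essential surjectivity, and there the hard work has already been delegated to Theorem \ref{th:5.3}. Consequently the main obstacle is not a calculation but a bookkeeping subtlety: I must make sure that the isomorphism supplied by Theorem \ref{th:5.3} is compatible with the functor $\mathcal E_\mathbb Q$ on the nose — that the $\gamma$ of Proposition \ref{pr:5.6} and the map $\phi$ of Theorem \ref{th:5.3} agree — so that the equivalence, its inverse functor $\mathcal P$, and the unit/counit isomorphisms all cohere. Verifying this naturality, rather than any single homomorphism property, is where I would spend the most care.
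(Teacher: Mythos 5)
Your proposal is correct and follows essentially the same route as the paper: the paper also invokes Mac Lane's criterion (full, faithful, essentially surjective), takes fullness and faithfulness from Proposition \ref{pr:5.5}, and settles essential surjectivity by producing the universal group $(\mathbb Q\lex G,\gamma)$ via Theorem \ref{th:5.3}. The paper's own proof is in fact terser than yours --- it leaves the ``In addition'' lifting statement and claims (i), (ii) implicit as a rerun of the fullness argument, which is exactly what you supply, and your reading of $\mathcal E_{\mathbb H}$ as a typographical slip for $\mathcal E_{\mathbb Q}$ is the intended one.
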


\begin{proof}
According to \cite[Thm IV.4.1]{MaL}, it is
necessary to show that, for a strong $\mathbb Q$-perfect pseudo effect algebra $E$, there is an
object $G$ in ${\mathcal  G}$ such that ${\mathcal  E}_\mathbb Q(G)$ is isomorphic to $E$. To show that, we take a universal group $(\mathbb Q
\lex G, f)$. Then ${\mathcal  E}_\mathbb Q(G)$ and $E$ are isomorphic.
\end{proof}

\section{Strong $\mathbb H$-perfect Pseudo Effect Algebras and Their Representation}

In this section, we extend the results of the previous section to the most general case, namely for strong $\mathbb H$-perfect pseudo effect algebras. Here we use Theorem \ref{th:2.8}.


We say that an $\mathbb H$-decomposition $(E_t: t\in [0,1]_\mathbb H)$ of $E$ has the {\it cyclic property} if there is a system of elements $(c_t\in E: t \in [0,1]_\mathbb H)$ such that (i) $c_t \in E_t$ for any $t \in [0,1]_\mathbb H,$ (ii) if $s+t \le 1,$ $s,t \in [0,1]_\mathbb H,$ then $c_s+c_t=c_{s+t},$ and (iii) $c_1=1.$ Properties: (a) $c_0=0;$ indeed, by (ii) we have $c_0+c_0=c_0,$ so that $c_0=0.$ (b) If $t=1/n,$ then $c_\frac{1}{n}$ is a cyclic element of order $n.$

Let $E =\Gamma(G,u).$ An $\mathbb H$-decomposition $(E_t: t\in [0,1]_\mathbb H)$ of $E$ has the {\it strong cyclic property} if there is a system of elements $(c_t\in E: t \in [0,1]_\mathbb H)$ such that (i) $c_t \in E_t\cap C(G)$ for any $t \in [0,1]_\mathbb H,$ (ii) if $s+t \le 1,$ $s,t \in [0,1]_\mathbb H,$ then $c_s+c_t=c_{s+t},$ and (iii) $c_1=1.$ We recall that if $t=1/n,$ $c_\frac{1}{n}$ is a strong cyclic element of order $n.$

For example, let $E=\Gamma(\mathbb H \lex G,(1,0))$ and $E_t =\{(t,g): (t,g)\in E\}$ for $t \in [0,1]_\mathbb H.$ If we set $c_t =(t,0),$ $t \in [0,1]_\mathbb H,$ then the system $(c_t: t \in [0,1]_\mathbb H)$ satisfies (i)---(iii) of the strong cyclic property, and $(E_t: t \in [0,1]_\mathbb H)$ is an $\mathbb H$-decomposition of $E$ with the strong cyclic property.

Finally, we say that an interval pseudo effect algebra $E$ has the $\mathbb H$-{\it strong cyclic property} if there is a decomposition $(E_t: t \in [0,1]_\mathbb H)$ of $E$ with the strong cyclic property.


\begin{proposition}\label{pr:7.1}
An interval pseudo effect algebra $E =\Gamma(G,u),$ where $G$ is torsion-free, has the $\mathbb Q$-strong cyclic property if and only if $E$ has the strong $1$-divisibility property.
\end{proposition}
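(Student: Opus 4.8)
The plan is to prove the two implications separately; the forward implication is immediate from the definitions, while the converse requires manufacturing a coherent central chain out of the divisibility witnesses, and it is here that torsion-freeness is used.

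First I would treat the direction ``$\mathbb Q$-strong cyclic property $\Rightarrow$ strong $1$-divisibility''. Assume $E=\Gamma(G,u)$ carries a $\mathbb Q$-decomposition $(E_t: t \in [0,1]_\mathbb Q)$ with the strong cyclic property, witnessed by a system $(c_t\in E: t\in [0,1]_\mathbb Q)$. Fix an integer $n\ge 1$ and put $a_n:=c_{1/n}$. By condition (i) of the strong cyclic property we have $a_n\in C(G)\cap E$, and iterating condition (ii) $n$ times yields $na_n=c_{1/n}+\cdots+c_{1/n}=c_{n/n}=c_1=1$ by (iii). Thus $a_n$ is a strong cyclic element of order $n$, and since $n$ was arbitrary, $E$ enjoys the strong $1$-divisibility property.

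For the converse, assume strong $1$-divisibility. Because $G$ is torsion-free, Lemma \ref{le:4.2} supplies, for each $n\ge 1$, a \emph{unique} strong cyclic element $a_n=\frac{1}{n}1\in C(G)\cap E$ with $na_n=1$. The core of the argument is the compatibility relation $a_n=m\,a_{nm}$, valid for all $n,m\ge 1$: the element $m\,a_{nm}$ lies in $C(G)\cap E$ (it is a nonnegative multiple of a central element, and $0\le m\,a_{nm}\le nm\,a_{nm}=1$) and satisfies $n(m\,a_{nm})=(nm)a_{nm}=1$, so it is a strong cyclic element of order $n$; uniqueness via Lemma \ref{le:4.2} then forces $m\,a_{nm}=a_n$. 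With this in hand I would define $c_{i/n}:=i\,a_n$ for $0\le i\le n$ and check that the value depends only on the rational $t=i/n$: if $\frac{i}{n}=\frac{j}{m}$, i.e. $im=jn$, then using $a_n=m\,a_{nm}$ and $a_m=n\,a_{nm}$ we obtain $i\,a_n=(im)a_{nm}=(jn)a_{nm}=j\,a_m$. I expect this well-definedness step, together with the relation $a_n=m\,a_{nm}$ that powers it, to be the main obstacle, since it is precisely where the hypotheses (torsion-freeness, whence uniqueness of strong cyclic elements) are consumed.

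It then remains to verify properties (i)--(iii) and to exhibit the slices. Each $c_t=i\,a_n$ lies in $C(G)\cap E$ because $a_n$ is central and $0\le i\,a_n\le n\,a_n=1$; property (iii) holds since $a_1=u=1$ gives $c_1=1$ (and $c_0=0$); for (ii), writing $s=i/n$ and $t=j/m$ over the common denominator $nm$ gives $c_s+c_t=(im)a_{nm}+(jn)a_{nm}=(im+jn)a_{nm}=c_{s+t}$ whenever $s+t\le 1$, the sum being automatically commutative as $a_{nm}$ is central. Finally, to produce the decomposition I would pass to the state $s$ attached to the chain, determined by $s(c_t)=t$ (note $n\,s(a_n)=s(1)=1$ forces $s(a_n)=1/n$), and set $E_t:=s^{-1}(\{t\})$; by Theorem \ref{th:3.1} these slices constitute a $\mathbb Q$-decomposition, and $c_t\in E_t$ holds by construction, so the strong cyclic property is satisfied. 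The delicate point in this concluding step is to confirm that $s$ is genuinely $\mathbb Q$-valued so that the slices exhaust $E$, which is exactly what Theorem \ref{th:3.1} (together with uniqueness of the decomposition from Theorem \ref{th:3.2}) is designed to deliver.
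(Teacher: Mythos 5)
Your forward implication and your construction of the central chain $c_{i/n}=i\,a_n$ are correct, and in fact more careful than the paper's own argument: the paper simply asserts that the system $(\frac{i}{n}1)$ satisfies (i)--(iii) of the strong cyclic property, whereas you isolate the compatibility relation $a_n=m\,a_{nm}$ and derive it from the uniqueness of strong cyclic elements (Lemma \ref{le:4.2}), which is exactly where torsion-freeness is consumed. Up to that point there is nothing to object to.

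The gap is in your final step, where you must produce the $\mathbb Q$-decomposition itself. A state is by definition a map on all of $E$; prescribing $s(c_t)=t$ on the chain determines nothing beyond the chain (indeed \emph{every} state automatically satisfies $s(a_n)=1/n$), and Theorem \ref{th:3.1} converts a $\mathbb Q$-valued state into a decomposition only after you have exhibited such a state, i.e.\ a state with $s(E)=[0,1]_{\mathbb Q}$. Strong $1$-divisibility does not supply one: $E=\Gamma(\mathbb R,1)$ is torsion-free, Abelian and enjoys the strong $1$-divisibility property (take $a_n=1/n$), yet its unique state is the identity on $[0,1]$, so it admits no $\mathbb Q$-valued state and hence, by Theorem \ref{th:3.1}, no $\mathbb Q$-decomposition at all. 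This step therefore cannot be repaired from the stated hypotheses. The paper's proof avoids the issue only by tacitly presupposing the slices: it writes $c_{i/n}:=\frac{i}{n}1\in E_{i/n}$ without constructing the $E_{i/n}$, i.e.\ it reads the proposition as saying that a \emph{given} $\mathbb Q$-decomposition of $E$ has the strong cyclic property iff $E$ is strongly $1$-divisible. Under that reading your chain argument finishes the job (one only needs $i\,a_n\in E_{i/n}$, which follows by evaluating the unique $\mathbb Q$-valued state attached to the given decomposition via Theorems \ref{th:3.1} and \ref{th:3.2} on $i\,a_n$); under the literal reading, neither your argument nor the paper's establishes the converse.
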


\begin{proof}
If a $\mathbb Q$-decomposition $(E_\frac{i}{n}: \frac{i}{n} \in [0,1]_\mathbb Q)$ of $E$  has the strong cyclic property, there is a system of elements $(c_\frac{i}{n}: \frac{i}{n} \in [0,1]_\mathbb Q)$ satisfying (i)--(iii) of the strong cyclic property. Then every $c_\frac{1}{n}$ is a strong cyclic element of order $n$ which by Lemma \ref{le:4.2} means that it is a unique strong cyclic element of order $n.$ Hence, $c_\frac{1}{n}=\frac{1}{n}1.$

Conversely, let $E$ have the strong $1$-divisibility property, then the elements $c_\frac{i}{n}:= \frac{i}{n}1\in E_\frac{i}{n},$ and the system $(c_\frac{i}{n}: \frac{i}{n}\in [0,1]_\mathbb Q)$ satisfies the conditions (i)--(iii) of the strong cyclic property.
\end{proof}

Definition \ref{de:5.1} for strong $\mathbb Q$-perfect pseudo effect algebras will be change for strong $\mathbb H$-perfect pseudo effect algebras as follows.

\begin{definition}\label{de:7.2}
{\rm We say that a pseudo effect algebra $E$ with RDP$_1$ is  {\it strong} $\mathbb H$-{\it perfect} if
\begin{enumerate}

\item[(i)] $E$ has an ordered $\mathbb H$-decomposition having both the directness property and the strong cyclic property,

\item[(ii)] the unital po-group $(G,u)$ such that $E = \Gamma(G,u)$  is torsion free.
\end{enumerate}
}
\end{definition}

Proposition \ref{pr:5.2} has the following counter part.

\begin{proposition}\label{pr:7.3}
Let $G$ be a directed torsion-free  po-group with {\rm RDP}$_1$. Then  the pseudo effect algebra
$$
\mathcal E_\mathbb H(G):=\Gamma(\mathbb H \lex G,(1,0)) \eqno(7.1)
$$
is a strong $\mathbb H$-perfect pseudo effect algebra.
\end{proposition}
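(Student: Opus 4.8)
The plan is to verify that $\mathcal E_\mathbb H(G)=\Gamma(\mathbb H \lex G,(1,0))$ satisfies each clause of Definition \ref{de:7.2}, mirroring the argument of Proposition \ref{pr:5.2} but replacing the strong $1$-divisibility property by the strong cyclic property, since $\mathbb H$ need not be divisible. First I would record that $\mathbb H \lex G$ has RDP$_1$: by Theorem \ref{th:2.8}, since $G$ is directed and satisfies RDP$_1$, the po-group $\mathbb H \lex G$ satisfies RDP$_1$, and hence so does the interval pseudo effect algebra $\mathcal E_\mathbb H(G)$. Moreover $\mathbb H \lex G$ is torsion-free because $G$ is torsion-free (as recalled in the excerpt, $G$ is torsion-free iff $\mathbb H \lex G$ is), which gives clause (ii) of the definition with $(H,u)=(\mathbb H \lex G,(1,0))$.

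Next I would exhibit the ordered $\mathbb H$-decomposition. Setting
$$
E_0=\{(0,g): g \in G^+\},\quad E_1=\{(1,-g): g \in G^+\},\quad E_t=\{(t,g): g \in G\}\ (0<t<1),
$$
one checks these are nonempty, pairwise disjoint, and cover $\mathcal E_\mathbb H(G)$, and that $E_s+E_t=E_{s+t}$ whenever $s+t<1$. By Theorem \ref{th:3.4}, the existence of $E_s+E_t$ for $s+t<1$ is exactly what makes the decomposition ordered, so this step reduces to a short direct computation with the lexicographic order. The directness property holds because $G$ is directed: for fixed $t$, the second coordinates range over all of $G$ (or over $G^+$ for $t=0,1$), so each $E_t$ is directed precisely when $G$ is, as noted in the remark following Proposition \ref{pr:4.4}.

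For the strong cyclic property I would take the system $c_t:=(t,0)$ for $t \in [0,1]_\mathbb H$. Then $c_t \in E_t$, and since $(t,0)$ commutes with every $(r,g)\in \mathbb H \lex G$ — because $\mathbb H$ is central (indeed its scalar first-coordinate addition is commutative and independent of the $G$-part) — we have $c_t \in C(\mathbb H \lex G)$. The identity $c_s+c_t=(s+t,0)=c_{s+t}$ for $s+t\le 1$ and $c_1=(1,0)=1$ are immediate. This is exactly the system described in the example preceding Proposition \ref{pr:7.1}, so clause (i) is satisfied. Assembling these, $\mathcal E_\mathbb H(G)$ satisfies RDP$_1$ and has an ordered $\mathbb H$-decomposition with both the directness and strong cyclic properties over a torsion-free universal po-group, hence is strong $\mathbb H$-perfect.

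I do not expect a serious obstacle here, as every ingredient is either quoted (Theorem \ref{th:2.8} for RDP$_1$, the torsion-free transfer, Theorem \ref{th:3.4} for orderedness) or a routine verification. The only point requiring a moment's care is the centrality of $c_t=(t,0)$: one must confirm that the first coordinate contributes commutatively to the group operation of $\mathbb H \lex G$ regardless of the noncommutativity of $G$, which is clear since $\mathbb H$ is Abelian and the two coordinates add independently in the underlying direct-product group.
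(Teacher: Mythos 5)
Your proposal is correct and follows essentially the route the paper intends: the paper states Proposition \ref{pr:7.3} without an explicit proof, expecting the reader to adapt the proof of Proposition \ref{pr:5.2} by invoking Theorem \ref{th:2.8} for RDP$_1$, using the same slices $E_t$, and replacing strong $1$-divisibility by the strong cyclic system $c_t=(t,0)$ from the example preceding Proposition \ref{pr:7.1}, which is precisely what you do. Your explicit check that $(t,0)$ is central in $\mathbb H \lex G$ and that torsion-freeness transfers is a correct and slightly more detailed rendering of the same argument.
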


The representation theorem for strong $\mathbb H$-perfect pseudo effect algebras by (7.1) is the following result; compare it with Theorem \ref{th:5.3}.

\begin{theorem}\label{th:7.4}
Let $E$ be a strong $\mathbb H$-perfect pseudo effect algebra with {\rm RDP}$_1.$  Then there is a unique (up to isomorphism) torsion-free directed po-group $G$ with {\rm RDP}$_1$ such that $E \cong \Gamma(\mathbb H \lex G,(1,0)).$
\end{theorem}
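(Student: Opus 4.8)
The plan is to mirror the proof of Theorem~\ref{th:5.3} almost verbatim, replacing $\mathbb Q$ by the general subgroup $\mathbb H$ throughout, and to rely on the infrastructure already assembled in the earlier sections. First I would invoke \cite[Thm 5.7]{DvVe2} (equivalently Theorem~\ref{th:2.1'}) to obtain, from the $\RDP_1$ hypothesis on $E$, a unique unital po-group $(H,u)$ with $\RDP_1$ such that $E=\Gamma(H,u)$; by assumption this $H$ is torsion-free. Next, using that $E$ is strong $\mathbb H$-perfect, it carries an ordered $\mathbb H$-decomposition $(E_t\colon t\in[0,1]_\mathbb H)$ with the directness property and the strong cyclic property, and by Theorem~\ref{th:3.2} this decomposition is unique. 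Applying Proposition~\ref{pr:4.3} to the ordered decomposition yields a unique directed po-group $G$ with $G^+=E_0$; since $H$ is torsion-free and has $\RDP_1$, the subgroup $G=G^+-G^+$ inherits torsion-freeness, directedness and $\RDP_1$, so $G$ is an object of $\mathcal G$.

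The heart of the argument is constructing the isomorphism $\phi\colon E\to\Gamma(\mathbb H\lex G,(1,0))$. The strong cyclic property supplies a coherent system $(c_t\colon t\in[0,1]_\mathbb H)$ of strong cyclic elements with $c_s+c_t=c_{s+t}$ whenever $s+t\le 1$; this is exactly the data that in Theorem~\ref{th:5.3} was provided only for the dyadic-style elements $\frac{i}{n}1$, so it is what allows the definition to go through for an arbitrary index set $[0,1]_\mathbb H$. I would set
$$
\phi(x)=(t,\,x-c_t)\quad\text{whenever }x\in E_t,
$$
where the subtraction is taken in $H$. Well-definedness is established exactly as in Theorem~\ref{th:5.3}: since $E_t$ is directed (directness property), pick $x_0\in E_t$ with $x_0\le x,c_t$; then $x\ld x_0$ and $c_t\ld x_0$ lie in $E_0=G^+$ because the associated $\mathbb H$-valued state $s$ (from Theorem~\ref{th:3.1}) sends them to $0$, whence $x-c_t=(x-x_0)-(c_t-x_0)\in G$.

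The remaining verifications that $\phi$ is an injective, surjective homomorphism of pseudo effect algebras are then routine transcriptions of the corresponding passages in Theorem~\ref{th:5.3}, now using $c_t$ in place of $\frac{i}{n}1$ and the additivity $c_s+c_t=c_{s+t}$ in place of the identity $\frac{i}{n}1+\frac{j}{n}1=\frac{i+j}{n}1$; that $\phi(x^-)=\phi(x)^-$, that $\phi$ preserves $+$, and that $\phi$ is order-reflecting all follow mutatis mutandis, with surjectivity onto the slices $E_t$ for $0<t<1$ obtained from the decomposition $g=g_1-g_2$ with $g_1,g_2\in G^+$. I expect the only genuinely delicate point to be ensuring that the general (possibly dense) index set $\mathbb H$ does not disrupt the slice-by-slice bookkeeping, but this is precisely why the strong cyclic property was axiomatized as a full coherent family rather than a divisibility condition, so no new phenomenon arises. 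Finally, uniqueness of $G$ up to isomorphism follows because $\mathbb H\lex G$ is, by Theorem~\ref{th:2.8}, the universal group of $E$ (it is directed with $\RDP_1$), and the universal group is unique up to isomorphism by \cite[Thm 7.2]{DvVe2}; alternatively, if $E\cong\Gamma(\mathbb H\lex G',(1,0))$ then comparing positive cones of the zero-slices forces $G\cong G'$.
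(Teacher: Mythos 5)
Your proposal is correct and matches the paper's own proof essentially verbatim: the paper likewise reduces to the argument of Theorem~\ref{th:5.3}, obtains $G$ with $G^+=E_0$ via Proposition~\ref{pr:4.3}, and defines $\phi(x)=(t,x-c_t)$ for $x\in E_t$ using the strong cyclic family $(c_t)$ in place of the elements $\frac{i}{n}1$. No discrepancies to report.
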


\begin{proof}
The proof follows the basic steps of the proof of Theorem \ref{th:5.3}. Let $(E_t: t \in [0,1]_\mathbb H)$ be an $\mathbb H$-decomposition with the strong cyclic property with a given system of elements $(c_t: t \in [0,1]_\mathbb H)$ satisfying (i)--(iii) of the strong cyclic property.

Hence, we derive a directed po-group $G$ such that $G^+=E_0.$
Take the $\mathbb H$-strong cyclic pseudo effect algebra $\mathcal E_\mathbb H(G)$ defined by (7.1), and define a mapping $\phi: E \to \mathcal E_\mathbb H(G)$ by

$$
\phi(x):= (t, x - c_t)\eqno (7.2)
$$
whenever $x \in E_t$ for some $t \in [0,1]_\mathbb H,$ where $ x-c_t$ denotes the difference taken in the group $H.$ In the same way as in the proof of Theorem \ref{th:5.3}, we can prove that $\phi$ is a well-defined injective and surjective homomorphism of pseudo effect algebras.
\end{proof}

Finally, let $\mathcal {SPPEA}_\mathbb H$ be the category of strong $\mathbb H$-perfect pseudo effect algebras whose objects are strong $\mathbb H$-perfect pseudo effect algebras and morphisms are homomorphisms of pseudo effect algebras. Again, let $\mathcal G$ be the category whose objects are directed torsion-free po-groups  with RDP$_1,$ and morphisms are homomorphisms of unital po-groups.

Define a functor $\mathcal E_\mathbb H: \mathcal G \to  \mathcal {SPPEA}_\mathbb H$ as follows: for $G\in \mathcal G,$ let
$$
\mathcal E_\mathbb H(G):= \Gamma(\mathbb H\lex G,(1,0))
$$
and if $h: G \to G_1$ is a po-group homomorphism, then

\begin{center} $\mathcal E_\mathbb H(h)(t,g)= (t, h(g)), \quad (t,g) \in \Gamma(\mathbb H\lex G,(1,0)).$
\end{center}

We note that the $\mathcal {SQPPEA}= \mathcal {SPPEA}_\mathbb Q.$

Using the same ideas as for the categorical equivalence in Theorem \ref{th:5.8}, we can prove a new categorical equivalence.

\begin{theorem}\label{th:7.5}
The functor ${\mathcal  E}_\mathbb H$ defines a categorical
equivalence of the category ${\mathcal  G}$   and the
category $\mathcal {SPPEA}_\mathbb H$ of strong $\mathbb H$-perfect pseudo effect algebras.
\end{theorem}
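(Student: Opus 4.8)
The plan is to verify the three hypotheses of Mac Lane's criterion for a categorical equivalence \cite[Thm IV.4.1]{MaL}, exactly as in the proof of Theorem \ref{th:5.8}: that $\mathcal E_\mathbb H$ is faithful, full, and essentially surjective. Essential surjectivity is immediate from the representation Theorem \ref{th:7.4}: given a strong $\mathbb H$-perfect pseudo effect algebra $E$, that theorem produces a directed torsion-free po-group $G$ with \RDP$_1$ and an isomorphism $E\cong\Gamma(\mathbb H\lex G,(1,0))=\mathcal E_\mathbb H(G)$, so $G$ is the required preimage object. Faithfulness is handled just as in Proposition \ref{pr:5.5}: if $\mathcal E_\mathbb H(h_1)=\mathcal E_\mathbb H(h_2)$, then $(0,h_1(g))=(0,h_2(g))$ for every $g\in G^+$, whence $h_1=h_2$ on $G^+$ and therefore on $G=G^+-G^+$ by directedness.

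The real work is fullness, for which I would adapt Proposition \ref{pr:5.5} together with the well-definedness discussion in the proof of Theorem \ref{th:7.4}. Fix a homomorphism $f\colon\mathcal E_\mathbb H(G)\to\mathcal E_\mathbb H(G_1)$ of pseudo effect algebras. First I would show that $f$ respects the slice decomposition: by Proposition \ref{pr:4.1} (with $g_0=0$) each of these algebras carries a \emph{unique} state, namely the $\mathbb H$-valued state $s$ (resp. $s_1$) induced by the ordered $\mathbb H$-decomposition; since $s_1\circ f$ is again a state on $\mathcal E_\mathbb H(G)$, uniqueness forces $s_1\circ f=s$, so that $f(E_t)\subseteq E_t'$ for the slices $E_t'$ of $\mathcal E_\mathbb H(G_1)$ and in particular $f$ carries $E_0=\{(0,g):g\in G^+\}$ into $E_0'=\{(0,g'):g'\in G_1^+\}$. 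This lets me define $h\colon G^+\to G_1^+$ by $f(0,g)=(0,h(g))$ and extend it to a po-group homomorphism $h\colon G\to G_1$ via $h(g_1-g_2)=h(g_1)-h(g_2)$, exactly as in Proposition \ref{pr:5.5}.

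It then remains to prove $f=\mathcal E_\mathbb H(h)$, i.e. $f(t,g)=(t,h(g))$ for all $(t,g)$. Writing $(t,g)=(t,0)+(0,g)$ (a sum defined in the algebra for $0<t<1$, the boundary levels $t\in\{0,1\}$ being handled directly and by negation) and using that $f$ preserves $E_0$, this reduces to the single assertion that $f$ fixes the strong cyclic system, namely $f(c_t)=f(t,0)=(t,0)=c_t'$. For the levels $t=i/n$ this is forced: $c_{1/n}=(1/n,0)$ is a cyclic element of order $n$, its image is again such an element in $\mathcal E_\mathbb H(G_1)$, and by Lemma \ref{le:4.2} (using that $G_1$, hence $\mathbb H\lex G_1$, is torsion-free) such an element is unique, so $f(c_{1/n})=(1/n,0)$ and $f(c_{i/n})=i\,f(c_{1/n})=(i/n,0)$. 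I expect the \emph{main obstacle} to be precisely the remaining levels when $\mathbb H$ is dense of non-cyclic type, where $t$ need not be a multiple of any $1/n$: there I would set $f(t,0)=(t,\gamma(t))$ and exploit the additivity $c_s+c_t=c_{s+t}$ of the strong cyclic property, together with $c_1=1$, to show that $\gamma\colon[0,1]_\mathbb H\to G_1$ is additive with $\gamma(1)=0$, reducing the whole problem to the cocycle-vanishing statement $\gamma\equiv0$. This step is the crux, and controlling it is where the torsion-freeness and the centrality built into the strong cyclic property, the uniqueness of the decomposition from Theorem \ref{th:3.2}, and the rigidity of the $\mathbb H$-valued state must be brought to bear; once $\gamma\equiv0$ is secured, $f=\mathcal E_\mathbb H(h)$ follows, fullness holds, and Mac Lane's criterion yields the claimed equivalence.
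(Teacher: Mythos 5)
Your skeleton is the same as the paper's: the paper proves Theorem \ref{th:7.5} by a one-line appeal to the method of Theorem \ref{th:5.8}, i.e.\ Mac Lane's criterion with essential surjectivity supplied by Theorem \ref{th:7.4} and full faithfulness supplied by the analogue of Proposition \ref{pr:5.5}. Your handling of essential surjectivity and faithfulness matches this, and you are right that the only real issue is fullness and that for $t=i/n$ the identity $f(c_t)=c_t'$ is forced by Lemma \ref{le:4.2} via uniqueness of strong cyclic elements. The problem is the step you defer: the ``cocycle-vanishing statement $\gamma\equiv 0$'' for dense non-cyclic $\mathbb H$ is not merely the crux, it is false in general, so the reduction you propose cannot be completed. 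Take $\mathbb H=\mathbb Z+\alpha\mathbb Z$ with $\alpha\in(0,1)$ irrational, $G=G_1=\mathbb Z$, and let $\delta:\mathbb H\to\mathbb Z$ be the additive map $\delta(m+n\alpha)=n$, so $\delta(1)=0$. Then $\psi(t,g):=(t,g+\delta(t))$ is an order-automorphism of $\mathbb H\lex\mathbb Z$ fixing $(1,0)$, hence restricts to a pseudo effect algebra automorphism $f$ of $\Gamma(\mathbb H\lex\mathbb Z,(1,0))$ with $f(\alpha,0)=(\alpha,1)$; this $f$ is not of the form $\mathcal E_\mathbb H(h)$, and your $\gamma=\delta$ is a nonzero cocycle.

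Such twists exist whenever $\mathrm{Hom}(\mathbb H/\mathbb Z,\,C(G_1))\neq 0$. That group vanishes for $\mathbb H=\frac{1}{n}\mathbb Z$ and $\mathbb H=\mathbb Q$, because there $\mathbb H/\mathbb Z$ is torsion while $C(G_1)$ is torsion-free --- which is exactly why the $\mathbb Q$-case (Theorem \ref{th:5.8}) goes through and why no amount of torsion-freeness, centrality, or uniqueness of the $\mathbb H$-valued state will rescue the general case: those tools only pin down $f$ on the slices indexed by rational $t$. To make your argument (and the statement) work you must either restrict the morphisms of $\mathcal{SPPEA}_\mathbb H$ to homomorphisms preserving the distinguished strong cyclic systems, i.e.\ build $f(c_t)=c_t'$ into the definition of the category, or impose a hypothesis on $\mathbb H$ and $G_1$ forcing $\mathrm{Hom}(\mathbb H/\mathbb Z, C(G_1))=0$. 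The paper's own proof silently skips this point, so your proposal, precisely by being more explicit than the source, exposes a gap that is in the theorem itself and not only in your write-up.
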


Consequently, if $\mathbb H$ is cyclic or the group of rational numbers or an arbitrary subgroup of reals containing the number 1, all the studied categories of $\mathbb H$-strong perfect pseudo effect algebras are mutually categorically equivalent, and categorical equivalent to the category $\mathcal G.$

\end{document}